\def\doi{9(1:13)2013}
\newcommand{\forappendix}[2]{#2}
\renewcommand{\forappendix}[2]{#1}
\renewcommand{\phi}{\varphi}
\newcommand\hide[1]{}
\newcommand{\zug}[1]{{\langle #1  \rangle}}
\newcommand{\set}[1]{\text{$\{#1\}$}} 
\newcommand{\rzug}[1]{{\scriptstyle\langle} #1  {\scriptstyle\rangle}}
\newcommand\kahler{K\"ahler\xspace}
\newcommand\konig{Ko\"nig\xspace}
\newcommand\buchi{B\"uchi\xspace}
\newcommand{\N}{\mbox{I$\!$N}}
\newcommand\A{{\mathcal A}}
\newcommand{\B}{{\mathcal B}}
\newcommand\splus{\ensuremath{+\!}}
\newcommand\mcS{\mathcal{S}}
\newcommand{\DAG}{\textsc{dag}\xspace}
\newcommand{\DAGs}{\textsc{dag}s\xspace}
\def\squarebox#1{\hbox to #1{\hfill\vbox to #1{\vfill}}}
\newcommand{\abs}[1]{\lvert#1\rvert}
\newcommand{\standout}[1]{\medskip \noindent {\bf #1:}}
\newcounter{examplectr}
\newcommand{\G}{\ensuremath{G}\xspace}
\newcommand{\Gprime}{\ensuremath{G'}\xspace}
\newcommand{\Gdubprime}{\ensuremath{G''}\xspace}
\newcommand{\LR}{\ensuremath{\mathcal{R}}}
\newcommand{\TLRM}{\ensuremath{\mathcal{R}^m_T}}
\newcommand{\PSQ}{\ensuremath{{\bf Q}}}
\newtheorem{theorem}{Theorem}[section]
\newtheorem{corollary}[theorem]{Corollary}
\begin{document}
\title[Unifying B\"uchi Complementation Constructions]{Unifying B\"uchi Complementation Constructions}

\author[S.~Fogarty]{Seth Fogarty\rsuper a}
\address{{\lsuper a}Computer Science Department, Trinity University, San Antonio, TX}
\email{sfogarty@trinity.com}
\thanks{{\lsuper a}The authors are grateful to Yoad Lustig for his extensive help in analyzing the original
slice-based construction.  Work supported in part by NSF grants CNS-1049862 and CCF-1139011, by NSF
Expeditions in Computing project "ExCAPE: Expeditions in Computer Augmented Program Engineering," by
BSF grant 9800096, and by gift from Intel. Work by Seth Fogarty done while at Rice University.}
\author[O.~Kupferman]{Orna Kupferman\rsuper b}
\address{{\lsuper b}School of Computer Science and Engineering, Hebrew University of Jerusalem, Israel}
\email{orna@cs.huji.ac.il}
\author[T.~Wilke]{Thomas Wilke\rsuper c}
\address{{\lsuper c}Institut f\"ur Informatik, Christian-Albrechts-Universit\"at zu Kiel, Kiel, Germany}
\email{wilke@ti.informatik.uni-kiel.de}
\author[M.~Y.~Vardi]{Moshe Y.~Vardi\rsuper d}
\address{{\lsuper d}Department of Computer Science, Rice University, Houston, TX}
\email{vardi@cs.rice.edu}

\keywords{Automata Theory, Omega Automata, B\"uchi Automata, B\"uchi Complementation, Model Checking}
\ACMCCS{[{\bf Theory of computation}]:  Formal languages and automata
  theory---Automata over infinite objects; [{\bf Software and its
      engineering}]: Software organization and properties---Software functional properties---Formal methods---Model checking} 
\subjclass{F.1.3, F.4.1}

\begin{abstract}
Complementation of \buchi automata, required for checking automata containment, is of major
theoretical and practical interest in formal verification. We consider two recent approaches to
complementation. The first is the {\em rank-based approach} of Kupferman and Vardi, which operates
over a \DAG that embodies all runs of the automaton. This approach is based on the observation that
the vertices of this \DAG can be ranked in a certain way, termed an {\em odd ranking}, iff all 
runs are rejecting. The second is the {\em slice-based approach} of \kahler and Wilke. This approach
tracks levels of ``split trees'' -- run trees in which only essential information
about the history of each run is maintained. While the slice-based construction is conceptually
simple, the complementing automata it generates are exponentially larger than those of the recent
rank-based construction of Schewe, and it suffers from the difficulty of symbolically encoding
levels of split trees.

In this work we reformulate the slice-based approach in terms of run \DAGs and preorders over
states. In doing so, we begin to draw parallels between the rank-based and slice-based approaches.
Through deeper analysis of the slice-based approach, we strongly restrict the nondeterminism it
generates. We are then able to employ the slice-based approach to provide a new odd ranking, called
a {\em retrospective ranking}, that is different from the one provided by Kupferman and Vardi.  This
new ranking allows us to construct a deterministic-in-the-limit rank-based automaton with a highly
restricted transition function.  Further, by phrasing the slice-based approach in terms of ranks,
our approach affords a simple symbolic encoding and achieves the tight bound of Schewe's
construction.
\end{abstract}

\maketitle

\section{Introduction}
The complementation problem for nondeterministic automata is central to the automata-theoretic
approach to formal verification \cite{Var07a}.  To test that the language of an automaton $\A$ is
contained in the language of a second automaton $\B$, check that the intersection of $\A$ with an
automaton that complements ${\B}$ is empty.  In model checking, the automaton ${\A}$ corresponds to
the system, and the automaton ${\B}$ corresponds to a property \cite{VW86b}.  While it is easy to
complement properties given as temporal logic formulas, complementation of properties given as
automata is not simple. Indeed, a word $w$ is rejected by a nondeterministic automaton $\A$ if
\emph{all} runs of $\A$ on $w$ reject the word. Thus, the complementary automaton has to consider
all possible runs, and complementation has the flavor of determinization.  Representing liveness,
fairness, or termination properties requires automata that recognize languages of infinite words.
Most commonly considered are nondeterministic \buchi automata, in which some of the states are
designated as accepting, and a run is accepting if it visits accepting states infinitely often
\cite{Buc62}.  For automata on finite words, determinization, and hence also complementation, is
done via the subset construction \cite{RS59}.  For B\"uchi automata the subset construction is not
sufficient, and optimal complementation constructions are more complicated \cite{Var07b}.

Efforts to develop simple complementation constructions for \buchi automata started early
in the 60s, motivated by decision problems of second-order logics.  \buchi suggested a
complementation construction for nondeterministic \buchi automata that involved a 
Ramsey-based combinatorial argument and a doubly-exponential blow-up in the state space
\cite{Buc62}.  Thus, complementing an automaton with $n$ states resulted in an automaton with
$2^{2^{O(n)}}$ states.  In \cite{SVW87}, Sistla et al. suggested an improved implementation of
B\"uchi's construction, with only $2^{O(n^2)}$ states, which is still not optimal.  Only
in \cite{Saf88} Safra introduced a determinization construction, based on {\em Safra trees}, which also
enabled a $2^{O(n \log n)}$ complementation construction, matching a lower bound described by Michel
\cite{Mic88}.  A careful analysis of the exact blow-up in Safra's and Michel's bounds, however,
reveals an exponential gap in the constants hiding in the $O()$ notations: while the upper bound on
the number of states in the complementary automaton constructed by Safra is $n^{2n}$, Michel's lower
bound involves only an $n!$ blow up, which is roughly $(n/e)^n$. In addition, Safra's construction
has been resistant to optimal implementations \cite{ATW06,THB95}, which has to do with the complicated
combinatorial structure of its states and transitions, which can not be encoded symbolically.

The use of complementation in practice has led to a resurgent interest in the exact blow-up that
complementation involves and the feasibility of a symbolic complementation construction.  In 2001,
Kupferman and Vardi suggested a new analysis of runs of \buchi automata that led to a simpler
complementation construction \cite{KV01c}. In this analysis, one considers a \DAG that
embodies all the runs of an automaton $\A$ on a given word $w$. It is shown in \cite{KV01c} that the
nodes of this \DAG can be mapped to ranks, where the rank of a node essentially indicates the
progress made towards a suffix of the run with no accepting states.
Further, all the runs of $\A$ on $w$ are rejecting iff there is a {\em bounded odd ranking\/} of the
\DAG: one in which the maximal rank is bounded, ranks along paths do not increase, paths become
trapped in odd ranks, and nodes associated with accepting states are not assigned an odd rank.
Consequently, complementation can circumvent Safra's determinization construction along with the
complicated data structure of Safra trees, and can instead be based on an automaton that guesses an
odd ranking. The state space of such an automaton is based on annotating states in subsets with 
the guessed ranks. Beyond the fact that the {\em rank-based construction\/} can be implemented
symbolically \cite{TV07}, it gave rise to a sequence of works improving both the blow-up it involves
and its implementation in practice.  The most notable improvements are the introduction of tight
rankings \cite{FKV06} and Schewe's improved cut-point construction \cite{Sch09}. These improvements tightened
the $(6n)^n$ upper bound of \cite{KV01c} to $(0.76n)^n$.
Together with recent work on a tighter lower bound \cite{Yan06}, the gap between the upper and lower
bound is now a quadratic term.  Addressing practical concerns, Doyen and Raskin have introduced a
useful subsumption technique for the rank-based approach \cite{DR09}.

In an effort to unify \buchi complementation with other operations on automata,  \kahler and Wilke
introduced yet another analysis of runs of nondeterministic B\"uchi automata \cite{KW08}. The
analysis is based on {\em reduced split trees}, which are related to the M\"uller-Schupp trees used
for determinization \cite{MS95}.  A reduced split tree is a binary tree whose nodes are sets of
states as follows: the root is the set of initial states; and given a node associated with a set of
states, its left child is the set of successors that are accepting, while the right child is the set
of successors that are not accepting. In addition, each state of the automaton appears at most once
in each level of the binary tree: if it would appear in more than one set, it occurs only in the
leftmost one.  The construction that follows from the analysis, termed the {\em slice-based
construction\/},  is simpler than Safra's determinization, but its implementation suffers from
similar  difficulties: the need to refer to leftmost children requires encoding of a preorder, and
working with reduced split trees makes the transition relation between states awkward.  Thus,
as has been the case with Safra's construction, it is not clear how the slice-based approach can be
implemented symbolically.  This is unfortunate, as the slice-based approach does offer a very clean
and intuitive analysis, suggesting that a better construction is hidden in it.

In this paper we reveal such a hidden, elegant, construction, and we do so by unifying the
rank-based and the slice-based approaches. Before we turn to describe our construction, let us point
to a key conceptual difference between the two approaches. This difference has made their relation
of special interest and challenge.  In the rank-based approach, the ranks assigned to a node bound
the visits to accepting states yet to come. Thus, the ranks refer to the {\em future\/} of the run,
making the rank-based approach inherently nondeterministic. In contrast, in the slice-based
approach, the partition of the states of the automaton to the different sets in the tree is based on
previous visits to accepting states. Thus, the partition refers to the {\em past\/} of the run, and
does not depend on its future.

In order to draw parallels between the two approaches, we present a formulation of the slice-based
approach \hide{of \kahler and Wilke} in terms of run \DAGs. A careful analysis of the slice-based
approach then enables us to reduce the nondeterminism in the construction.  We can then employ this
improved slice-based approach in order to define a particular odd ranking of rejecting run \DAGs,
called a {\em retrospective ranking}.  In addition to revealing the theoretical connections between
the two seemingly different approaches, the new ranks lead to a complementation construction with a
transition function that is smaller and deterministic in the limit: every accepting run of the
automaton is eventually deterministic. This presents the first 
deterministic-in-the-limit complementation construction that does not use determinization.  Determinism in the limit
is central to verification in probabilistic settings \cite{CY95} and has proven useful in
experimental results \cite{ST03}.  Phrasing slice-based complementation as an odd ranking also
immediately affords us the improved cut-point of Schewe, the subsumption operation of Doyen and
Raskin, and provides an easy symbolic encoding.

\section{Preliminaries}\label{Sect:Ranks}\label{Sect:Tight}

%\subsection{\buchi Automata}

A \emph{nondeterministic \buchi automaton on infinite words} (NBW for short) is a tuple $\A=\zug{\Sigma, Q,
Q^{in}, \rho, F}$, where $\Sigma$ is a finite alphabet, $Q$ a finite set of states, $Q^{in}
\subseteq Q$ a set of initial states, $F \subseteq Q$ a set of accepting states, and $\rho \colon Q
\times \Sigma \to 2^Q$ a nondeterministic transition relation. A state $q \in Q$ is {\em
deterministic} if for every $\sigma \in \Sigma$ it holds that $\abs{\rho(q,\sigma)} \leq 1$. We lift
the function $\rho$ to sets $R$ of states in the usual fashion: $\rho(R,\sigma) = \bigcup_{q \in R}
\rho(q,\sigma)$. Further, we lift $\rho$ to words word $\sigma_0\cdots \sigma_i$ by defining
$\rho(R,\sigma_0\cdots\sigma_i)$ = $\rho(\rho(R,\sigma_0),\sigma_1\cdots\sigma_i)$. For
completeness, let $\rho(R,\epsilon)=R$.

\cbstart An infinite {\em run} of an NBW $\A$ on a word $w=\sigma_0\sigma_1\cdots \in \Sigma^\omega$
is an infinite sequence of states $p_0,p_1,\ldots\in Q^\omega$ such that $p_0 \in Q^{in}$ and, for
every $i \geq 0$, we have $p_{i+1} \in \rho(p_i, \sigma_i)$. Correspondingly, a \emph{finite run} is a
finite sequence of states $p_0,\ldots,p_n$ such that $p_0 \in Q^{in}$ and, for every $0 \leq i <
n$, we have $p_{i+1} \in \rho(p_i, \sigma_i)$. When unspecified, a run refers to an infinite run. 
\cbend  A run is \emph{accepting} iff $p_i \in F$ for infinitely many $i \in \N$.  A word $w \in
\Sigma^\omega$ is accepted by $\A$ if there is an accepting run of $\A$ on $w$.  The words accepted
by $\A$ form the {\em language} of $\A$, denoted by $L(\A)$. The complement of $L(\A)$, denoted
$\overline{L(\A)}$, is $\Sigma^\omega \setminus L(\A)$.  We say an automaton is \emph{deterministic
in the limit} if every state reachable from an accepting state is deterministic. Converting $\A$ to
an equivalent deterministic in the limit automaton involves an exponential blowup \cite{CY95,Saf88}.
One can simultaneously complement and determinize in the limit, via co-determinization into a parity
automaton \cite{Pit06}, and then converting that parity automaton to a deterministic-in-the-limit
\buchi automaton, with a cost of $(n^2/e)^n$. 

\standout{Run \DAGs} 
Consider an NBW $\A$ and an infinite word $w = \sigma_0\sigma_1\cdots$. The runs of $\A$ on $w$ can
be arranged in an infinite \DAG (directed acyclic graph) $\G=\zug{V,E}$, where 
\begin{iteMize}{$\bullet$}
\cbstart
\item $V \subseteq Q \times \N$ is such that $\rzug{q,i} \in V$ iff some finite or infinite run $p$ of $\A$ on $w$ has
$p_i=q$.
\cbend
\item $E \subseteq {\displaystyle{\bigcup}_{i \geq 0}}~(Q \!\times\! \set{i}) \times (Q \!\times\!\set{i\splus 1})$ is
s.t.  $E(\rzug{q,i},\rzug{q',i \splus 1})$ iff $\rzug{q,i} \in V$ and $q' \in \rho(q, \sigma_i)$.
\end{iteMize}
The \DAG $\G$, called the \emph{run \textsc{dag} of $\A$ on $w$}, embodies all possible runs
of $\A$ on $w$. We are primarily concerned with \emph{initial paths} in $\G$: paths that start in
$Q^{in} \times \set{0}$.  Define a node $\rzug{q,i}$ to be an $F$-node when $q \in F$, and a path in
$\G$ to be \emph{accepting} when it is both initial and contains infinitely many $F$-nodes.  An
accepting path in $\G$ corresponds to an accepting run of $\A$ on $w$. When $\G$ contains an accepting path, 
call $\G$ an accepting run \DAG, otherwise call it a rejecting run \DAG.  We often consider
\DAGs $H$ that are subgraphs of $\G$. A node $u$ is a \emph{descendant} of $v$ in $H$ when $u$ is
reachable from $v$ in $H$. A node $v$ is \emph{finite} in $H$ if it has only finitely many
descendants in $H$.  A node $v$ is \emph{$F$-free} in $H$ if it is not an $F$-node, and has no
descendants in $H$ that are $F$-nodes. We say a node \emph{splits} when it has at least two
children, and conversely that two nodes \emph{join} when they share a common child.

\begin{exa}\label{ExampleOne}
In Figure~\ref{Fig:Automaton} we describe an NBW $\A$ that accepts
words with finitely many letters
$b$. On the right is a prefix of the rejecting run \DAG of $\A$ on $w=babaabaaabaaaa\cdots$\hide{,
the word with an increasing number of $a$'s between successive $b$'s}. 
\end{exa}

{
\begin{figure}
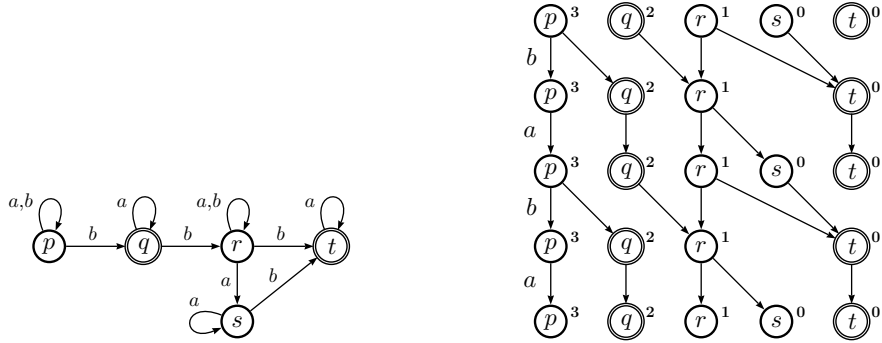

\begin{centering}
\subfloat{
\ChgEdgeLabelScale{0.8}
{
\begin{postscript}
\SmallPicture\VCDraw{\begin{VCPicture}{(0,-3)(10,5)}
% states
\State[p]{(0,0)}{p} \FinalState[q]{(2.5,0)}{q} \State[r]{(5,0)}{r}
\State[s]{(5,-2)}{s} \FinalState[t]{(7.5,0)}{t}
% initial--final
% transitions 
\EdgeL{p}{q}{b}
\EdgeL{q}{r}{b}
\EdgeL{r}{t}{b}
\EdgeR{r}{s}{a}
\EdgeL{s}{t}{b}

\CLoopN{p}{\tiny{a,\!b}}
\CLoopN{q}{a}
\CLoopN{r}{a,\!b}
\CLoopW{s}{a}
\CLoopN{t}{a}

\end{VCPicture}}
\end{postscript}
}

\quad}\quad\quad
\subfloat{
{
\begin{postscript}
\SmallPicture\VCDraw{\begin{VCPicture}{(0,-12)(8,2)}
% states
\def\level{0}\def\offset{0}
\State[p]{(0,\offset)}{p\level} \FinalState[q]{(2,\offset)}{q\level} \State[r]{(4,\offset)}{r\level}
\State[s]{(6,\offset)}{s\level} \FinalState[t]{(8,\offset)}{t\level}
%newdefs
\def\level{1} \def\prevlevel{0} \def\offset{-2}
\State[p]{(0,\offset)}{p\level} \FinalState[q]{(2,\offset)}{q\level} \State[r]{(4,\offset)}{r\level}
\FinalState[t]{(8,\offset)}{t\level}
%1 transitions
\EdgeR{p\prevlevel}{p\level}{b~~~~~~~} \EdgeL{r\prevlevel}{r\level}{} \EdgeL{p\prevlevel}{q\level}{}
\EdgeL{q\prevlevel}{r\level}{} \EdgeL{r\prevlevel}{t\level}{} \EdgeL{s\prevlevel}{t\level}{}
%level 2
\def\prevlevel{1} \def\level{2}\def\offset{-4}
\State[p]{(0,\offset)}{p\level} \FinalState[q]{(2,\offset)}{q\level} \State[r]{(4,\offset)}{r\level}
\State[s]{(6,\offset)}{s\level} \FinalState[t]{(8,\offset)}{t\level}
%0 transitions
\EdgeR{p\prevlevel}{p\level}{a~~~~~~~} \EdgeL{r\prevlevel}{r\level}{} \EdgeL{q\prevlevel}{q\level}{}
\EdgeL{t\prevlevel}{t\level}{} \EdgeL{r\prevlevel}{s\level}{} 
%level 3
\def\prevlevel{2} \def\level{3}\def\offset{-6}
\State[p]{(0,\offset)}{p\level} \FinalState[q]{(2,\offset)}{q\level} \State[r]{(4,\offset)}{r\level}
\FinalState[t]{(8,\offset)}{t\level}
%1 transitions
\EdgeR{p\prevlevel}{p\level}{b~~~~~~~} \EdgeL{r\prevlevel}{r\level}{} \EdgeL{p\prevlevel}{q\level}{}
\EdgeL{q\prevlevel}{r\level}{} \EdgeL{r\prevlevel}{t\level}{} \EdgeL{s\prevlevel}{t\level}{}
%level 4
\def\prevlevel{3} \def\level{4}\def\offset{-8}
\State[p]{(0,\offset)}{p\level} \FinalState[q]{(2,\offset)}{q\level} \State[r]{(4,\offset)}{r\level}
\State[s]{(6,\offset)}{s\level} \FinalState[t]{(8,\offset)}{t\level}
%0 transitions
\EdgeR{p\prevlevel}{p\level}{a~~~~~~~} \EdgeL{r\prevlevel}{r\level}{} \EdgeL{q\prevlevel}{q\level}{}
\EdgeL{t\prevlevel}{t\level}{} \EdgeL{r\prevlevel}{s\level}{} 
\hide{
%level 5
\def\prevlevel{4} \def\level{5}\def\offset{-10}
\State[p]{(0,\offset)}{p\level} \FinalState[q]{(2,\offset)}{q\level} \State[r]{(4,\offset)}{r\level}
\State[s]{(6,\offset)}{s\level} \FinalState[t]{(8,\offset)}{t\level}
%0 transitions
\EdgeR{p\prevlevel}{p\level}{a~~~~~~~} \EdgeL{r\prevlevel}{r\level}{} \EdgeL{q\prevlevel}{q\level}{}
\EdgeL{s\prevlevel}{s\level}{} \EdgeL{t\prevlevel}{t\level}{} \EdgeL{r\prevlevel}{s\level}{} 
%level 6
\def\prevlevel{5} \def\level{6}\def\offset{-12}
\State[p]{(0,\offset)}{p\level} \FinalState[q]{(2,\offset)}{q\level} \State[r]{(4,\offset)}{r\level}
\FinalState[t]{(8,\offset)}{t\level}
%1 transitions
\EdgeR{p\prevlevel}{p\level}{b~~~~~~~} \EdgeL{r\prevlevel}{r\level}{} \EdgeL{p\prevlevel}{q\level}{}
\EdgeL{q\prevlevel}{r\level}{} \EdgeL{r\prevlevel}{t\level}{} \EdgeL{s\prevlevel}{t\level}{}
}
\begin{boldmath}
\multirput(0.65,0.25)(0.0,-2.0){5}{\large$3$}
\multirput(2.65,0.25)(0.0,-2.0){5}{\large$2$}
\multirput(4.65,0.25)(0.0,-2.0){5}{\large$1$}
\multirput(6.65,0.25)(0.0,-4.0){2}{\large$0$}
\multirput(6.65,-7.75)(0.0,-2.0){1}{\large$0$}
\multirput(8.65,0.25)(0.0,-2.0){5}{\large$0$}

\end{boldmath}

\end{VCPicture}}
\end{postscript}
}
}\quad
\vspace{-0.5in}
\caption{Left, the NBW $\A$, in which all states are initial.
Right, the rejecting run \DAG $\G$ of $\A$ on $w=babaabaaabaaaa\cdots$.
Nodes are superscripted with the prospective ranks of Section~\ref{Sect:Ranks}.
}\label{Fig:Automaton}
\end{centering}
\end{figure}
}

%\subsection{Rank-Based Complementation}\label{Sect:Ranks}

\standout{Rank-Based Complementation} 
If an NBW $\A$ does not accept a word $w$, then every run of $\A$ on $w$ must eventually cease
visiting accepting states. The notion of \emph{ranking}s, foreshadowed in \cite{Kla90} and
introduced in \cite{KV01c}, uses natural numbers to track the progress of each run in the \DAG
towards this point. A ranking for a \DAG $\G=\zug{V,E}$ is a mapping from $V$ to $\N$,
in which no $F$-node is given an odd rank, and in which the ranks along all paths do not increase.
Formally, a ranking is a function ${\bf r} \colon V \to \N$ such that if $u \in V$ is an $F$-node
then ${\bf r}(u)$ is even; and for every $u,v \in V$, if $(u,v) \in E$ then ${\bf r}(u) \geq {\bf
r}(v)$. Since each path starts at a finite rank and ranks cannot increase, every path eventually
becomes trapped in a rank.  A ranking is called an \emph{odd ranking} if every path becomes trapped
in an odd rank.  Since $F$-nodes cannot have odd ranks, if there exists an odd ranking ${\bf r}$, then
every path in $\G$ must stop visiting accepting nodes when it becomes trapped in its final, odd,
rank, and $\G$ must be a rejecting \DAG.

\begin{lem}\label{Odd_Ranking_Rejecting}{\rm\cite{KV01c}}
If a run \DAG $\G$ has an odd ranking, then $\G$ is rejecting.
\end{lem}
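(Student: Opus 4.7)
The plan is to argue by contradiction: assume $\G$ has an odd ranking $\mathbf{r}$ yet is accepting, and derive a conflict between the two defining properties of an odd ranking (trapping in odd ranks versus forbidding odd ranks at $F$-nodes).

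First I would fix an odd ranking $\mathbf{r} \colon V \to \N$ and suppose, for contradiction, that $\G$ contains an accepting path $\pi = v_0, v_1, v_2, \ldots$; by definition, $\pi$ is initial and visits $F$-nodes infinitely often. Along $\pi$, the edge condition of a ranking gives $\mathbf{r}(v_0) \geq \mathbf{r}(v_1) \geq \mathbf{r}(v_2) \geq \cdots$, a non-increasing sequence in $\N$. The key elementary observation is that any non-increasing sequence of natural numbers must stabilize: there exists an index $i_0$ and a value $k \in \N$ such that $\mathbf{r}(v_i) = k$ for all $i \geq i_0$. Hence $\pi$ becomes trapped in rank $k$.

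Next I would invoke the two hypotheses on $\mathbf{r}$ in sequence. Because $\mathbf{r}$ is an odd ranking, every path becomes trapped in an odd rank, so $k$ must be odd. On the other hand, because $\pi$ contains infinitely many $F$-nodes, there exists some index $j \geq i_0$ with $v_j$ an $F$-node, and then $\mathbf{r}(v_j) = k$ is odd. This contradicts the condition that no $F$-node receives an odd rank. Therefore no accepting path exists and $\G$ is rejecting.

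The argument is essentially a two-line well-foundedness observation followed by direct application of the definitions, so there is no real obstacle; the only subtlety worth flagging is being careful that ``trapped in an odd rank'' is applied to the specific path $\pi$, which is legitimate since the odd-ranking condition is stipulated for \emph{every} path in $\G$, including $\pi$ if it exists.
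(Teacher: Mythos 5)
Your proof is correct and follows essentially the same argument the paper gives (in the text immediately preceding the lemma rather than in a formal proof environment): ranks along a path are non-increasing and hence stabilize, the odd-ranking condition forces the stabilized rank to be odd, and the prohibition on odd ranks at $F$-nodes then rules out infinitely many $F$-nodes on any path. Your contradiction framing is just the contrapositive phrasing of the same reasoning.
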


A ranking is \emph{bounded by $l$} when its range is $\set{0,...,l}$, and an NBW $\A$ is of rank $l$
when for every $w \not\in L(\A)$, the rejecting \DAG $\G$ has an odd ranking bounded by $l$. If we
can prove that an NBW $\A$ is of rank $l$, we can use the notion of odd rankings to construct a
complementary automaton. This complementary NBW, denoted $\A^l_R$, tracks the levels of the run \DAG
and attempts to guess an odd ranking bounded by $l$.  An \emph{$l$-bounded level ranking} for an NBW
$\A$ is a function $f \colon Q \to \set{0,\ldots,l,\bot}$, such that if $q \in F$ then $f(q)$ is
even or $\bot$.  Let ${\LR}^l$ be the set of all $l$-bounded level rankings.  The state space of
$\A^l_R$ is based on the set of $l$-bounded level rankings for $\A$.  To define transitions of
$\A^l_R$, we need the following notion: for $\sigma \in \Sigma$ and $f, f' \in {\LR}^l$, say that
\emph{$f'$ follows $f$ under $\sigma$} when for every $q \in Q$ and $q' \in \rho(q,\sigma)$, if $f(q) \neq
\bot$ then $f'(q') \neq \bot$ and $f'(q') \leq f(q)$: i.e. no transition between $f$ and $f'$ on
$\sigma$ increases in rank.  Finally, to ensure that the guessed ranking is an odd ranking, we
employ the cut-point construction of Miyano and Hayashi, which maintains an obligation set of nodes
along paths obliged to visit an odd rank \cite{MH84}. For a level ranking $f$, let
$even(f)=\set{q\mid f(q)\text{ is even}}$ and $odd(f)=\set{q\mid f(q)\text{ is odd}}$.

\begin{defi}\label{KVDef}
For an NBW  $\A = \zug{\Sigma, Q, Q^{in}, \rho, F}$ and $l \in \N$, define
$\A^l_R$ to be the NBW $\zug{\Sigma, {\LR}^l \times 2^Q, \rzug{f^{in},\emptyset},
\rho_R, {\LR}^l \times \set{\emptyset}}$, where
\smallskip
\begin{iteMize}{$\bullet$}
\item $f^{in}(q)=l$ for each $q \in Q^{in}$,\ $\bot$ otherwise.
\smallskip
\item 
$\rho_R(\rzug{f,O},\sigma) = 
\begin{cases}
\{\rzug{f',~\rho(O,\sigma) \setminus odd(f')} \mid 
      f'\text{ follows $f$ under $\sigma$}\} & \text{if }O \neq \emptyset,\\
\{\rzug{f',~even(f')} \mid 
      f'\text{ follows $f$ under $\sigma$}\} & \text{if }O=\emptyset.\\
\end{cases}$
\end{iteMize}
\end{defi}

%\begin{lem}{\rm\cite{KV01c}}\label{KV_Subset}
%For every $k \in N$, it holds that $L(\A^k_R) \subseteq \overline{L(\A)}$.
%\end{lem}

By \cite{KV01c}, for every $l \in \N$, the NBW $\A_R^l$ accepts only words rejected by $\A$ ---
exactly all words for which there exists an odd ranking with maximal rank $l$. In addition,
\cite{KV01c} proves that for every rejecting run \DAG there exists a bounded odd ranking.  Below we
sketch the derivation of this ranking. Given a rejecting run \DAG $\G$, we inductively define a
sequence of subgraphs by eliminating nodes that cannot be part of accepting runs.  At odd steps we
remove finite nodes, while in even steps we remove nodes that are $F$-free.  Formally, define a
sequence of subgraphs as follows:

\smallskip
\begin{iteMize}{$\bullet$}
\item $\G_0=\G$.
\smallskip
\item $\G_{2i+1}=\G_{2i} \setminus \set{v\mid v\text { is finite in }\G_{2i}}$.
\smallskip
\item $\G_{2i+2}=\G_{2i+1} \setminus \set{v\mid v\text { is $F$-free in }\G_{2i+1}}$.
\end{iteMize}
\medskip

It is shown in \cite{GKSV03,KV01c} that only $m=2\abs{Q\setminus F}$ steps are necessary to remove all
nodes from a rejecting run \DAG: $\G_m$ is empty. Nodes can be ranked by the last graph in which
they appear: for every node $u \in \G$, the \emph{prospective rank} of $u$ is the index $i$ such
that $u \in \G_i$ but $u \not\in \G_{i+1}$.  The \emph{prospective ranking} of $\G$ assigns every
node its prospective rank.  Paths through $\G$ cannot increase in prospective rank, and no $F$-node
can be given an odd rank: thus the prospective ranking abides by the requirements for rankings. We
call these rankings prospective because the rank of a node depends solely on its descendants.  By
\cite{KV01c}, if $\G$ is a rejecting run \DAG, then the prospective ranking of $\G$ is an odd
ranking bounded by $m$. By the above, we thus have the following. 

\begin{theorem}\label{KV_Complement}{\rm \cite{KV01c}}
For every NBW $\A$, it holds that $L(\A^m_R)=\overline{L(\A)}$.
\end{theorem}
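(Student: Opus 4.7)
The plan is to establish the two inclusions separately, each by translating between accepting runs of $\A_R^m$ and odd rankings of the run \DAG $\G$ of $\A$ on $w$.

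For $L(\A^m_R) \subseteq \overline{L(\A)}$, I suppose $\A^m_R$ accepts $w = \sigma_0\sigma_1\cdots$ via an accepting run $\rzug{f_0, O_0}, \rzug{f_1, O_1}, \ldots$ and define a candidate ranking ${\bf r}(\rzug{q,i}) = f_i(q)$ on $\G$. An induction on $i$ using the ``follows'' clause of $\rho_R$ shows that $f_i(q) \neq \bot$ whenever $\rzug{q,i} \in V$, so ${\bf r}$ is well-defined; the same clause gives the non-increasing property along edges of $\G$, and the definition of $\LR^m$ forbids odd ranks on $F$-nodes, so ${\bf r}$ is a valid ranking bounded by $m$. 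To see it is odd, fix any initial path $\pi = \rzug{q_0, 0}, \rzug{q_1, 1}, \ldots$; its ranks stabilize at some value $k$. If $k$ were even, pick $i_0$ after the ranks have stabilized with $O_{i_0} = \emptyset$; then $q_{i_0+1} \in even(f_{i_0+1}) = O_{i_0+1}$, and a short induction shows $q_j \in O_j$ for every $j \geq i_0 + 1$ (each $q_{j+1} \in \rho(q_j, \sigma_j)$ has even rank, so it is not removed by $\setminus odd(f_{j+1})$). Hence $O_j \neq \emptyset$ forever, contradicting acceptance. So $k$ is odd and Lemma~\ref{Odd_Ranking_Rejecting} yields $w \notin L(\A)$.

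For $\overline{L(\A)} \subseteq L(\A^m_R)$, I invoke the prospective ranking ${\bf r}$ from the excerpt: by the cited results of \cite{GKSV03,KV01c}, $\G$ being rejecting implies $\G_m = \emptyset$ and ${\bf r}$ is an odd ranking bounded by $m$. Set $f_i(q) = {\bf r}(\rzug{q,i})$ when $\rzug{q,i} \in V$ and $\bot$ otherwise, so each $f_i \in \LR^m$; the non-increasing property of ${\bf r}$ is precisely ``$f_{i+1}$ follows $f_i$ under $\sigma_i$''. To match the prescribed initial level ranking $f^{in}$, note that ${\bf r}(\rzug{q,0}) \leq m$ for every $q \in Q^{in}$, so $f^{in}(q) = m \geq f_1(q')$ whenever $q' \in \rho(q, \sigma_0)$, and the first transition from $\rzug{f^{in}, \emptyset}$ to $\rzug{f_1, O_1}$ is legal. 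Defining the $O_i$ inductively via $\rho_R$ from $O_0 = \emptyset$ then produces a bona fide run of $\A^m_R$.

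The main obstacle is showing this run is accepting, i.e.\ that $O_i = \emptyset$ infinitely often. Starting from any $i$ with $O_i = \emptyset$, I have $O_{i+1} = even(f_{i+1})$, and unfolding $\rho_R$ yields that $O_{i+j}$ consists exactly of states $q$ at level $i+j$ for which there is a path in $\G$ from some $q_{i+1} \in O_{i+1}$ to $q$ all of whose nodes have even ${\bf r}$-rank. Let $H$ be the subgraph of $\G$ formed by these even-ranked descendants; it is finitely branching with at most $|Q|$ roots. If $H$ were infinite, K\"onig's lemma would yield an infinite path in $H$, all of whose ranks are even; prepending any finite prefix in $\G$ from an initial node to its start would give an infinite initial path of $\G$ trapped in an even rank, contradicting the odd-ranking property of ${\bf r}$. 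Hence $H$ is finite, some $O_{i+j}$ is empty, and recurrence of $O = \emptyset$ follows, witnessing acceptance. This K\"onig-style finite-clearance argument is the delicate step; the remaining translations between level rankings and node rankings are routine.
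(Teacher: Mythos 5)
Your proof is correct and follows essentially the same route as the paper, which simply cites \cite{KV01c} for the two facts you prove in detail: that $\A^m_R$ accepts exactly the words admitting an odd ranking bounded by $m$ (your two translations between accepting runs and rankings, including the K\"onig-lemma clearance of the obligation set), and that a rejecting run \DAG admits such a ranking via the prospective ranking. The only point worth noting is that your direction-two construction correctly starts the run at $\rzug{f^{in},\emptyset}$ rather than at the prospective ranks of level $0$, which is exactly the adjustment needed since $f^{in}$ assigns the maximal rank $m$ to all initial states.
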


\begin{exa}\label{Example2}
%\noindent{\it Example 2.}
In Figure~\ref{Fig:Automaton}, nodes for states $s$ and $t$ are finite in $\G_0$.
With these nodes removed, $r$-nodes are $F$-free in $\G_1$.  Without $r$-nodes, $q$-nodes 
are finite in $\G_2$.  Finally, $p$-nodes are $F$-free in $\G_3$.
\end{exa}
%\smallskip

Karmarkar and Chakraborty have derived both theoretical and practical benefits from exploiting properties
of this prospective ranking: they demonstrated an unambiguous complementary automaton that, for
certain classes of problems, is exponentially smaller than $\A^m_R$ \cite{KC09}.

%\subsection{Tight Rankings}\label{Sect:Tight}

\standout{Tight Rankings} 
For an odd ranking $\bf{r}$ and $l \in \N$, let ${\it max\_rank}({\bf
r},l)$ be the maximum rank that ${\bf r}$ assigns a vertex on level $l$ of the run \DAG.  We say that $\bf{r}$ is
{\em tight}\footnote{This definition of tightness for an odd ranking is weaker that of
\cite{FKV06}, but does not affect the resulting bounds.} if there exists an $i \in \N$ such that, for
every level $l \geq i$, all odd ranks below ${\it max\_rank}({\bf r},l)$ appear on level $l$.  It is
shown in \cite{FKV06} that the retrospective ranking is tight.  This observation suggests two
improvements to $\A^m_R$.  First, we can postpone, in an unbounded manner, the level in which it
starts to guess the level ranking. Until this point, $\A^m_R$ may use sets of states to
deterministically track only the levels of the run \DAG, with no attempt to guess the ranks.
Second, after this point, $\A^m_R$ can restrict attention to {\em tight level rankings} -- ones in
which all the odd ranks below the maximal rank appear.  Formally, say a level ranking $f$ with a
maximum rank $max\_rank = \text{max}\set{f(q)~|~q \in Q,~f(q) \neq \bot}$ is tight when, for every odd $i
\leq max\_rank$, there exists a $q \in Q$ such that $f(q)=i$. Let $\TLRM$ be the subset of $\LR^m$ that
contains only tight level rankings. The size of $\TLRM$ is at most $(0.76n)^n$ \cite{FKV06}.
Including the cost of the cut-point construction, this reduces the state space of $\A^m_R$ to
$(0.96n)^n$.

\section{Analyzing \DAGs With Profiles}\label{Sect:Comp_Profiles}

In this section we present an alternate formulation of the slice-based complementation construction
of \kahler and Wilke \cite{KW08}. Whereas \kahler and Wilke approached the problem through
reduced split trees, we derive the slice-based construction directly from an analysis of
the run \DAG. This analysis proceeds by pruning $\G$ in two steps: the first removes edges, and the
second removes vertices.

\subsection{Profiles}
Consider a run \DAG $\G=\zug{V,E}$. Let the labeling function $\Lambda \colon V \to \set{0,1}$ be
such that $\Lambda(\rzug{q,i}) = 1$ if $q \in F$ and $\Lambda(\rzug{q,i}) = 0$ otherwise. Thus,
$\Lambda$ labels $F$-nodes by $1$ and all other nodes by $0$. The \emph{profile} of a path in $\G$
is the sequence of labels of nodes in the path. The profile of a node is then the lexicographically
maximal profile of all initial paths to that node.  Formally, let $\leq$ be the lexicographic
ordering on $\set{0,1}^* \cup \set{0,1}^\omega$.  The profile of a finite path
$b=v_0,v_1,\ldots,v_n$ in $\G$, written $h_b$, is $\Lambda(v_0)\Lambda(v_1)\cdots \Lambda(v_n)$, and
the profile of an infinite path $b=v_0,v_1,\ldots$ is $h_b=\Lambda(v_0)\Lambda(v_1)\cdots$. Finally,
the profile of a node $v$, written $h_v$, is the lexicographically maximal element of $\set{h_b\mid
b \text{ is an initial path to }v}$. The lexicographic order of profiles induces a preorder over
nodes.

We define the sequence of preorders $\preceq_i$ over the nodes on each level of the run \DAG as
follows.  For every two nodes $u$ and $v$ on a level $i$, we have that $u \prec_i v$ if
$h_u < h_v$, and $u \approx_i v$ if $h_u = h_v$. For convenience, we conflate nodes on the $i$th level of
the run \DAG with their states when employing this preorder, and say $q \preceq_i r$ when $\zug{q,i}
\preceq_i \zug{r,i}$. Note that $\approx_i$ is an equivalence relation. Since the final element of a
node's profile is $1$ iff the node is an $F$-node, all nodes in an equivalence class must agree
on membership in $F$. We call an equivalence class an $F$-class when all its members are $F$-nodes,
and a non-$F$-class when none of its members is an $F$-node.  We now use profiles in order to remove
from $\G$ edges that are not on lexicographically maximal paths.  Let $\Gprime$ be the subgraph of
$\G$ obtained by removing all edges $\rzug{u,v}$ for which there is another edge $\rzug{u',v}$ such
that $u \prec_{\abs{u}} u'$.  Formally, $\Gprime=\zug{V,E'}$ where $E' = E \setminus
\set{\rzug{u,v}\mid \text{there exists }u' \in V \text{ such that }\rzug{u',v} \in E \text{ and } u
\prec_{\abs{u}} u'}$.

\begin{lem}\label{Gprime_Captures_Profiles}
For every two nodes $u$ and $v$, if $(u,v) \in E'$, then $h_{v} \in \set{h_u0, h_u1}$.
\end{lem}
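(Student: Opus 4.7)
The plan is to establish a simple recurrence for profiles and then observe that the edges of $\Gprime$ are exactly those originating from profile-maximal predecessors. Concretely, I would first prove the intermediate claim that for any non-initial node $v$ of $\G$,
\[
h_v \;=\; \bigl(\max_{u'\in\mathrm{pred}_\G(v)} h_{u'}\bigr)\cdot \Lambda(v),
\]
where $\mathrm{pred}_\G(v)$ denotes the predecessors of $v$ in the original run \DAG $\G$ (not $\Gprime$). This follows directly from the definition of $h_v$: every initial path to $v$ decomposes uniquely as an initial path to some predecessor $u' \in \mathrm{pred}_\G(v)$ followed by the edge to $v$, so its profile is of the form $h_p \cdot \Lambda(v)$ for $h_p$ ranging over profiles of initial paths to $u'$. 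Taking the lex-maximum over paths to $u'$ first gives $h_{u'}\cdot\Lambda(v)$, and then maximizing over $u'$ yields the displayed formula; here I would note explicitly that since all $h_{u'}$ for $u'\in\mathrm{pred}_\G(v)$ have the same length $\abs{v}$, lex-max commutes with right-concatenation by the common letter $\Lambda(v)$.

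Next, I would unpack the definition of $\Gprime$: the edge $(u,v)$ survives in $\Gprime$ precisely when no predecessor $u'\in\mathrm{pred}_\G(v)$ satisfies $u \prec_{\abs{u}} u'$. Since $u \prec_{\abs{u}} u'$ is defined by $h_u < h_{u'}$, this is the same as saying that $h_u$ attains the maximum $\max_{u'\in\mathrm{pred}_\G(v)} h_{u'}$. Substituting into the recurrence from the first step gives $h_v = h_u \cdot \Lambda(v)$, and since $\Lambda(v)\in\{0,1\}$ we conclude $h_v \in \{h_u 0, h_u 1\}$, as required.

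There is no serious obstacle here; the one point worth stating carefully is that profiles are defined over initial paths in the original $\G$, not in $\Gprime$, so pruning edges in the second step of the construction cannot invalidate the formula for $h_v$. Once this is flagged, the proof is essentially a two-line computation combining the recurrence with the definition of $E'$.
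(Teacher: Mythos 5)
Your proof is correct and rests on the same core observation as the paper's: every initial path to $v$ factors through a predecessor on the previous level, and $E'$ retains exactly the edges from profile-maximal predecessors. The only difference is presentational -- you package this as an explicit recurrence $h_v = (\max_{u'} h_{u'})\cdot\Lambda(v)$ and argue directly, whereas the paper argues by contradiction from a hypothetical lexicographically larger path; both hinge on the same fact that lex-max over equal-length profiles commutes with appending the common letter $\Lambda(v)$.
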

\begin{proof}
Assume by way of contradiction that $h_{v} \not \in \set{h_u0,h_u1}$.  Recall that $h_v$ is the
lexicographically maximal element of $\set{h_b\mid b \text{ is an initial path to }v}$. Thus our
assumption entails an initial path $b$ to $v$ so that $h_b > h_u1$. Let $u'$ be
$b_{\abs{u}}$: the node on the same level of $\G$ as $u$. Since $b$ is a path to $v$, it holds that
$(u', v) \in E$. Further, $h_b > h_u1$, it must be that $h_{u'} > h_{u}$.  By definition of
$E'$, the presence of $(u', v)$ where $h_{u'} > h_u$ precludes the edge $(u,v)$ from being in $E'$
--- a contradiction.
\end{proof}

Note that while it is possible for two nodes with different profiles to share a child in $\G$,
Lemma \ref{Gprime_Captures_Profiles} precludes this possibility in $\Gprime$. If two nodes join in
$\Gprime$, they must have the same profile and be in the same equivalence class.  We can thus
conflate nodes and equivalence classes, and for every edge $(u,v) \in E'$, consider $[v]$ to be the
child of $[u]$.  Lemma~\ref{Gprime_Captures_Profiles} then entails that the class $[u]$ can have at
most two children: the class of $F$-nodes with profile $h_u1$, and the class of non-$F$-nodes with
profile $h_u0$. We call the first class the $F$-child of $[u]$, and the second class the
non-$F$-child of $[u]$.

By using lexicographic ordering we can derive the preorder for each level $i\splus1$ of the run \DAG
solely from the preorder for the previous level $i$.  To determine the relation between two nodes,
we need only know the relation between the parents of those nodes, and whether the nodes are
$F$-nodes. Formally, we have the following.

\begin{lem}\label{Lexicographic_Ordering}
For all nodes $u, v$ on level $i$, and nodes $u', v'$ where $E'(u,u')$
and $E'(v,v')$\emph{:}
\begin{iteMize}{$\bullet$}
\item If $u \prec_i v$, then $u' \prec_{i+1} v'$.
\item If $u \approx_{i} v$ and either both $u'$ and $v'$ are $F$-nodes, or neither are $F$-nodes,
then $u' \approx_{i+1} v'$.
\item If $u \approx_i v$ and $v'$ is an $F$-node while $u'$ is not, then $u' \prec_{i+1} v'$.
\end{iteMize}
\end{lem}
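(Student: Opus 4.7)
The plan is to lift Lemma~\ref{Gprime_Captures_Profiles} into an exact equation for child profiles, and then read off the three cases from elementary properties of lexicographic order.

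First I would establish the following sharpening of Lemma~\ref{Gprime_Captures_Profiles}: for every edge $(u,u') \in E'$, we have $h_{u'} = h_u \cdot \Lambda(u')$. Lemma~\ref{Gprime_Captures_Profiles} already gives $h_{u'} \in \{h_u 0, h_u 1\}$, so I only need to identify which symbol is appended. This is immediate from the definition of a profile: $h_{u'}$ equals $h_b$ for some initial path $b$ ending at $u'$, and the last symbol of $h_b$ is $\Lambda(u')$ by definition of $\Lambda$. Hence the final symbol of $h_{u'}$ is $\Lambda(u')$, and combined with Lemma~\ref{Gprime_Captures_Profiles} this forces $h_{u'} = h_u \cdot \Lambda(u')$. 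The analogous identity holds for $v'$.

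With this equation in hand, each of the three claims becomes a one-line observation about lexicographic comparison of equal-length binary strings, noting that $h_u$ and $h_v$ both have length $i+1$ (so $h_{u'}, h_{v'}$ have length $i+2$). For~(1), if $h_u < h_v$ lexicographically and $|h_u|=|h_v|$, then $h_u \cdot x < h_v \cdot y$ for any bits $x,y$; applying this with $x = \Lambda(u')$ and $y = \Lambda(v')$ gives $h_{u'} < h_{v'}$, i.e.\ $u' \prec_{i+1} v'$. For~(2), $h_u = h_v$ together with $\Lambda(u') = \Lambda(v')$ yields $h_{u'} = h_{v'}$, so $u' \approx_{i+1} v'$. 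For~(3), $h_u = h_v$ with $\Lambda(u') = 0$ and $\Lambda(v') = 1$ gives $h_{u'} = h_u \cdot 0 < h_u \cdot 1 = h_{v'}$, so $u' \prec_{i+1} v'$.

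The only step that requires real care is the sharpening of Lemma~\ref{Gprime_Captures_Profiles} to pin down which of $h_u 0$ or $h_u 1$ the profile $h_{u'}$ equals; once that is in place, the rest is just lexicographic bookkeeping. I do not anticipate any other obstacles.
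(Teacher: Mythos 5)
Your proof is correct and follows essentially the same route as the paper: both rest on Lemma~\ref{Gprime_Captures_Profiles} together with the observation that the appended bit is $\Lambda(u')$ (the paper uses this implicitly when it writes $h_{u'}=h_u0$ or $h_u1$ according to $F$-membership, whereas you isolate it as an explicit sharpening), and then conclude by comparing equal-length strings lexicographically. No gaps.
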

\begin{proof}
If $u \prec_i v$, then $h_u < h_v$ and, by Lemma~\ref{Gprime_Captures_Profiles}, we know that $h_{u'} \in
\set{h_u0, h_u1}$ must be smaller than $h_{v'} \in \set{h_v0, h_v1}$, implying that $u' \prec_{i+1}
v'$.  If $u \approx_i v$, we have three sub-cases. If $v'$ is an $F$-node and
$u'$ is not, then $h_{u'}=h_u0=h_v0<h_v1=h_{v'}$ and $u' \prec_{i+1} v'$. If 
both $u'$ and $v'$ are $F$-nodes, then $h_{u'}=h_u1=h_v1=h_{v'}$ and $u' \approx_i
v'$.  Finally, if neither $u'$ nor $v'$ are $F$-nodes, then $h_{u'}=h_u0=h_v0=h_{v'}$ and $u'
\approx_i v'$.
\end{proof}

We now demonstrate that by keeping only edges associated with lexicographically maximal profiles,
$\Gprime$ 
captures an accepting path from $\G$.  

\begin{lem}\label{Lexicographic_Edge_Pruning}
$\Gprime$ has an accepting path iff $\G$ has an accepting path.
\end{lem}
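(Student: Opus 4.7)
The forward direction is immediate: since $E' \subseteq E$, any accepting path in $\Gprime$ remains one in $\G$. For the converse the plan is to isolate, at each level, the lex-maximum equivalence class that still carries an accepting continuation, and then extract an infinite path through these classes using K\"onig's lemma. As a preparatory step I would first observe that every node $v$ admits an initial path $b \subseteq \Gprime$ with $h_b = h_v$: if some edge $(b_{j-1},b_j)$ along a $\G$-path of profile $h_v$ were absent from $E'$, a witnessing $u' \succ_{j-1} b_{j-1}$ would allow rerouting the prefix through any lex-maximum initial path to $u'$, yielding an initial path to $v$ of lex-greater profile and contradicting $h_v$'s maximality.

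Assume now $\G$ has an accepting path. Define $\mathcal{A}_i$ to be the set of equivalence classes at level $i$ from which some infinite path visiting $F$ infinitely often emanates in $\G$, and let $C^\circ_i$ be its lex-maximum element; each $\mathcal{A}_i$ is nonempty because the given accepting path's level-$i$ node witnesses membership. Using Lemma~\ref{Lexicographic_Ordering}, I would argue $C^\circ_{i+1}$ is either the $F$-child or the non-$F$-child of $C^\circ_i$: the parent class of $C^\circ_{i+1}$ inherits its accepting continuation (every such parent has a child in $C^\circ_{i+1}$, which extends to an accepting path), placing the parent in $\mathcal{A}_i$ and hence $\preceq C^\circ_i$; a strict inequality would drive $C^\circ_{i+1}$ below some child of $C^\circ_i$ still in $\mathcal{A}_{i+1}$, contradicting maximality.

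The crucial step is showing $C^\circ_{i+1}$ coincides with the $F$-child infinitely often. Suppose otherwise: from some level $N$ on, it is always the non-$F$-child. Pick $w \in C^\circ_N$ with an accepting continuation $\sigma$; an induction on $j$ shows $\sigma_j \in C^\circ_{N+j}$ and is a non-$F$-node for every $j \geq 1$. Indeed, given $\sigma_{j-1} \in C^\circ_{N+j-1}$, the step $\sigma_j$ being $F$ would force $h_{\sigma_j} \geq h_{\sigma_{j-1}} \cdot 1$, placing $\sigma_j$'s class above $C^\circ_{N+j}$ (the non-$F$-child of $C^\circ_{N+j-1}$) and contradicting $\sigma_j$'s class lying in $\mathcal{A}_{N+j}$; thus $\sigma_j$ is non-$F$ and $h_{\sigma_j} = h_{\sigma_{j-1}} \cdot 0$, i.e., $\sigma_j \in C^\circ_{N+j}$. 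But then $\sigma$ visits $F$ only finitely often, a contradiction.

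Finally I would restrict $\Gprime$ to the subgraph $K$ induced by $\bigcup_i C^\circ_i$. The child-chain structure makes $C^\circ_i$'s profile a one-bit extension of $C^\circ_{i-1}$'s, so by the preparatory fact every $v \in C^\circ_i$ admits an initial path in $\Gprime$ whose level-$j$ node is forced into $C^\circ_j$ for each $j < i$; hence $K$ has initial paths of every finite length. K\"onig's lemma on the finitely branching $K$ then produces an infinite initial path $\tau \subseteq \Gprime$ with $\tau_i \in C^\circ_i$, which by the previous paragraph visits $F$-nodes infinitely often and so is accepting. The main obstacle I anticipate is the induction in the third paragraph: it must trap the accepting continuation inside the non-$F$ children of the canonical chain using only profile-propagation via Lemma~\ref{Lexicographic_Ordering}, even though the continuation itself may not lie inside $\Gprime$.
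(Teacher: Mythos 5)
Your proof is correct, but it reaches the conclusion by a genuinely different route than the paper. The paper works directly with the set $P$ of accepting paths of $\G$: it filters $P$ into a nested sequence $P_0 \supseteq P_1 \supseteq \cdots$ of accepting paths that are lexicographically maximal in the first $i\splus1$ positions, extracts a path $\pi \in \bigcap_i P_i$ by \konig's Lemma, proves $\pi$ is \emph{the} lexicographically maximal accepting path, and only then argues that every edge of $\pi$ survives in $E'$ (a pruned edge would produce a lexicographically larger accepting path). Because $\pi$ is a member of $P$ by construction, its acceptance is free. You instead work at the level of equivalence classes: at each level you take the lex-maximal class $C^\circ_i$ admitting an accepting continuation, show via Lemmas~\ref{Gprime_Captures_Profiles} and~\ref{Lexicographic_Ordering} that these form a child-chain in $\Gprime$, and then extract a path through the chain. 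This shifts the burden: your extracted path $\tau$ is not a priori accepting, so you need the extra third-paragraph argument that the chain passes through $F$-classes infinitely often --- and you correctly carry out that induction using only profile inequalities ($h_{\sigma_j} \geq h_{\sigma_{j-1}}\Lambda(\sigma_j)$ from an $E$-edge, plus maximality of $C^\circ_{N+j}$ in $\mathcal{A}_{N+j}$), which is exactly what makes it work even though the continuation $\sigma$ need not lie in $\Gprime$. What your route buys is a reusable preparatory fact (every node's lex-maximal initial path lies entirely in $\Gprime$) and a structure that closely anticipates the pipe analysis in the paper's proof of Lemma~\ref{Lexicographic_Node_Pruning}; what the paper's route buys is a shorter, single application of \konig's Lemma with acceptance built in. One phrasing in your second paragraph is slightly loose --- not every $E'$-parent of a node in $C^\circ_{i+1}$ need have a child that extends to an accepting path, only the parent of the particular witness node does --- but that is all your argument actually uses, so it is a presentational nit rather than a gap.
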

\begin{proof}
In one direction, if $\Gprime$ has an accepting path, then its superset $\G$ has the same path.

In the other direction, assume $\G$ has an accepting path. Consider the set $P$ of accepting paths
in $\G$. We prove that there is a lexicographically maximal element $\pi \in P$.  To begin,
we construct an infinite sequence, $P_0,P_1,\ldots$, of subsets of $P$ such that the elements of
$P_i$ are lexicographically maximal in the first $i\splus1$ positions.  If $P$ contains paths starting in
an $F$-node, then $P_0 = \set{b\mid b \in P,~b_0\text{ is an $F$-node}}$ is all elements beginning
in $F$-nodes . Otherwise $P_0=P$. Inductively, if $P_i$ contains an element $b$ such that $b_{i+1}$
is an $F$-node, then $P_{i+1} = \set{b\mid b \in P_i,~b_{i+1}\text{ is an $F$-node}}$.  Otherwise
$P_{i+1}=P_i$. For convenience, define the predecessor of $P_i$ to be $P$ if $i=0$, and $P_{i-1}$
otherwise.  Note that since $\G$ has an accepting path, $P$ is non-empty. Further, every set $P_i$
is not equal to its predecessor $P'$ only when there is a path in $P'$ with an $F$-node in the $i$th
position. In this case, that path is in $P_i$. Thus every $P_i$ is non-empty. 

First, we prove that there is a path $\pi \in \bigcap_{i \geq 0} P_i$.  Consider the sequence
$U_0,U_1,U_2,\ldots$ where $U_i$ is the set of nodes that occur at position $i$ in runs in
$P_i$.  Formally, $U_i=\set{u \mid u \in \G,~b \in P_i,~u=b_i}$. Each node in $U_{i+1}$ has a parent in
$U_i$, although it may not have a child in $U_{i+2}$. We can thus connect the nodes in
$\bigcup_{i>0} U_i$ to their parents, forming a sub-\DAG of $\G$. As every $P_i$ is non-empty, every
$U_i$ is non-empty, and this \DAG has infinitely many nodes. Since each node has at most $n$
children, by \konig's Lemma there is an initial path $\pi$ through this \DAG, and thus through $\G$.
We now show by induction that $\pi \in P_i$ for every $i$. As a base case, $\pi \in P$. Inductively,
assume $\pi$ is in the predecessor $P'$ of $P_i$. The set $P_{i}$ is either $P'$, in which case $\pi
\in P_i$, or the set $\set{b\mid b \in P',~b_{i}\text{ is an $F$-node}}$.  In this latter case, as
$U_i$ consists only of $F$-nodes, the node $\pi_i$ must be an $F$-node.  and $\pi \in P_i$. 

Second, having established that there must be an element $\pi \in \bigcap_{i \geq 0} P_i$, we prove
$\pi$ is lexicographically maximal in $P$.  Assume by way of contradiction that there exists an
accepting path $\pi'$ so that $h_{\pi'} > h_\pi$. Let $k$ be the first point where $h_{\pi'}$
differs from $h_\pi$.  At this point, it must be that $\pi_k$ is not an $F$ node, while $\pi'_k$ is
an $F$ node.  However, $\pi'$ is an accepting path that shares a profile with $\pi$ up until this
point. As $\pi$ is in the predecessor $P'$ of $P_k$, it must also be that $\pi'$ is in $P'$.  By
definition, $P_k$ then would be $\set{b \mid b \in P',~ b_k\text{ is an $F$-node}}$. This would
imply $\pi \not \in P_k$, a contradiction.

Finally, we demonstrate that every edge in $\pi$ occurs in $\Gprime$. Assume by way of contradiction
that some edge $(\pi_i,\pi_{i+1})$ is in $E$ but not in $E'$. This implies there is a node $u$ on
level $i$ such that $(u,\pi_{i+1})$ is in $E$ and $\pi_i \prec_i u$. Since $u \in \G$, there is an
initial path $b$ to $u$.  Thus, the path $b,u,\pi_{i+1},\pi_{i+2}\ldots$ is an accepting path in
$\G$.  This path would be lexicographically larger than $\pi$, contradicting the second claim above.
Hence, we conclude $\pi$ is an accepting path in $\Gprime$.
\end{proof}

In the next stage, we remove from $\Gprime$ finite nodes. Let 
$\Gdubprime = \Gprime~\setminus \set{v\mid v\text { is finite in }\Gprime}$.  Note there may be
nodes that are not finite in $\G$, but are finite in $\Gprime$.  It is not
hard to see that $\G$ may have infinitely many $F$-nodes and still not contain a path with
infinitely many $F$-nodes.  Indeed, $\G$ may have infinitely many paths each with finitely many
$F$-nodes. We now show that the transition from $\G$ via $\Gprime$ to $\Gdubprime$ removes this
possibility, and the presence of infinitely many $F$-nodes in $\Gdubprime$ does imply the existence
of a path with infinitely many $F$-nodes.

\begin{lem}\label{Lexicographic_Node_Pruning}
$\G$ has an accepting path iff $\Gdubprime$ has infinitely many $F$-nodes.
\end{lem}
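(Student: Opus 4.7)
The \emph{if} direction is straightforward: assuming $\G$ has an accepting path, Lemma~\ref{Lexicographic_Edge_Pruning} yields an accepting path $\pi$ in $\Gprime$, and since every node of $\pi$ has infinitely many descendants along $\pi$ itself, no node of $\pi$ is finite in $\Gprime$, so $\pi$ survives in $\Gdubprime$ and witnesses infinitely many $F$-nodes there.

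For the \emph{only-if} direction, assume $\Gdubprime$ has infinitely many $F$-nodes; the plan is to work in the quotient by the profile equivalence $\approx_i$. Lemma~\ref{Gprime_Captures_Profiles} tells us each class in $\Gprime$ has at most two child classes (its $F$-child and its non-$F$-child), each level holds at most $n = \abs{Q}$ classes, and each class is uniformly $F$ or uniformly non-$F$. I would first establish two auxiliary facts: (i)~every $\Gprime$-parent of a $\Gdubprime$-node is itself in $\Gdubprime$, because the parent inherits the child's infinite descendant set; and (ii)~every class that appears in $\Gdubprime$ has an infinite descendant-chain of classes in $\Gdubprime$, obtained by applying \konig's Lemma inside $\Gprime$ to a non-finite representative and noting that every node on the resulting infinite path is itself non-finite. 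Since each class holds at most $n$ nodes, the hypothesis gives infinitely many $F$-classes in $\Gdubprime$.

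The crux is to build an infinite initial class-path that meets $F$-classes infinitely often. Let $T^{\ast}$ be the set of $\Gdubprime$-classes having infinitely many $F$-class descendants. Pigeonholing the $F$-classes through the at most two initial classes places some initial class in $T^{\ast}$, and pigeonholing through the $\le 2$ child classes shows every class in $T^{\ast}$ has a $T^{\ast}$-child, so a greedy path $C_0, C_1, \ldots$ in $T^{\ast}$ can be taken, always preferring the $F$-child when it lies in $T^{\ast}$. The hard part is verifying this greedy path visits $F$-classes infinitely often. I would argue by contradiction: if from some level $L$ onward the path stays non-$F$, then for each $i \ge L$ the $F$-child of $C_i$ either (A)~is absent from $\Gdubprime$'s class-tree or (B)~is present but has only finitely many $F$-class descendants. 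A Case-B $F$-child is in $\Gdubprime$, so by (ii) it has an infinite descendant-chain and contributes at least one class at every subsequent level; distinct Case-B events open tree-disjoint side branches that stay alive forever. The width bound of $n$ classes per level thus caps Case~B at $n-1$ occurrences, after which only Case~A can occur; but iterating Case~A forces the $\Gdubprime$-class-descendants of the current class to be exactly the non-$F$ tail $C_L, C_{L+1}, \ldots$, contradicting its membership in $T^{\ast}$.

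To conclude, I would lift the class-path to a node-path using \konig. For each $k$, the set of finite sequences $(v_0, \ldots, v_k)$ with $v_i \in C_i \cap \Gdubprime$ and $(v_{i-1}, v_i)$ an edge of $\Gprime$ is finite (each $\abs{C_i \cap \Gdubprime} \le n$) and nonempty: starting from any $v_k \in C_k \cap \Gdubprime$, every $\Gprime$-parent of $v_k$ has the profile of $C_k$ with its last symbol dropped, hence lies in $C_{k-1}$, and by (i) is in $\Gdubprime$, so $\Gprime$-parents trace back to an initial $v_0 \in C_0$. These prefixes form a finitely branching infinite tree, so \konig produces an infinite branch $v_0, v_1, \ldots$; since $v_0$ is initial and every $v_i$ in an $F$-class $C_i$ is itself an $F$-node, this branch is an accepting path in $\Gdubprime$, and hence in $\G$. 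The main obstacle is the Case~B bookkeeping, which is precisely where the two-stage pruning does real work: without the finite-node pruning, side branches could die off and reopen width, and without the edge pruning, the branching in the class-tree could exceed two.
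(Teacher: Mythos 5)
Your proof is correct, and the forward direction coincides with the paper's. The backward direction, however, takes a genuinely different route. The paper also passes to the class-\DAG of \Gdubprime and uses the same two structural facts you isolate (the class-\DAG is a tree in which each class has at most two children, and each level carries at most $n$ classes), but it then argues globally: the class-tree is leafless, so its width is monotonically non-decreasing and bounded by $n$, hence stabilizes at some level $j$; beyond $j$ every class has exactly one child, so the tree decomposes into finitely many disjoint infinite chains (``pipes''), one of which must contain infinitely many $F$-classes by pigeonhole, and any infinite path threaded through that pipe is accepting. You instead run a local greedy descent inside the subtree $T^{\ast}$ of classes with infinitely many $F$-class descendants, preferring $F$-children, and justify that the greedy path cannot eventually avoid $F$-classes by a width-counting argument: each bypassed $F$-child that survives in \Gdubprime spawns a permanently live, tree-disjoint side branch, so at most $n-1$ such bypasses can occur, after which the absence of $F$-children contradicts membership in $T^{\ast}$. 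Both arguments are sound and both lift the class-path to a node-path via \konig's Lemma; yours is somewhat heavier in bookkeeping (the Case A/Case B analysis), while the paper's stabilization argument is shorter and, importantly, introduces the stabilization level and the pipes as named objects that the surrounding text reuses immediately after the lemma. Your auxiliary facts (i) and (ii) are correct and correctly proved, and your observation that the width bound is what makes the greedy argument work is exactly the right diagnosis of where the two-stage pruning earns its keep.
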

\begin{proof}
If $\G$ has an accepting path, then by Lemma~\ref{Lexicographic_Edge_Pruning} the \DAG $\Gprime$
contains an accepting path. Every node in this path is infinite in $\Gprime$, and thus this path is
preserved in $\Gdubprime$. This path contains infinitely many $F$-nodes, and thus $\Gdubprime$
contains infinitely many $F$-nodes.

In the other direction, we consider the \DAG over equivalence classes induced by $\Gdubprime$.
Given a node $u$ in $\Gdubprime$, recall that its equivalence class in $\Gdubprime$ contains all
states $v$ such that $v \in \Gdubprime$ and $h_u=h_v$. Given two equivalence classes $U$ and $V$,
recall that $V$ is a child of $U$ when there are $u \in U$,  $v \in V$, and  $E''(u,v)$. As
mentioned above, once we have pruned edges not in $\Gprime$, two nodes of different 
classes cannot join. Thus this \DAG is a tree.  Further, as every node $u$ in $\Gdubprime$ is
infinite and has a child, its equivalence class must also have a child.  Thus the \DAG of
classes in $\Gdubprime$ is a leafless tree. The width of this tree must monotonically
%orna5
increase and is bounded by $n$. It follows that at some level $j$ the tree reaches a stable width. We call this
%increase, is bounded by $n$, and thus at some level $j$ the tree reaches a stable width. Call this
level $j$ the \emph{stabilization level} of $\G$.

After the stabilization level, each class $U$ has exactly one child: as noted above, $U$
cannot have zero children, and if $U$ had two children the width of the tree would increase.
Therefore, we identify each equivalence class on level $j$ of $\Gdubprime$ with its unique branch of
children in $\Gdubprime$, which we term its \emph{pipe}. These pipes form a
partition of nodes in $\Gdubprime$ after $j$. Every node in these pipes has an ancestor, or it would not
be in the \DAG, and has a child, or it would not be infinite and in $\Gdubprime$. Therefore each
node is part of an infinite path in this pipe.  Thus, the pipe with infinitely many
$F$-classes contains only accepting paths. These paths are accepting in $\G$, which
subsumes $\Gdubprime$.
\end{proof}

In the proof above we demonstrated there is a \emph{stabilization level} $j$ at which the number of
equivalence classes in $\Gdubprime$ stabilized, and discussed the {\em pipes} of $\Gdubprime$: the single
chain of descendants from each equivalence class on the stabilization level $j$ of $\Gdubprime$.

\begin{exa}
Figure~\ref{Fig:GDubPrime} displays $\Gdubprime$ for the example of Figure~\ref{Fig:Automaton}.
Edges removed from $\Gprime$ are dotted: at levels 1 and 3 where both $q$ and $r$ transition
to $r$.  When both $r$ and $s$ transition to $t$, they have the same profile and both edges remain.
The removed edges render all but the first $q$-node finite in $\Gprime$. The stabilization level is
$0$.
\end{exa}
\vspace{-0.0in}

\begin{figure}
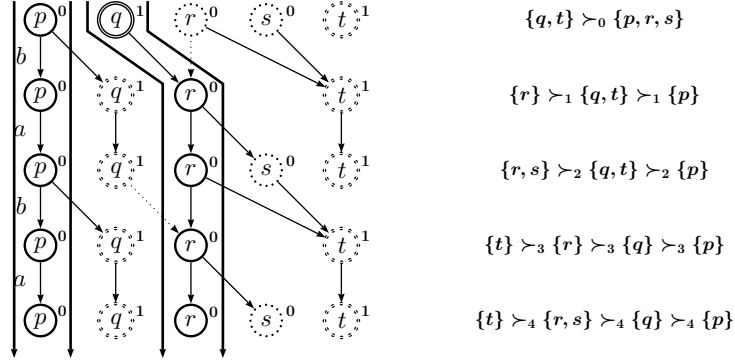

\centering
{
\begin{postscript}
\SmallPicture\VCDraw{\begin{VCPicture}{(-4,-12)(15,2)}
%level 0 states
\def\level{0}\def\offset{0}
\State[p]{(0,\offset)}{p\level} \FinalState[q]{(2,\offset)}{q\level}
\ChgStateLineStyle{dotted}
\State[r]{(4,\offset)}{r\level} \State[s]{(6,\offset)}{s\level} \FinalState[t]{(8,\offset)}{t\level}
\RstStateLineStyle
%level 1
\def\level{1} \def\prevlevel{0} \def\offset{-2}
\State[p]{(0,\offset)}{p\level}\State[r]{(4,\offset)}{r\level}
\ChgStateLineStyle{dotted}
\FinalState[q]{(2,\offset)}{q\level} \FinalState[t]{(8,\offset)}{t\level}
\RstStateLineStyle
%1 transitions
\EdgeR{p\prevlevel}{p\level}{b~~~~~~~} \EdgeL{p\prevlevel}{q\level}{} 
\ChgEdgeLineStyle{dotted}\EdgeL{r\prevlevel}{r\level}{}\RstEdgeLineStyle
\EdgeL{q\prevlevel}{r\level}{} \EdgeL{r\prevlevel}{t\level}{} \EdgeL{s\prevlevel}{t\level}{}
%level 2
\def\prevlevel{1} \def\level{2}\def\offset{-4}
\State[r]{(4,\offset)}{r\level} \State[p]{(0,\offset)}{p\level}
\ChgStateLineStyle{dotted}
\FinalState[q]{(2,\offset)}{q\level} \State[s]{(6,\offset)}{s\level} \FinalState[t]{(8,\offset)}{t\level}
\RstStateLineStyle
%0 transitions
\EdgeR{p\prevlevel}{p\level}{a~~~~~~~} \EdgeL{r\prevlevel}{r\level}{} \EdgeL{q\prevlevel}{q\level}{}
\EdgeL{t\prevlevel}{t\level}{} \EdgeL{r\prevlevel}{s\level}{} 
%level 3
\def\prevlevel{2} \def\level{3}\def\offset{-6}
\State[p]{(0,\offset)}{p\level} \State[r]{(4,\offset)}{r\level}
\ChgStateLineStyle{dotted}
\FinalState[t]{(8,\offset)}{t\level} \FinalState[q]{(2,\offset)}{q\level} 
\RstStateLineStyle
%1 transitions
\EdgeR{p\prevlevel}{p\level}{b~~~~~~~} \EdgeL{r\prevlevel}{r\level}{} \EdgeL{p\prevlevel}{q\level}{}
\ChgEdgeLineStyle{dotted}\EdgeL{q\prevlevel}{r\level}{}\RstEdgeLineStyle
\EdgeL{r\prevlevel}{t\level}{} \EdgeL{s\prevlevel}{t\level}{}
%level 4
\def\prevlevel{3} \def\level{4}\def\offset{-8}
\State[r]{(4,\offset)}{r\level} \State[p]{(0,\offset)}{p\level}
\ChgStateLineStyle{dotted}
\FinalState[q]{(2,\offset)}{q\level} \State[s]{(6,\offset)}{s\level} \FinalState[t]{(8,\offset)}{t\level}
\RstStateLineStyle
%0 transitions
\EdgeR{p\prevlevel}{p\level}{a~~~~~~~} \EdgeL{r\prevlevel}{r\level}{} \EdgeL{q\prevlevel}{q\level}{}
\EdgeL{t\prevlevel}{t\level}{} \EdgeL{r\prevlevel}{s\level}{} 
\hide{
%level 5
\def\prevlevel{4} \def\level{5}\def\offset{-10}
\State[r]{(4,\offset)}{r\level} \State[p]{(0,\offset)}{p\level}
\ChgStateLineStyle{dotted}
\FinalState[q]{(2,\offset)}{q\level} \State[s]{(6,\offset)}{s\level} \FinalState[t]{(8,\offset)}{t\level}
\RstStateLineStyle
%0 transitions
\EdgeR{p\prevlevel}{p\level}{a~~~~~~~} \EdgeL{r\prevlevel}{r\level}{} \EdgeL{q\prevlevel}{q\level}{}
\EdgeL{s\prevlevel}{s\level}{} \EdgeL{t\prevlevel}{t\level}{} \EdgeL{r\prevlevel}{s\level}{} 
%level 6
\def\prevlevel{5} \def\level{6}\def\offset{-12}
\State[p]{(0,\offset)}{p\level} \State[r]{(4,\offset)}{r\level}
\ChgStateLineStyle{dotted}
\FinalState[t]{(8,\offset)}{t\level} \FinalState[q]{(2,\offset)}{q\level} 
\RstStateLineStyle
%1 transitions
\EdgeR{p\prevlevel}{p\level}{b~~~~~~~} \EdgeL{r\prevlevel}{r\level}{} \EdgeL{p\prevlevel}{q\level}{}
 \EdgeL{r\prevlevel}{t\level}{} \EdgeL{s\prevlevel}{t\level}{}
\ChgEdgeLineStyle{dotted}\EdgeL{q\prevlevel}{r\level}{}\RstEdgeLineStyle
}

\psline[linewidth=2pt] (-0.7,0.5)(-0.7,-9)
\psline[linewidth=2pt] (0.8,0.5)(0.8,-9)
\psline[linewidth=2pt] (1.25,0.5)(1.25,-0.3)(3.25,-1.7)(3.25,-9)
\psline[linewidth=2pt] (2.85,0.5)(2.85,-0.3)(4.85,-1.7)(4.85,-9)
\begin{boldmath}

\multirput(0.58,0.25)(0.0,-2.0){5}{\large$0$}
\multirput(2.65,0.25)(0.0,-2.0){5}{\large$1$}
\multirput(4.60,0.22)(0.0,-2.0){5}{\large$0$}
\multirput(6.65,0.25)(0.0,-4.0){2}{\large$0$}
\multirput(6.65,-7.75)(0.0,-2.0){1}{\large$0$}
\multirput(8.65,0.25)(0.0,-2.0){5}{\large$1$}

\rput[u](15,0){\Large $\set{q,t}\succ_0\set{p,r,s}$}
\rput[u](15,-2){\Large $\set{r}\succ_1\set{q,t}\succ_1\set{p}$}
\rput[u](15,-4){\Large $\set{r,s}\succ_2\set{q,t}\succ_2\set{p}$}
\rput[u](15,-6){\Large $\set{t}\succ_3\set{r}\succ_3\set{q}\succ_3\set{p}$}
\rput[u](15,-8){\Large $\set{t}\succ_4\set{r,s}\succ_4\set{q}\succ_4\set{p}$}
%\rput[u](15,-10){\Large $\set{\zug{t,5}}\succ\set{\zug{r,5},\zug{s,5}}\succ\set{\zug{q,5}}\succ\set{\zug{p,5}}$}
%\rput[u](15,-12){\Large $\set{\zug{t,6}}\succ\set{\zug{r,6}}\succ\set{\zug{q,6}}\succ\set{\zug{p,6}}$}
\end{boldmath}

\end{VCPicture}}
\end{postscript}\qquad\qquad\qquad\qquad
}
\vspace{-0.5in}
%orna4 (two typos with the prime)
\caption{The run \DAG \Gdubprime, where dotted edges were removed from \G and dotted states were removed from
\Gprime. Nodes are superscripted with their $\Lambda$-labels. Bold lines denote the pipes of \Gdubprime. The
lexicographic order of equivalence classes for each level of \Gprime is to the right.
}\label{Fig:GDubPrime}
\end{figure}

\subsection{Complementing With Profiles}
We now complement $\A$ by constructing an NBW, $\A_S$, that employs
Lemma~\ref{Lexicographic_Node_Pruning} to determine if a word is in $L(\A)$. This construction is a
reformulation of the slice-based approach of \cite{KW08} in the framework of run \DAGs\forappendix{:
see Appendix~\ref{App:Slices}}{}.  The NBW $\A_S$ tracks the levels of $\Gprime$ and guesses which
nodes are finite in $\Gprime$ and therefore do not occur in $\Gdubprime$. To track $\Gprime$, the
automaton $\A_S$ stores at each point in time a set $S$ of states that occurs on each level. The
sets $S$ are labeled with a guess of which nodes are finite and which are infinite.  States that are
guessed to be infinite, and thus correspond to nodes in $\Gdubprime$, are labeled $\top$, and states
that are guessed to be finite, and thus omitted from $\Gdubprime$, are labeled $\bot$.  In order to
track the edges of $\Gprime$, and thus maintain this labeling, $\A_S$ needs to know the
lexicographic order of nodes.  Thus $\A_S$ also maintains the preorder $\preceq_i$ over states on
the corresponding level of the run \DAG.  To enforce that states labeled $\bot$ are indeed finite,
$\A_S$ employs the cut-point construction of Miyano and Hayashi \cite{MH84}, keeping an ``obligation
set'' of states currently being verified as finite.  Finally, to ensure the word is rejected, $\A_S$
must enforce that there are finitely many $F$-nodes in $\Gdubprime$.  To do so, $S_A$ uses a bit $b$
to guess the level from which no more $F$-nodes appear in $\Gdubprime$. After this point, 
$F$-nodes must be labeled $\bot$.

Before we define $\A_S$, we formalize {\em preordered subsets} and 
operations over them.  For a set $Q$ of states, define ${\PSQ} = \set{\zug{S,\preceq}\mid S \subseteq
Q \text{ and }\preceq\text{ is a preorder over $S$}}$ to be the set of preordered subsets of $Q$.
Let $\zug{S,\preceq}$ be an element in ${\PSQ}$.  When considering the successors of
a state, we want to consider edges that remain in $\Gprime$.
For every state $q \in S$ and $\sigma \in \Sigma$, define $\rho_\zug{S,\preceq}(q,\sigma) = \{r \in
\rho(q,\sigma)\mid \text{for every $q' \in S$, if $r \in \rho(q',\sigma)$ then }q' \preceq q\}$.  
Now define the {\em $\sigma$-successor of $\zug{S,\preceq}$} as the tuple
$\zug{\rho(S,\sigma),\preceq'}$, where for every $q, r \in S$,  $q' \in \rho_\zug{S,\preceq}(q, \sigma)$, and
$r' \in \rho_\zug{S,\preceq}(r, \sigma)$:
\begin{iteMize}{$\bullet$}
\item If $q \prec r$, then $q' \prec' r'$
\item If $q \approx r$ and either both $r' \in F$ and $q' \in F$, or both $r' \not\in F$ and $q'
\not \in F$, then $q' \approx' r'$.
\item If $q \approx r$ and one of $q'$ and $r'$, say $r'$, is in $F$ while the other, $q'$, is not, then $q' \prec' r'$. 
\end{iteMize}

We now define $\A_S$. 
The states of $\A_S$ are tuples
$\zug{S,\preceq,\lambda,O,b}$ where: $\zug{S,\preceq} \in {\PSQ}$ is preordered subset of $Q$;~
$\lambda \colon S \to \set{\top,\bot}$ is a labeling 
indicating which states are guessed to be finite ($\bot$) or
infinite ($\top$);~ $O \subseteq S$ is the
obligation set;~ and $b \in \set{0,1}$ is a bit indicating whether we have seen the last $F$-node in
$\Gdubprime$. 
To transition between states of $\A_s$, say that 
\emph{$\textbf{t}'=\zug{S',\preceq',\lambda',O',b'}$ follows
$\textbf{t}=\zug{S,\preceq,\lambda,O,b}$ under $\sigma$} when: 
\begin{enumerate}[(1)]
\item $\zug{S',\preceq'}$ is the $\sigma$-successor of $\zug{S,\preceq}$. 
\item $\lambda'$ is such that for every $q \in S$: 
\begin{iteMize}{$\bullet$}
\item If $\lambda(q)=\top$, then there exists $r \in \rho_\zug{S,\preceq}(q,\sigma)$ such that $\lambda'(r)=\top$,
\item If $\lambda(q)=\bot$, then for every $r \in \rho_\zug{S,\preceq}(q,\sigma)$, it holds that $\lambda'(r)=\bot$.
\end{iteMize}
\item $O' = \begin{cases} 
   {\bigcup_{q \in O}}~\rho_\zug{S,\preceq}(q,\sigma) & O \neq \emptyset,\\
   \{q\mid q \in S' \text{ and }~\lambda'(q)=\bot\} & O = \emptyset.\\
\end{cases}$
\item $b' \geq b$.
\end{enumerate}
We want to ensure that runs of $A_S$ reach a suffix where all $F$-nodes are finite. To this
end, given a state of $\A_S$ $\zug{S,\preceq,\lambda,O,b}$, we say that $\lambda$ is \emph{$F$-free} if
for every $q \in S \cap F$ we have $\lambda(q)=\bot$.

\begin{defi}\label{Slice_Def}
For an NBW $\A = \zug{\Sigma, Q, Q^{in}, \rho, F}$, let $\A_S$ be the NBW 
$\zug{\Sigma, Q_{S}, Q^{in}_{S}, \rho_{S}, F_{S}}$, where:
\begin{iteMize}{$\bullet$}
\item $Q_S = \set{\zug{S,\preceq,\lambda,O,b}\mid \text{if $b=1$ then $\lambda$ is $F$-free}}$,
\item $Q^{in}_S = \set{\zug{Q^{in},\preceq,\lambda,\emptyset,0}\mid \text{for all $q,r \in Q^{in}$, }
q \preceq r\text{ iff } q \not \in F \text{ or } r\in F}$,
\item $\rho_S(\textbf{t},\sigma) = \set{\textbf{t}'\mid \textbf{t}'\text{ follows $\textbf{t}$ under $\sigma$}}$, and
\item $F_S = \set{\zug{S,\preceq,\lambda,\emptyset,1}}$.
\end{iteMize}
\end{defi}

\cbstart
We divide runs of $\A_S$ into two parts. The {\em prefix} of a run is the initial sequence of
states in which $b_i$ is $0$, and the {\em suffix} is the remaining sequence states, in
which $b_i$ is $1$. A run without a suffix, where $b$ stays $0$ for the entire run, has no accepting
states.
\cbend

\begin{theorem}\label{Slice_Complement}
For every NBW $\A$, it holds that $L(\A_S)=\overline{L(\A)}$.
\end{theorem}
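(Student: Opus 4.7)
The goal is to prove $L(\A_S)=\overline{L(\A)}$, which by Lemma~\ref{Lexicographic_Node_Pruning} reduces to showing that $w \in L(\A_S)$ iff the subgraph $\Gdubprime$ of the run \DAG $\G$ of $\A$ on $w$ has only finitely many $F$-nodes. I would split the argument into soundness ($L(\A_S) \subseteq \overline{L(\A)}$) and completeness ($\overline{L(\A)} \subseteq L(\A_S)$), and treat each direction by exhibiting a correspondence between runs of $\A_S$ on $w$ and labelings of the levels of $\G$.

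For \emph{soundness}, I would take an accepting run $\textbf{t}_0, \textbf{t}_1, \ldots$ of $\A_S$ on $w$ with $\textbf{t}_i=\zug{S_i,\preceq_i,\lambda_i,O_i,b_i}$. A straightforward induction using the definition of $\sigma$-successor and of the follows relation shows $S_i = \rho(Q^{in},\sigma_0\cdots\sigma_{i-1})$; moreover every edge of $\Gprime$ from level $i$ to level $i+1$ lies in $\rho_\zug{S_i,\preceq_i}$. Consequently the $\top/\bot$ labeling gives a coherent vertex labeling of $\Gprime$: whenever $\lambda_i(q)=\bot$, clause (2) forces every $\Gprime$-descendant of $\zug{q,i}$ to be labeled $\bot$ as well. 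The Miyano--Hayashi obligation set $O$ is emptied infinitely often (because $F_S$ requires $O=\emptyset$), so a standard argument (identical to the one used in~\cite{KV01c} for the rank-based cut-point) shows every $\bot$-labeled node is finite in $\Gprime$. Since eventually $b_i=1$ and thereafter $\lambda_i$ is $F$-free, every $F$-node on levels past that point is $\bot$-labeled, hence finite in $\Gprime$, hence absent from $\Gdubprime$. So $\Gdubprime$ has only finitely many $F$-nodes, and $w \notin L(\A)$.

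For \emph{completeness}, assume $w \notin L(\A)$. I build a run of $\A_S$ directly from $\G$, $\Gprime$, $\Gdubprime$: take $S_i$ to be the set of states appearing on level $i$ of $\G$, take $\preceq_i$ to be the preorder induced by profiles (which by Lemma~\ref{Lexicographic_Ordering} evolves exactly as required by the $\sigma$-successor definition), and set $\lambda_i(q)=\top$ if $\zug{q,i} \in \Gdubprime$ and $\lambda_i(q)=\bot$ otherwise. Condition (2) on $\lambda$ is immediate from the fact that $\Gdubprime$ is closed under $\Gprime$-descendants and that every infinite node has at least one infinite child in $\Gprime$. The obligation set, defined as in the transition relation, is emptied infinitely often because every $\bot$-labeled node is finite in $\Gprime$; a short pigeonhole/\konig-style argument shows that once a finite ``batch'' has been completely exhausted, $O$ resets. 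By Lemma~\ref{Lexicographic_Node_Pruning}, $\Gdubprime$ has only finitely many $F$-nodes, so there is a level past which no $F$-node is labeled $\top$; switching $b$ to $1$ at that level yields an $F$-free suffix. The resulting run is in $Q_S$ and visits $F_S$ infinitely often.

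The main obstacle is the tight coordination between the labeling $\lambda$ and the obligation set $O$: I must argue that emptying $O$ infinitely often is equivalent to every $\bot$-labeled state being truly finite in $\Gprime$. The subtlety is that $\A_S$ only tracks one ``generation'' of $\bot$-labeled nodes at a time, so I need to show that in the completeness direction the true-finite labeling actually produces an infinitely-often-empty $O$, and in the soundness direction that $O$ being infinitely often empty is strong enough to rule out an infinite $\bot$-branch. Both facts follow the Miyano--Hayashi template of~\cite{MH84}, but require care because the ``transitions'' here are restricted to $\rho_\zug{S,\preceq}$ rather than $\rho$; I would address this by showing that $\rho_\zug{S,\preceq}(q,\sigma)$ is precisely the set of $\Gprime$-children of $\zug{q,i}$, which follows from Lemma~\ref{Gprime_Captures_Profiles}.
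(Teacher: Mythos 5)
Your proposal is correct and follows essentially the same route as the paper's proof: both directions rest on the same correspondence claims (the $S_i$ track the levels of $\G$, the preorders track profiles, $\rho_{\zug{S,\preceq}}$ realizes exactly the $E'$-edges, and the Miyano--Hayashi obligation set certifies that $\bot$-labeled nodes are finite in $\Gprime$), followed by the reduction to finitely many $F$-nodes in $\Gdubprime$ via Lemma~\ref{Lexicographic_Node_Pruning}. The subtlety you flag about coordinating $\lambda$ with $O$ over the restricted transitions is precisely what the paper handles in its claims (3)--(5), so no gap remains.
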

\begin{proof}
\cbstart
Consider a word $w \in \Sigma^\omega$ and the run \DAG $\G$. We first make the following claims
about every infinite run ${\bf t}_0,{\bf t}_1,\ldots$, where
$\textbf{t}_i=\zug{S_i,\preceq_i,\lambda_i,O_i,b_i}$.  For convenience, define
$\mcS_i=\zug{S_i,\preceq_i}$. 
\begin{enumerate}[(1)]

\item\label{Step:S} \textit{The states in $S_i$ are precisely $\set{q\mid \rzug{q,i} \in \G}$.}\\
We exploit this claim to conflate a state $q$ in the $i$th state with the node $\rzug{q,i}$,
and speak of states in $S_i$ being in, being finite in, and being infinite in a graph $\G$. 

\item\label{Step:Preorder} \textit{The preorder $\preceq_i$ is the projection of $\preceq$
onto states occurring at level $i$.} \\This follows from Lemma~\ref{Lexicographic_Ordering}
and the definition of one state in $\A_S$ following another.

\item\label{Step:Transitions} \textit{For every $p \in S_i$, $q \in S_{i+1}$, it holds that $q \in
\rho_{\mcS_i}(p,\sigma_i)$ iff $E'(\rzug{p,i},\rzug{q,i\splus1}).$}\\
This follows from the definitions of $E'$ and $\rho_\mcS$. 

\item\label{Step:O} \textit{$O_i$ is empty for infinitely many indices
  $i$ iff every state labeled $\bot$ is not in $\Gdubprime$}.\\ This
  follows from the cut-point construction of Miyano and
  Hayashi. \cite{MH84}.

\item\label{Step:Inf} \textit{Every state labeled $\top$ is in $\Gdubprime$}.\\
This follows from the definition of transitions between states: every $\top$-labeled state must
have a $\top$-labeled child, and thus is infinite in $\Gprime$ and in $\Gdubprime$.
\end{enumerate}

\noindent We can now prove the theorem. In one direction, assume there is an accepting run ${\bf
t}_0,{\bf t}_1,\ldots$. As this run is accepting, infinitely often $O_i=\emptyset$. By \ref{Step:O} and
\ref{Step:Inf}, this implies the states in $S_i$ are correctly labeled $\top$ when and only when they occur
in $\Gdubprime$.  Further, for this run to be accepting we must be able to divide the run into a prefix,
and suffix as specified above. In the suffix no state in $F$ can be
labeled $\top$, and thus no $F$-nodes occur in $\Gdubprime$ past this point. As only finitely many
$F$-nodes can occur before this point, by Lemma~\ref{Lexicographic_Edge_Pruning} $\G$ does not have
an accepting path and $w \not \in L(\A)$.

In the other direction, assume $w \not \in L(\A)$. This implies there are finitely many $F$-nodes in
$\Gdubprime$, and thus a level $j$ where the last $F$-node occurs.  We construct an accepting run
${\bf t}_0,{\bf t}_1,\ldots$, demonstrating along the way that we satisfy the requirements for
$\textbf{t}_{i+1}$ to be in $\rho_S(\textbf{t}_i,\sigma_i)$. Given $w$, the sequence
$\zug{S_0,\preceq_0},\zug{S_1,\preceq_1},\ldots$ of preordered subsets  is uniquely defined by
$\rho_S$.  There are many possible labelings $\lambda$. For every $i$, select $\lambda_i$ so that a
state $q \in S_i$ is labeled with $\top$ when $\rzug{q,i} \in \Gdubprime$, and $\bot$ when it is
not. Since every node in $\Gdubprime$ has a child, by \ref{Step:Transitions}, for every $p \in S_i$
where $\lambda_i(p)=\top$, there exist a $q \in \rho_{\mcS_i}(p,\sigma_i)$ so that
$\lambda_{i+1}(q)=\top$.  Further, every node labeled $\bot$ has only finitely many descendants, and
so for every $p \in S_i$ where $\lambda_i(p)=\bot$ and $q \in \rho_{\mcS_i}(p,\sigma_i)$, it holds
that $\lambda_{i+1}(q)=\bot$.  Therefore the transition from $\lambda_i$ to $\lambda_{i+1}$ satisfies
the requirements of $\rho_S$. The set $O_0=\emptyset$, and given the sets $S_i$ and labelings
$\lambda_i$, the sets $O_{i+1}$, $i \geq 0$ are again uniquely defined by $\rho_S$.  Finally, we
choose $b_i=0$ when $i<j$, and $b_i=1$ for $i \geq j$.  Since there are no $F$-nodes in $\Gdubprime$
past $j$, no $F$-node will be labeled $\top$ and all states past $j$ will be $F$-free. We have
satisfied the last requirement for the transitions from every ${\bf t}_i$ to ${\bf t}_{i+1}$ to be
valid, rendering this sequence a run. By \ref{Step:O}, infinitely often $O_i=\emptyset$, including
infinitely often after $j$, thus there are infinitely many states ${\bf t}_i$ where $b_i=1$ and
$O_i=\emptyset$, and this run is accepting.
\cbend
\end{proof} 

If $n\!=\!\abs{Q}$, the number of preordered subsets is roughly $(0.53n)^n$ \cite{Var80}. As there
are $2^n$ labelings, and a further $2^n$ obligation sets, the state space of $\A_s$ is at most
$(2n)^n$.  The slice-based automaton obtained in \cite{KW08} coincides with $\A_S$, modulo the
details of labeling states and the cut-point construction (see
Appendix~\ref{App:Slices}). Whereas the correctness proof in \cite{KW08} is given by means of
reduced split trees, here we proceed directly on the run \DAG.

\section{Retrospection}\label{Slices_To_Ranks}

%orna4
Consider an NBW $\A$. So far, we presented two complementation constructions for $\A$, generating
the NBWs $\A^m_R$ and $\A_S$. In this section we present a third
construction, generating an NBW that combines the benefits of the two constructions above.  Both
constructions refer to the run \DAG of $\A$.  In the rank-based approach applied in $\A^m_R$, the
ranks assigned to a node bound the visits in accepting states yet to come. Thus, the ranks refer to
the future, making $\A^m_R$ inherently nondeterministic.  On the other hand, the NBW $\A_S$  refers
to both the past, using profiles to prune edges from $\G$, as well as to the future, by keeping in
$\Gdubprime$ only nodes that are infinite in $\Gprime$. Guessing which nodes are infinite and
labeling them $\top$ inherently introduces nondeterminism into the automaton.

Our first goal in the combined construction is to reduce this latter nondeterminism.  Recall that a
labeling is $F$-free if all the states in $F$ are labeled $\bot$.  Observe that the fewer labels of
$\bot$ (finite nodes) we have, the more difficult it is for a labeling to be $F$-free and,
consequently, the more difficult it is for a run of $\A_S$ to proceed to the $F$-free suffix in
which $b=1$.  It is therefore safe for $\A_S$ to underestimate which nodes to label $\bot$, as long
as the requirement to reach an $F$-free suffix is maintained.  We use this observation in order to
introduce a purely retrospective construction.

For a run \DAG $\G$, say that a level $k$ is an \emph{$F$-finite level} of $\G$ when all $F$-nodes
{\em after} level $k$ (i.e. on a level $k'$ where $k' > k$) are finite in $\Gprime$.  
By Lemma~\ref{Lexicographic_Node_Pruning}, $\G$ is rejecting iff there is a level after which
$\Gdubprime$ has no $F$-nodes. As finite nodes in $\Gprime$ are removed from $\Gdubprime$, we
have:% the following.

\begin{corollary}\label{Rejecting_iff_k}
A run \DAG $\G$ is rejecting iff it has an $F$-finite level.
\end{corollary}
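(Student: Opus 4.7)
The plan is to derive the corollary directly from Lemma~\ref{Lexicographic_Node_Pruning} together with the definition $\Gdubprime = \Gprime \setminus \{v \mid v\text{ is finite in }\Gprime\}$. Both directions hinge on the simple translation: an $F$-node survives into $\Gdubprime$ iff it is an $F$-node of $\Gprime$ that is infinite in $\Gprime$.

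For the easy direction, suppose $\G$ has an $F$-finite level $k$. Then every $F$-node at a level $k' > k$ is finite in $\Gprime$, hence excluded from $\Gdubprime$. Since only levels $0,1,\ldots,k$ can contribute $F$-nodes to $\Gdubprime$, and each level is finite, $\Gdubprime$ contains only finitely many $F$-nodes. By Lemma~\ref{Lexicographic_Node_Pruning}, $\G$ is rejecting.

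For the converse, suppose $\G$ is rejecting. By Lemma~\ref{Lexicographic_Node_Pruning}, $\Gdubprime$ has only finitely many $F$-nodes, so there is a largest level $k$ at which any $F$-node of $\Gdubprime$ appears (take $k=0$ if there are none). For any $k' > k$, every $F$-node at level $k'$ is in $\Gprime$ (every node of $\G$ is, since $\Gprime$ differs from $\G$ only by edge removal) but is not in $\Gdubprime$; by the definition of $\Gdubprime$, it must therefore be finite in $\Gprime$. Thus $k$ is an $F$-finite level of $\G$.

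The main obstacle, if any, is simply to be careful about the fact that $\Gprime$ retains all vertices of $\G$ (only edges are removed in the first pruning step), so ``$F$-node after level $k$'' in the definition of $F$-finite level refers to the same set of nodes regardless of whether we think of them inside $\G$ or $\Gprime$. Once this is observed, both implications are one-line arguments that repackage Lemma~\ref{Lexicographic_Node_Pruning}.
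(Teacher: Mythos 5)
Your proof is correct and follows exactly the paper's route: the paper derives the corollary in two sentences from Lemma~\ref{Lexicographic_Node_Pruning} plus the observation that $\Gdubprime$ is obtained from $\Gprime$ by deleting precisely the finite nodes, which is the same translation you make explicit. Your only addition is spelling out the bookkeeping (each level is finite; $\Gprime$ keeps all vertices of $\G$), which is fine.
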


\subsection{Retrospective Labeling}
%orna4
The labeling function $\lambda$ used in the construction of $\A_S$ labels nodes by $\{\top,\bot\}$,
with $\bot$ standing for ``finite'' and $\top$ standing for ``infinite''. In this section we
introduce a variant of $\lambda$ that again maps nodes to $\{\top,\bot\}$ except that now $\top$
stands for ``unrestricted'', allowing us to underestimate which nodes to label
$\bot$. To capture the relaxed requirements on labelings, say that a labeling $\lambda$ is
\emph{legal} when every $\bot$-labeled node is finite in $\Gprime$.  This enables the automaton to
track the labeling and its effect on $F$-nodes only after it guesses that an $F$-finite level $k$
has been reached: all nodes {\em at or before} level $k$ (i.e. on a level $k'$ where $k' \leq k$) are
unrestricted, whereas $F$-nodes after level $k$ and their descendants are required to be finite. The only nondeterminism
in the automaton lies in guessing when the $F$-finite level has been reached.  This reduces the
branching degree of the automaton to 2, and renders it deterministic in the limit. 

The suggested new labeling is parametrized by the $F$-finite level $k$.  The labeling $\lambda^k$
is defined inductively over the levels of $\G$.  Let $S_i$ be the set of nodes on level $i$ of $\G$.
For $i \geq 0$, the function $\lambda^k \colon S_i \to \set{\top,\bot}$ is defined as follows:
\begin{iteMize}{$\bullet$}
\item If $i \leq k$, then for every $u \in S_i$ we define $\lambda^k(u)=\top$.
\item If $i > k$, then for every $u \in S_i$:
\begin{iteMize}{$-$}
\item If $u$ is an $F$-node, then $\lambda^k(u)=\bot$.
\item Otherwise, $\lambda^k(u)=\lambda^k(v)$, for a node $v$ where $E'(v,u)$. 
\end{iteMize}
\end{iteMize}
For $\lambda^k$ to be well defined when $i > k$ and $u$ is not an $F$-node, we need to show that
$\lambda^k(u)$ does not depend on the choice of the node $v$ where $E'(v,u)$ holds. By
Lemma~\ref{Gprime_Captures_Profiles}, all parents of a node in $\Gprime$ belong to the same
equivalence class. Therefore, it suffices to prove that all nodes in the same class share a label:
for all nodes $u$ and $u'$, if $u' \approx_{\abs{u}} u$ then $\lambda^k(u)=\lambda^k(u')$.  The
proof proceeds by an induction on $i=|u|$. 
Consider two nodes $u$ and $u'$ on level $i$ where $u' \approx_i u$. As a base case, if $i \leq k$, then $u$ and
$u'$ are labeled $\top$.  For $i > k$, if $u$ is an $F$-node, then $u'$ is also an $F$-node and
$\lambda^k(u)=\lambda^k(u')=\bot$.  Finally, if $u$ and $u'$ are both non-$F$-nodes, recall that all
parents of $u$ are in the same equivalence class $V$. As $u \approx_i u'$,
Lemma~\ref{Gprime_Captures_Profiles} implies that all parents of $u'$ are also in $V$. By the
induction hypothesis, all nodes in $V$ share a label, and thus $\lambda^k(u)=\lambda^k(u')$.

\begin{lem}\label{Slices_Make_Sense}
For a run \DAG $\G$ and $k \in \N$, the labeling $\lambda^k$ is legal iff $k$ is an $F$-finite
level for $\G$.
\end{lem}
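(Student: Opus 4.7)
The plan is to prove each direction separately, with the forward direction (if $k$ is $F$-finite, then $\lambda^k$ is legal) being the substantive one and the converse being essentially immediate from the definition of $\lambda^k$.

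For the converse direction, suppose $\lambda^k$ is legal. By construction, every $F$-node $u$ on any level $i > k$ receives $\lambda^k(u) = \bot$. Legality then forces each such $u$ to be finite in $\Gprime$, which is exactly the condition that $k$ is an $F$-finite level of $\G$.

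For the main direction, suppose $k$ is an $F$-finite level of $\G$. I need to show every $\bot$-labeled node is finite in $\Gprime$. I would proceed by induction on the level $i$ of the node. Nodes at levels $i \leq k$ are all labeled $\top$, so there is nothing to verify. For $i > k$, a $\bot$-labeled node $u$ is either (i) an $F$-node, in which case the $F$-finiteness of $k$ directly gives that $u$ is finite in $\Gprime$, or (ii) a non-$F$-node, in which case by the definition of $\lambda^k$ there is some $v$ with $E'(v,u)$ and $\lambda^k(v) = \lambda^k(u) = \bot$. Such a $v$ cannot lie at level $\leq k$ (those are labeled $\top$), so $v$ is on a level strictly between $k$ and $i$, and by the induction hypothesis $v$ is finite in $\Gprime$. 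Since $E'(v,u)$ holds, every $\Gprime$-descendant of $u$ is a $\Gprime$-descendant of $v$, and hence $u$ too has only finitely many descendants in $\Gprime$.

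The only subtle point — and the step where I would be careful — is verifying that the inductive case (ii) is well-posed: the node $v$ supplied by the definition of $\lambda^k$ must exist, must be distinct from a level-$\leq k$ parent (otherwise its label would be $\top$), and the propagation of the ``finite in $\Gprime$'' property from $v$ to $u$ must use edges of $\Gprime$ rather than of $\G$. All three follow from the well-definedness remark already established (that all $\Gprime$-parents of $u$ share a label by Lemma~\ref{Gprime_Captures_Profiles}) together with the observation that descendants in $\Gprime$ of a finite-in-$\Gprime$ node are themselves finite in $\Gprime$. Combining the two directions yields the claim.
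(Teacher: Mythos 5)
Your proposal is correct and matches the paper's argument in substance: the easy direction is identical, and your downward induction propagating ``finite in $\Gprime$'' along $\bot$-labeled $E'$-edges is just the contrapositive form of the paper's argument, which takes an infinite $\bot$-labeled node and walks up to its earliest $\bot$-labeled ancestor, necessarily an $F$-node after level $k$ that is infinite in $\Gprime$. Both hinge on the same two facts --- $\bot$-labels originate only at $F$-nodes after level $k$ and propagate along $E'$, while finiteness in $\Gprime$ is inherited by $\Gprime$-children --- so there is no gap and no essentially new route.
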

\begin{proof}
If $\lambda^k$ is legal, then every $\bot$-labeled node is finite in $\Gprime$.  Every $F$-node
after level $k$ (i.e. on a level $i$ where $i > k$) is labeled $\bot$, and thus $k$ is an $F$-finite level for $\G$.  If $\lambda^k$ is
not legal, then there is a $\bot$-labeled node $u$ that is infinite in $\Gprime$. Every
ancestor of $u$ is also infinite.  Let $u'$ be the earliest ancestor of $u$ 
(possibly $u$ itself) so
that $\lambda^k(u')=\bot$. Observe that only nodes after level $k$ can be $\bot$-labeled, and so
$u'$ is on a level $i > k$. It must be that $u'$ is an $F$-node: otherwise it would inherit its
parents' label, and by assumption the parents of $u'$ are $\top$-labeled. Thus, $u'$ is an
$F$-node after level $k$ that is infinite in $\Gprime$, and $k$ is not an $F$-finite level for
$\G$.
\end{proof}

\begin{corollary}\label{Slices_Complement}
A run \DAG $\G$ is rejecting iff, for some $k$, the labeling $\lambda^k$ is legal.
\end{corollary}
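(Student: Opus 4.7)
The plan is to observe that the corollary is an immediate consequence of chaining two results already established in this section, so the proof will consist of a short biconditional argument rather than any new construction. Specifically, I will combine Corollary~\ref{Rejecting_iff_k}, which equates rejection of $\G$ with the existence of an $F$-finite level, together with Lemma~\ref{Slices_Make_Sense}, which equates legality of $\lambda^k$ with $k$ being an $F$-finite level. The logical structure is an ``iff passing through an intermediate quantified predicate,'' namely ``$k$ is an $F$-finite level of $\G$,'' so the only real content is existential juggling.

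For the forward direction, I would assume $\G$ is rejecting and apply Corollary~\ref{Rejecting_iff_k} to obtain some $F$-finite level $k \in \N$. Then Lemma~\ref{Slices_Make_Sense}, applied to that particular $k$, tells me that $\lambda^k$ is legal, which is what I want. For the reverse direction, I would assume $\lambda^k$ is legal for some $k$, invoke Lemma~\ref{Slices_Make_Sense} in the other direction to conclude that this $k$ is an $F$-finite level of $\G$, and then invoke Corollary~\ref{Rejecting_iff_k} to conclude $\G$ is rejecting.

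There is essentially no obstacle here: the work has already been done in the lemma and the prior corollary, and the only thing to verify is that the existential quantifier ``for some $k$'' lines up correctly on both sides of the chain, which it clearly does. I would therefore present the proof in a couple of sentences, making explicit that the existential $k$ witnessing legality of $\lambda^k$ is exactly the existential $k$ witnessing the $F$-finite-level condition.
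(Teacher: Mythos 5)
Your proposal is correct and matches the paper's intent exactly: the corollary is stated without an explicit proof precisely because it is the immediate composition of Corollary~\ref{Rejecting_iff_k} with Lemma~\ref{Slices_Make_Sense} through the intermediate predicate ``$k$ is an $F$-finite level,'' which is what you do.
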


\subsection{From Labelings to Rankings}
In this section we derive an odd ranking for $\G$ from the function $\lambda^k$, thus unifying the
retrospective analysis behind $\lambda^k$ with the rank-based analysis of \cite{KV01c}.  Consider
again the \DAG $\Gprime$ and 
the
function $\lambda^k$.  Recall that every equivalence class $U$ has at
most two child equivalence classes, one $F$-class and one non-$F$-class.  After the $F$-finite level
$k$, only non-$F$-classes can be labeled $\top$. Hence, after level $k$, every $\top$-labeled
equivalence class $U$ can only have a one child that is $\top$-labeled.  For every 
class $U$ on level $k$, we consider this possibly infinite sequence of $\top$-labeled
non-$F$-children. The odd ranking we are going to define, termed the {\em retrospective ranking},
gives these sequences of $\top$-labeled children odd ranks. The $\bot$-labeled classes, which
lie between these sequences of $\top$-labeled classes, are assigned even ranks. The ranks increase
in inverse lexicographic order, i.e. the maximal $\top$-labeled class in a level is given rank 1. As
with $\lambda^k$, the retrospective ranking is parametrized by $k$. The primary insight that allows
this ranking is that there is no need to distinguish between two adjacent $\bot$-labeled classes.
Formally, we have the following.

\begin{defi}[$k$-retrospective ranking]\label{kret_ranks}
Consider a run \DAG $\G$, $k \in \N$, and the labeling $\lambda^k\colon \G \to \{\top,\bot\}$.  Let
$m=2\abs{Q\setminus F}$. For a node $u$ on level $i$ of $\G$, let $\prerank(u)$ be the number of
$\top$-labeled classes lexicographically larger than $u$; $\prerank(u) = \abs{\set{[v]\mid \lambda^k(v)=\top \text{
and } u \prec_i v}}$.  The \emph{$k$-retrospective ranking\/} of $\Gprime$ is the function ${\bf
r}^k \colon V \to \set{0..m}$ defined for every node $u$ on level $i$ as follows.
$${\bf r}^k(u) = \begin{cases}
m & \text{if $i \leq k$,}\\
2\prerank(u) & \text{if $i > k$ and $\lambda^k(u)=\bot$,}\\
2\prerank(u) + 1 & \text{if $i > k$ and $\lambda^k(u)=\top$.} 
\end{cases}$$
\end{defi}

Note that ${\bf r}^k$ is tight. As defined in Section \ref{Sect:Tight}, a ranking is
tight if there exists an $i \in \N$ such that, for every level $l \geq i$, all odd ranks below ${\it
max\_rank}({\bf r},l)$ appear on level $l$. For ${\bf r}^k$ this level is $k+1$, after which each
$\top$-labeled class is given the odd rank greater by two than the rank of the next lexicographically larger
$\top$-labeled class.

\begin{lem}\label{Ranking_Respects_Preorder} For every $k \in \emph{\N}$, the following hold:
\begin{enumerate}[\em(1)]
\item If $u \prec_{\abs{u}} u'$ then ${\bf r}^k(u) \geq {\bf r}^k(u')$.
\item If $(u,v) \in E'$, then ${\bf r}^k(u) \geq {\bf r}^k(v)$.
\end{enumerate}
\end{lem}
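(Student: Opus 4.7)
The plan is to reduce both claims to a single combinatorial fact: the count $\alpha$ of $\top$-labeled classes lexicographically larger than a node does not increase when moving to a lex-larger sibling on the same level, nor when moving to a $\Gprime$-child (provided we are past level $k$). Once $\alpha$ is under control, the labels $\lambda^k$ can be handled by a four-case table.

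For claim (1), if $|u| \leq k$ both ranks are $m$. Otherwise, since $u \prec_i u'$ every $\top$-class lex larger than $[u']$ is also lex larger than $[u]$, giving $\alpha(u) \geq \alpha(u')$. I would then dispatch the four $\lambda^k(u)/\lambda^k(u')$ combinations. Three cases use only $\alpha(u) \geq \alpha(u')$ directly; the one subtle case, where $\lambda^k(u) = \bot$ and $\lambda^k(u') = \top$, additionally uses the fact that $[u']$ itself is a $\top$-class strictly larger than $[u]$, so $\alpha(u) \geq \alpha(u') + 1$, which compensates for the parity twist (even vs.\ odd $+ 1$).

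For claim (2), I split on $|u|$. When $|u| \leq k$ we have ${\bf r}^k(u) = m$, so the inequality reduces to the bound ${\bf r}^k(v) \leq m$; this follows because $\top$-classes past level $k$ are non-$F$-classes and there are at most $|Q\setminus F|$ such classes on any level, making $2\alpha + 1 \leq 2|Q\setminus F| = m$ in the worst case (and $2\alpha \leq m$ for $\bot$-nodes). When $|u| > k$, the heart of the argument is a map
\[
  \Phi\colon \{[x] : [x] \text{ a } \top\text{-class on level } |u|+1,\ x \succ_{|u|+1} v\}
  \to \{[w] : [w] \text{ a } \top\text{-class on level } |u|,\ w \succ_{|u|} u\},
\]
sending each class to its unique $\Gprime$-parent class (uniqueness by Lemma~\ref{Gprime_Captures_Profiles}). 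To see $\Phi$ is well-defined, assume for contradiction that $w = \Phi([x])$ satisfies $w \preceq_{|u|} u$: if $w \prec u$, Lemma~\ref{Lexicographic_Ordering}(1) yields $x \prec v$, contradicting $x \succ v$; if $w \approx u$, then $[x]$ is a child of $[u]$ with $[x] \succ [v]$, forcing $[v]$ to be the non-$F$-child and $[x]$ the $F$-child of $[u]$ --- but $\lambda^k$ labels $F$-nodes past level $k$ with $\bot$, contradicting $\lambda^k(x) = \top$. Injectivity of $\Phi$ follows immediately from the unique-parent property. Hence $\alpha(v) \leq \alpha(u)$. A final four-case check on $\lambda^k(u), \lambda^k(v)$ then gives ${\bf r}^k(u) \geq {\bf r}^k(v)$; the only asymmetric case is when $\lambda^k(u) = \top$ and $\lambda^k(v) = \bot$, handled by $2\alpha(u)+1 \geq 2\alpha(v)$.

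The main obstacle is proving $\Phi$ well-defined: it is where Lemma~\ref{Gprime_Captures_Profiles} (unique parent class), Lemma~\ref{Lexicographic_Ordering} (lex order propagates to children), and the definition of $\lambda^k$ (forced $\bot$ on $F$-nodes past level $k$) all interact. Once $\Phi$ is established, the rest is a mechanical case analysis and the bound $\alpha \leq |Q \setminus F|$ for the boundary level $k\mapsto k+1$.
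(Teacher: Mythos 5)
Your overall strategy is the paper's: establish $\alpha(u) \geq \alpha(v)$ (resp.\ $\alpha(u) \geq \alpha(u')$) by injecting the $\top$-labeled classes above $v$ into those above $u$ via the parent map in $\Gprime$, then finish with a case analysis on the labels. Your treatment of the well-definedness of $\Phi$ is in fact more explicit than the paper's, which leaves the $w \approx_{\abs{u}} u$ subcase implicit; and claim (1) is handled correctly, including its one parity-sensitive case.

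However, the closing case analysis of claim (2) has a hole, and you flag the wrong case as the delicate one. With only $\alpha(u) \geq \alpha(v)$ in hand, the case $\lambda^k(u)=\top$, $\lambda^k(v)=\bot$ that you single out is harmless: $2\alpha(u)+1 \geq 2\alpha(u) \geq 2\alpha(v)$. The case that actually breaks is $\lambda^k(u)=\bot$, $\lambda^k(v)=\top$, where you need $2\alpha(u) \geq 2\alpha(v)+1$, i.e., $\alpha(u) > \alpha(v)$ strictly --- and $\Phi$ only gives you $\geq$. Unlike the analogous case in claim (1), you cannot recover strictness by observing that $[v]$ is itself a $\top$-class above $u$, since $v$ lives on the next level. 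The paper closes this case by showing it is vacuous: by the definition of $\lambda^k$, a node $v$ with $(u,v) \in E'$ and $\lambda^k(u)=\bot$ is itself labeled $\bot$ (either $v$ is an $F$-node past level $k$, hence forced to $\bot$, or it inherits its parent's label). You need to add exactly this observation. Separately, a smaller point: injectivity of $\Phi$ does not ``follow immediately from the unique-parent property'' --- that property only makes $\Phi$ a well-defined function, and in a tree distinct nodes routinely share a parent. Injectivity follows instead from the fact that past level $k$ a class has at most one $\top$-labeled child class (its $F$-child, if it exists, is forced to $\bot$), which is the very fact you already invoke in the $w \approx_{\abs{u}} u$ subcase, so that repair is immediate.
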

\begin{proof}
As both claims are trivial when $u$ is at or before level $k$, assume $u$ is on level $i > k$. To prove
the first claim, note that $\prerank(u) \geq \prerank(u')$: every class, $\top$-labeled
or not, that is larger than $u'$ must also be larger than $u$. If $\prerank(u) > \prerank(u')$, then
(1) follows immediately.  Otherwise $\prerank(u) = \prerank(u')$, which implies that
$\lambda^k(u')=\bot$: otherwise $[u']$ would be a $\top$-labeled equivalence class larger than $u$,
but not larger than itself. Thus ${\bf r}^k(u')=2\prerank(u)$, and ${\bf r}^k(u) \in
\set{2\prerank(u),~2\prerank(u)\!\splus 1}$ is at least ${\bf r}^k(u')$.

As a step towards proving the second claim, we show that $\prerank(u) \geq \prerank(v)$. Consider
every $\top$-labeled class $[v']$ where $v \prec_{i+1} v'$. The class $[v']$ must have a
$\top$-labeled parent $[u']$.  Since $v \prec_{i+1} v'$, the contrapositive of
Lemma \ref{Lexicographic_Ordering}, part 1, entails that $u \preceq_i u'$.  By the definition of $\lambda^k$,
%orna5
the class
$[u']$ can only have one $\top$-labeled child class: $[v']$.  We have thus established
that for every $\top$-labeled class larger than $v$, there is a unique $\top$-labeled class larger
than $u$, and can conclude that $\prerank(u) \geq \prerank(v)$.  We now show by contradiction that
${\bf r}^k(u) \geq {\bf r}^k(v)$. For ${\bf r}^k(u) < {\bf r}^k(v)$, it must be that
$\prerank(u) = \prerank(v)$, that ${\bf r}^k(u)$ = $2\prerank(u)$, and that ${\bf
r}^k(v)=2\prerank(u)\splus1$. In this case, $\lambda^k(u)=\bot$ and $\lambda^k(v)=\top$.  Since a
$\bot$-labeled node cannot have a $\top$-labeled child in $\Gprime$, this is impossible.
\end{proof}

When $k$ is an $F$-finite level of $\G$, the $k$-retrospective ranking is an $m$-bounded odd ranking.

\begin{lem}\label{Slices_Provide_Ranking}
For a run \DAG $\G$ and $k \in \emph{\N}$, the function ${\bf r}^k$ is a ranking bounded by $m$.
Further, if the labeling $\lambda^k$ is legal then ${\bf r}^k$ is an odd ranking.
\end{lem}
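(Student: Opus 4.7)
The plan is to verify each requirement in the definition of an $m$-bounded odd ranking separately, reusing Lemma~\ref{Ranking_Respects_Preorder} for monotonicity and then leveraging legality to force trapped ranks to be odd.

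First I would check that no $F$-node receives an odd rank. At any level $i \leq k$ every node has rank $m = 2\abs{Q \setminus F}$, which is even. At a level $i > k$, the definition of $\lambda^k$ sets every $F$-node to $\bot$, so its rank is $2\prerank(u)$, which is even. Second, to verify that ranks do not increase along the edges of $\Gprime$ I would simply invoke part~(2) of Lemma~\ref{Ranking_Respects_Preorder}.

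Third, for the bound by $m$, I would argue by counting $\top$-labeled equivalence classes. At levels $i \leq k$ the rank equals $m$ itself, so there is nothing to do. At levels $i > k$, every $\top$-labeled class is, by definition of $\lambda^k$, a non-$F$-class, so it must contain at least one state of $Q \setminus F$; since the $\top$-labeled classes on a given level are pairwise disjoint, there are at most $\abs{Q \setminus F}$ of them. Hence a $\bot$-labeled node $u$ has $\prerank(u) \leq \abs{Q \setminus F}$ and rank at most $m$, while a $\top$-labeled node $u$ is in its own $\top$-class and satisfies $\prerank(u) \leq \abs{Q \setminus F} - 1$, giving rank at most $2(\abs{Q \setminus F}-1)+1 = m - 1$.

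For the second assertion, assume $\lambda^k$ is legal. Since ranks along any infinite path in $\Gprime$ are non-increasing and bounded, every such path is eventually trapped in a single rank $r$. I would then suppose for contradiction that $r$ is even and derive a violation of legality. Pick an index $j$ beyond both the stabilization index and level $k$ such that every node $u_i$ for $i \geq j$ has rank $r$. Past level $k$, the only way to receive an even rank is to be $\bot$-labeled, so $\lambda^k(u_j) = \bot$. But the tail $u_j, u_{j+1}, u_{j+2}, \ldots$ is an infinite path of descendants of $u_j$ in $\Gprime$, so $u_j$ is not finite in $\Gprime$, contradicting legality. The main obstacle is keeping the bookkeeping of the counting argument honest (the asymmetric treatment of $\top$- and $\bot$-labeled nodes in $\prerank$); the monotonicity of ranks along paths comes for free from the previous lemma.
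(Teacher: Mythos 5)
Your verification of the three ranking requirements is essentially sound, with one small omission: a ranking must be monotone along \emph{every} edge of $\G$, not only along edges of $\Gprime$, so for $(u,v)\in E\setminus E'$ you still owe the inequality ${\bf r}^k(u)\geq{\bf r}^k(v)$. This is easily repaired: such an edge exists only because there is some $u'$ with $u\prec_{\abs{u}}u'$ and $(u',v)\in E'$, and then the two parts of Lemma~\ref{Ranking_Respects_Preorder} give ${\bf r}^k(u)\geq{\bf r}^k(u')\geq{\bf r}^k(v)$. Your counting argument for the bound $m$ (disjoint non-$F$-classes, hence at most $\abs{Q\setminus F}$ $\top$-labeled classes per level) is fine and even slightly sharper than the paper's.

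The genuine gap is in the odd-ranking part. The path $u_0,u_1,\ldots$ over which the odd-ranking property is quantified is a path in $\G$, and its edges need not lie in $E'$. Consequently the tail $u_j,u_{j+1},\ldots$ is \emph{not} in general a chain of descendants of $u_j$ in $\Gprime$, so you cannot conclude that $u_j$ is infinite in $\Gprime$ and thereby contradict legality. Legality only tells you that each individual $u_i$ ($i\geq j$) is $\bot$-labeled and hence finite in $\Gprime$; it does not by itself forbid a $\G$-path from hopping forever among distinct finite-in-$\Gprime$ nodes while retaining the same even rank. Ruling this out is precisely the delicate part of the paper's proof: it fixes the even rank $e$, forms the sets $U_j$ of $E'$-descendants (at level $j$) of the level-$i$ nodes of rank $e$ --- these empty out after finitely many levels because each such node is finite in $\Gprime$ --- and then proves by induction, using the preorder and Lemma~\ref{Lexicographic_Ordering}, the stronger invariant that every node $v$ with $u_j\preceq_j v$ and ${\bf r}^k(v)=e$ lies in $U_j$; in particular $u_j$ itself does whenever it has rank $e$. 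Without some argument of this kind controlling what happens when the path uses an edge of $E\setminus E'$, your contradiction does not go through.
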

\begin{proof}
There are three requirements for ${\bf r}^k$ to be a ranking bounded by $m$:
\begin{enumerate}[(1)]
\item \textit{Every $F$-node must have an even rank.} At or before 
level $k$,
every node has rank $m$,
which is even.  After $k$ only $\top$-labeled nodes are given odd ranks, and every $F$-node is
labeled $\bot$.

\item \textit{For every $(u,v) \in E$, it must hold that ${\bf r}^k(u) \geq {\bf r}^k(v)$.} 
If $u$ is at or before level $k$, then it has the maximal rank of $m$. If $u$ is after level $k$, we
consider two cases: edges in $E'$, and edges in $E \setminus E'$.  For edges in $E'$, this follows
from Lemma \ref{Ranking_Respects_Preorder} (2). For edges $(u,v) \in E \setminus E'$, we know there
exists a $u'$ where $u \prec_{\abs{u}} u'$ and $(u',v) \in E'$. By Lemma
\ref{Ranking_Respects_Preorder}, ${\bf r}^k(u) \geq {\bf r}^k(u') \geq {\bf r}^k(v)$.

\item \textit{The rank is bounded by $m$.} No $F$-node can be $\top$-labeled. Thus the maximum
number of $\top$-labeled classes on every level is $\abs{Q\setminus F}$.  The largest possible
rank is given to a node smaller than all $\top$-labeled classes, which must be
be a $F$-node and $\bot$-labeled.  Thus, this node is given a rank of at most $m=2\abs{Q\setminus
F}$.\smallskip
\end{enumerate}

\noindent It remains to show that if $\lambda^k$ is legal, then ${\bf r}^k$ is an odd ranking.  Consider an
infinite path $u_0,u_1,\ldots$ in $\G$.  We demonstrate that for every $i > k$ 
%orna5
such that ${\bf r}^k(u_i)$ is an even rank $e$, there exists $i' > i$ such that ${\bf r}^k(u_{i'}) \neq e$. 
%where ${\bf r}^k(u_i)$ is an even rank $e$, there exists a $i' > i$ where ${\bf r}^k(u_{i'}) \neq e$. 
Since
a path cannot increase in rank, this implies ${\bf r}^k(u_{i'}) < e$. To do so, define the
sequence $U_i,U_{i+1},\ldots$, of sets of nodes inductively as follows.  Let $U_i = \set{v\mid {\bf
r}^k(v)=e\hide{,~u \preceq_i v}}$.  For every $j \geq i$, let $U_{j+1} = \set{v \mid v' \in
U_j,~(v',v) \in E'}$.  As ${\bf r}^k(v)$ is even only when $\lambda^k(v)=\bot$, if $\lambda^k$ is
legal then every node given an even rank (such as $e$) must be finite in $\Gprime$. Therefore every
element of $U_i$ is finite in $\Gprime$, and thus at some $i' > i$, the set $U_{i'}$ is empty. Since
$U_{i'}$ is empty, to establish that ${\bf r}^k(u_{i'}) \neq e$, it is sufficient to prove  that for
every $j$, if ${\bf r}^k(u_{j})=e$, then $u_{j} \in U_{j}$.

To show that ${\bf r}^k(u_{j})=e$ entails $u_{j} \in U_{j}$, we prove a stronger claim: for every $j
\geq i$ and $v$ on level $j$, if $u_j \preceq_j v$ and ${\bf r}^k(v)=e$, then $v \in U_j$.  We
proceed by induction over $j$.  For the base case of $j=i$, this follows from the definition
of $U_i$. For the inductive step, take a 
node
$v$ on level $j\splus1$ where ${\bf r}^k(v)=e$ and $u_{j+1}
\preceq_{j+1} v$.  We consider two cases. If ${\bf r}^k(u_{j+1}) \neq e$ then the path 
from $u_i$ to $u_{j+1}$ entails that ${\bf r}^k(u_{j+1}) < e$, and this case of the subclaim
follows from Lemma \ref{Ranking_Respects_Preorder} (1).  Otherwise, it holds that ${\bf
r}^k(u_{j+1}) = e$, and thus ${\bf r}^k(u_j)=e$. Let $u'$ and $v'$ be nodes on level $j$ so that
$(u',u_{j+1}) \in E'$ and $(v',v) \in E'$.  As $u_{j+1} \preceq_{j+1} v$, the contrapositive of Lemma
\ref{Lexicographic_Ordering}, part 1, entails that $u' \preceq_j v'$. Further, since $(u',u_{j+1})
\in E'$ and $(u_j, u_{j+1}) \in E$, we know $u_j \preceq_j u'$.  By transitivity we can thus
conclude that $u_j \preceq_j v'$, which along with Lemma \ref{Ranking_Respects_Preorder} (1) entails
${\bf r}^k(u')=e \geq {\bf r}^k(v')$. As $(v', v) \in E$, Lemma \ref{Ranking_Respects_Preorder} (2)
entails that ${\bf r}^k(v') \geq {\bf r}^k(v)=e$. Thus ${\bf r}^k(v')=e$, and by the inductive
hypothesis $v' \in U_j$. As $E'(v',v)$ holds, by definition $v \in U_{j+1}$, and our subclaim is proven.
\end{proof}

%orna4 (small changes)

The ranking of Definition~\ref{kret_ranks} is termed {\em retrospective} as it relies on the relative
lexicographic order of equivalence classes; this order is determined purely by the history of
nodes in the run \DAG, not by looking forward to see which descendants are infinite or $F$-free in
some subgraph of $\G$.

\begin{exa}
Figure~\ref{Fig:GPrime} displays $\lambda^0$ and the 0-retrospective ranking of our running example.
In the prospective ranking (Figure~\ref{Fig:GDubPrime}), the nodes for state $t$ on levels $1$
and $2$ are given rank $0$, like other $t$-nodes. In the absence of a path forcing this rank, their
retrospective rank is $2$. 
\end{exa}

\begin{figure}
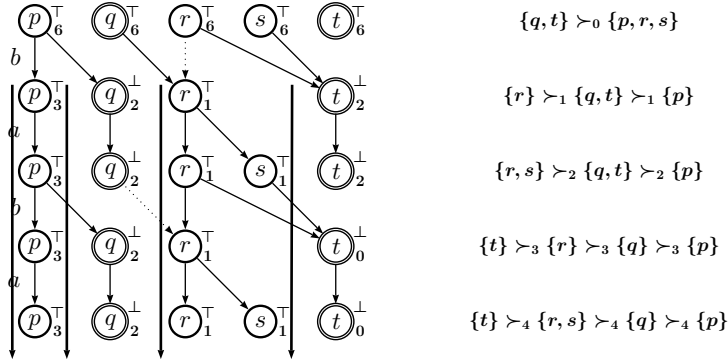

\centering
{
\begin{postscript}
\SmallPicture\VCDraw{\begin{VCPicture}{(-1,-12)(8,2)}
%level 0 states
\def\level{0}\def\offset{0}
\State[p]{(0,\offset)}{p\level} \FinalState[q]{(2,\offset)}{q\level}
\State[r]{(4,\offset)}{r\level}
\State[s]{(6,\offset)}{s\level} \FinalState[t]{(8,\offset)}{t\level}
%level 1
\def\level{1} \def\prevlevel{0} \def\offset{-2}
\State[p]{(0,\offset)}{p\level}\State[r]{(4,\offset)}{r\level}
\FinalState[q]{(2,\offset)}{q\level} \FinalState[t]{(8,\offset)}{t\level}
%1 transitions
\EdgeR{p\prevlevel}{p\level}{b~~~~~~~}
\ChgEdgeLineStyle{dotted}\EdgeL{r\prevlevel}{r\level}{}\RstEdgeLineStyle
 \EdgeL{p\prevlevel}{q\level}{}
\EdgeL{q\prevlevel}{r\level}{} \EdgeL{r\prevlevel}{t\level}{} \EdgeL{s\prevlevel}{t\level}{}
%level 2
\def\prevlevel{1} \def\level{2}\def\offset{-4}
\State[r]{(4,\offset)}{r\level} \State[p]{(0,\offset)}{p\level}
\FinalState[q]{(2,\offset)}{q\level} \State[s]{(6,\offset)}{s\level} \FinalState[t]{(8,\offset)}{t\level}
%0 transitions
\EdgeR{p\prevlevel}{p\level}{a~~~~~~~} \EdgeL{r\prevlevel}{r\level}{} \EdgeL{q\prevlevel}{q\level}{}
\EdgeL{t\prevlevel}{t\level}{} \EdgeL{r\prevlevel}{s\level}{} 
%level 3
\def\prevlevel{2} \def\level{3}\def\offset{-6}
\State[p]{(0,\offset)}{p\level} \State[r]{(4,\offset)}{r\level}
\FinalState[t]{(8,\offset)}{t\level} \FinalState[q]{(2,\offset)}{q\level} 
%1 transitions
\EdgeR{p\prevlevel}{p\level}{b~~~~~~~} \EdgeL{r\prevlevel}{r\level}{} \EdgeL{p\prevlevel}{q\level}{}
\ChgEdgeLineStyle{dotted}\EdgeL{q\prevlevel}{r\level}{}\RstEdgeLineStyle
\EdgeL{r\prevlevel}{t\level}{} \EdgeL{s\prevlevel}{t\level}{}
%level 4
\def\prevlevel{3} \def\level{4}\def\offset{-8}
\State[r]{(4,\offset)}{r\level} \State[p]{(0,\offset)}{p\level}
\FinalState[q]{(2,\offset)}{q\level} \State[s]{(6,\offset)}{s\level} \FinalState[t]{(8,\offset)}{t\level}
%0 transitions
\EdgeR{p\prevlevel}{p\level}{a~~~~~~~} \EdgeL{r\prevlevel}{r\level}{} \EdgeL{q\prevlevel}{q\level}{}
\EdgeL{t\prevlevel}{t\level}{} \EdgeL{r\prevlevel}{s\level}{} 
\hide{
%level 5
\def\prevlevel{4} \def\level{5}\def\offset{-10}
\State[r]{(4,\offset)}{r\level} \State[p]{(0,\offset)}{p\level}
\FinalState[q]{(2,\offset)}{q\level} \State[s]{(6,\offset)}{s\level} \FinalState[t]{(8,\offset)}{t\level}
%0 transitions
\EdgeR{p\prevlevel}{p\level}{a~~~~~~~} \EdgeL{r\prevlevel}{r\level}{} \EdgeL{q\prevlevel}{q\level}{}
\EdgeL{s\prevlevel}{s\level}{} \EdgeL{t\prevlevel}{t\level}{} \EdgeL{r\prevlevel}{s\level}{} 
%level 6
\def\prevlevel{5} \def\level{6}\def\offset{-12}
\State[p]{(0,\offset)}{p\level} \State[r]{(4,\offset)}{r\level}
\FinalState[t]{(8,\offset)}{t\level} \FinalState[q]{(2,\offset)}{q\level} 
%1 transitions
\EdgeR{p\prevlevel}{p\level}{b~~~~~~~} \EdgeL{r\prevlevel}{r\level}{} \EdgeL{p\prevlevel}{q\level}{}
 \EdgeL{r\prevlevel}{t\level}{} \EdgeL{s\prevlevel}{t\level}{}
\ChgEdgeLineStyle{dotted}\EdgeL{q\prevlevel}{r\level}{}\RstEdgeLineStyle
}

\psline[linewidth=2pt] (-0.60,-1.7)(-0.60,-9)
\psline[linewidth=2pt] (0.85,-1.7)(0.85,-9)
\psline[linewidth=2pt] (3.35,-1.7)(3.35,-9)
\psline[linewidth=2pt] (6.84,-1.7)(6.82,-9)
\begin{boldmath}
\multirput(0.60,0.25)(2.00,0.0){5}{\large$\top$}
\multirput(0.60,-1.75)(0.0,-2.0){4}{\large$\top$}
\multirput(2.65,-1.60)(0.0,-2.0){4}{\large$\bot$}
\multirput(4.60,-1.75)(0.0,-2.0){4}{\large$\top$}
\multirput(6.60,-3.75)(0.0,-4.0){2}{\large$\top$}
%\multirput(6.60,-9.75)(0.0,-4.0){1}{\large$\top$}
\multirput(8.65,-1.60)(0.0,-2.0){4}{\large$\bot$}

\multirput(0.63,-0.25)(2.03,0.0){2}{\large$6$} \multirput(6.63,-0.25)(2.03,0.0){2}{\large$6$}
\multirput(4.66,-0.13)(2.03,0.0){1}{\large$6$}
\multirput(0.60,-2.25)(0.0,-2.0){4}{\large$3$}
\multirput(2.65,-2.25)(0.0,-2.0){4}{\large$2$}
\multirput(4.60,-2.25)(0.0,-4.0){2}{\large$1$} \multirput(4.60,-8.25)(0.0,-4.0){1}{\large$1$}
\multirput(4.65,-4.12)(0.0,-6.0){1}{\large$1$}
\multirput(6.60,-4.25)(0.0,-2.0){1}{\large$1$} \multirput(6.60,-8.25)(0.0,-2.0){1}{\large$1$}%t
\multirput(8.65,-2.25)(0.0,-2.0){2}{\large$2$} \multirput(8.65,-6.25)(0.0,-2.0){2}{\large$0$}

\rput[u](15,0){\Large $\set{q,t}\succ_0\set{p,r,s}$}
\rput[u](15,-2){\Large $\set{r}\succ_1\set{q,t}\succ_1\set{p}$}
\rput[u](15,-4){\Large $\set{r,s}\succ_2\set{q,t}\succ_2\set{p}$}
\rput[u](15,-6){\Large $\set{t}\succ_3\set{r}\succ_3\set{q}\succ_3\set{p}$}
\rput[u](15,-8){\Large $\set{t}\succ_4\set{r,s}\succ_4\set{q}\succ_4\set{p}$}
%\rput[u](15,-10){\Large $\set{\zug{t,5}}\succ\set{\zug{r,5},\zug{s,5}}\succ\set{\zug{q,5}}\succ\set{\zug{p,5}}$}
%\rput[u](15,-12){\Large $\set{\zug{t,6}}\succ\set{\zug{r,6}}\succ\set{\zug{q,6}}\succ\set{\zug{p,6}}$}

\end{boldmath}

\end{VCPicture}}
\end{postscript}\qquad\qquad\qquad\qquad\qquad\qquad\qquad
}
\vspace{-0.5in}
\caption{The run \DAG \Gprime, where $0$ is an $F$-finite level.  The labels of $\lambda^0$ and
ranks in ${\bf r}^0$ are displayed as superscripts and subscripts, respectively.  The bold lines
display the sequences of $\top$-labeled classes in $\Gprime$.  The lexicographic order of states is
repeated on the right.  }\label{Fig:GPrime}
\end{figure}

We are now ready to define a new construction, generating an NBW $\A_L$, which combines the benefits
of the previous two constructions.  The automaton $\A_L$ guesses the $F$-finite level $k$, and uses
level rankings to check if the $k$-retrospective ranking is an odd ranking.  We partition the
operation of $\A_L$ into two stages. Until the level $k$, the NBW $\A_L$ is in the first stage,
where it deterministically tracks preordered subsets.  After level $k$, the NBW $\A_L$ moves to the
second stage, where it tracks ranks. This stage is also deterministic.  Consequently, the only
nondeterminism in $\A_L$ is indeed the guess of $k$. Before defining $\A_L$, we need some
definitions and notations. 

\hide{
Lemma~\ref{Slices_Provide_Ranking} gives us an alternative odd ranking to the one of \cite{KV01c}.
We now provide an NBW that uses level rankings to guess this retrospective ranking.  In comparison
to the slice-based construction in Definition \ref{Slice_Def}, employing level rankings gives us a
tighter bound and affords further improvements, discussed later. In comparison to the rank-based
construction in Definition \ref{KVDef}, this automaton needs make only a single guess on each
accepting run: what is the $F$-finite level $k$.  Before and after this guess we proceed
deterministically. Before $k$, our automaton tracks preordered subsets.  At level $k$ our automaton
moves to ranks, and these ranks proceed stably. We thus obtain an automaton with a linear-sized
transition relation. We now introduce the machinery to define the automaton.
}%of hide

Recall that $\PSQ$ denotes the set of preordered subsets of $Q$, and $\TLRM$ the set of tight level
rankings bounded by $m$.  We distinguish between three types of transitions of $\A_L$: transitions
within the first stage, transitions from the first stage to the second, and transitions within the
second stage.  The first type of transition is similar to the one taken in $\A_S$, by means of the
$\sigma$-successor relation between preordered subsets. Below we explain in detail the other two
types of transitions.
%\cbstart
Recall that in the retrospective ranking ${\bf r}^k$, each class in $\Gprime$
labeled $\top$ by $\lambda^k$ is given a unique odd rank. Thus the rank of a node $u$ depends on the
number of $\top$-labeled classes larger than it, denoted $\prerank(u)$.

We begin with transitions where $\A_L$ moves between the stages: from a preordered subset
$\zug{S,\preceq}$ to a level ranking.  On level $k+1$, a node is labeled $\top$ iff it is a
non-$F$-node.  Thus for every $q \in S$, let
$\prerankstate(q) = \abs{\set{[v]\text{ \small$\mid$~} v \in S \setminus F, u \prec
v}}$ be the number of non-$F$-classes
larger than $q$. We now define $\torank \colon {\PSQ} \to {\TLRM}$. Let
$\torank(\zug{S,\preceq})$ be the tight level ranking $f$ where for every $q$: 
$$f(q)=\begin{cases}
\bot& \text{if }q \not \in S,\\
2\prerankstate(q)&\text{if }q \in S \cap F,\\
2\prerankstate(q)\splus1&\text{if }q \in S \setminus F.
\end{cases}$$

We now turn to transitions within the second stage, between level rankings.  The rank of a node $v$
is inherited from its predecessor $u$ in $\Gprime$. However, $\lambda^k$ may label a finite class $\top$. If a
$\top$-labeled class larger than $u$ has no children, then $\prerank(u) \geq \prerank(v)$. In this
case the rank of $v$ decreases.  Given a level ranking $f$, for every $q \in Q$ where $f(q) \neq
\bot$, let $\gamma(q) = \abs{\set{f(q')~|~q' \in Q,~f(q')\text{ is odd, } f(q')<f(q)}}$ be the
number of odd ranks in the range of $f$ lower than $f(q)$. We define the function $\tighten \colon \LR^m
\to \TLRM$. Let $\tighten(f)$ be  the tight level ranking $f'$ where for every $q$: 
\hide{if $f(q) = \bot$, then $f'(q) = \bot$; if $q \in F$, then $f'(q)=2\gamma(q)$; and if $q \not \in F$,
then $f'(q)=2\gamma(q)\splus 1$.}
{
$$f'(q') = \begin{cases}
\bot &\text{if }f(q) = \bot,\\
2\gamma(q)&\text{if }f(q) \neq \bot \text { and } q \in F,\\
2\gamma(q)\splus 1&\text{if }f(q) \neq \bot \text { and } q \not \in F.
\end{cases}$$
}
Note that if $f$ is tight, then $f'=f$, and that while $\tighten$ may merge
two even ranks, it cannot merge two odd ranks.

For a level ranking $f$, a letter $\sigma \in \Sigma$, and $q' \in Q$, let
$\pred(q',\sigma,f)=\{q\mid f(q) \neq \bot,~q' \in \rho(q,\sigma)\}$ be the predecessors of $q'$
given a non-$\bot$ rank by $f$. The lowest ranked element in this set corresponds to the
predecessor in $\G$ with the maximal profile. With two exceptions, $q'$ will inherit this lowest rank. 
First, \tighten might shift the rank down. Second, if $q'$ is in $F$, it cannot be given
an odd rank. For $n \in \N$, let $\lfloor n\rfloor_{even}$ be: $n$ when $n$ is even; and $n\!-\!1$ when
$n$ is odd.  Define the \emph{$\sigma$-successor of $f$} to be $\tighten(f')$ where for every $q'
\in Q$:
$$f'(q') = \begin{cases} 
\bot & \text{if $\pred(q',\sigma,f) = \emptyset$},\\
\lfloor \min(\set{f(q)~|~q \in \pred(q',\sigma,f)})\rfloor_{even} & \text{if $\pred(q',\sigma,f) \neq \emptyset$ and $q' \in F$},\\
\min(\set{f(q)\mid q \in\pred(q',\sigma,f)}) & \text{if $\pred(q',\sigma,f) \neq \emptyset$ and $q' \not \in F$}.
\end{cases}$$
%\cbend

%\newpage
\begin{defi}\label{Slice_Rank}
For an NBW $\A = \zug{\Sigma, Q, Q^{in}, \rho, F}$, let $\A_L$ be the NBW\\
$\zug{\Sigma, {\PSQ} \cup  ({\TLRM} \times 2^Q), Q^{in}_{L}, \rho_{L}, {\TLRM} \times \set{\emptyset}}$, where
\begin{iteMize}{$\bullet$}
\item $Q^{in}_L = \set{\zug{Q^{in},\preceq^{in}}}$ where $\preceq^{in}$ is such that for all $q,r
\in Q^{in}$, $q \preceq r\text{ iff } q \not \in F \text{ or }r\in F$.
\item $\rho_L(\mcS,\sigma) = \set{\mcS'} \cup \set{\rzug{\torank(\mcS'),\emptyset} }$,
where $\mcS'$ is the $\sigma$-successor of $\mcS$.
\item \begin{tabbing}
$\rho_L(\rzug{f,O},\sigma) = \set{\rzug{f',O'}}$ 
           w\=here $f'$ is the $\sigma$-successor of $f$\\
           \>and $O'= \begin{cases}
\rho(O,\sigma) \setminus odd(f') & \text{if }O \neq \emptyset,\\
even(f')                         & \text{if }O=\emptyset.\\
\end{cases}$
\end{tabbing}\vspace{6 pt}
\end{iteMize}
\end{defi}

\noindent The proof of Theorem~\ref{SR_Complement} is based on Lemmas~\ref{Odd_Ranking_Rejecting} and
\ref{Slices_Provide_Ranking} and Corollary~\ref{Slices_Complement}.

\begin{theorem}\label{SR_Complement}
For every NBW $\A$, it holds that $L(\A_L)=\overline{L(\A)}$.
\end{theorem}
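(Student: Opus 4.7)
\medskip
\noindent The plan is to prove both containments by establishing a tight correspondence between accepting runs of $\A_L$ on $w$ and legal retrospective labelings of the run \DAG $\G$ of $\A$ on $w$. An accepting run of $\A_L$ has a unique moment of nondeterminism, namely the transition via $\torank$ from the preordered-subset stage to the level-ranking stage; I will treat the level $k$ at which this transition occurs as the guess of an $F$-finite level and show that the level rankings produced in the second stage are precisely the successive restrictions of the $k$-retrospective ranking ${\bf r}^k$ to levels of $\G$.

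\medskip
\noindent For the easy direction ($\overline{L(\A)} \subseteq L(\A_L)$), assume $w \notin L(\A)$. By Corollary~\ref{Slices_Complement} there is a $k$ such that $\lambda^k$ is legal, and by Lemma~\ref{Slices_Provide_Ranking} the ranking ${\bf r}^k$ is an $m$-bounded odd ranking of $\G$. I will exhibit a run of $\A_L$ that: in its prefix deterministically tracks the preordered subsets $\zug{S_i,\preceq_i}$ for $i \leq k$ (as already verified in the proof of Theorem~\ref{Slice_Complement}); at level $k$ applies $\torank$; and then deterministically extends by $\sigma$-successors of level rankings. The key verification is that, for every level $i \geq k+1$, the level ranking $f_i$ produced by $\A_L$ coincides with the function $q \mapsto {\bf r}^k(\zug{q,i})$ (with $\bot$ off the support). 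Combined with Miyano--Hayashi, the fact that ${\bf r}^k$ is odd will ensure that the obligation set $O_i$ empties infinitely often in the suffix, yielding an accepting run.

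\medskip
\noindent For the other direction ($L(\A_L) \subseteq \overline{L(\A)}$), take an accepting run of $\A_L$ on $w$ and let $k$ be the level at which $\torank$ is applied. From the run I will extract a ranking of $\G$ as follows: nodes on levels $i \leq k$ receive rank $m$, and a node $\zug{q,i}$ with $i > k$ receives rank $f_i(q)$. Using the definition of the $\sigma$-successor of a level ranking (including the $\lfloor \cdot\rfloor_{even}$ correction on $F$-nodes and the effect of $\tighten$), together with Lemma~\ref{Lexicographic_Ordering} and the characterization of edges in $\Gprime$, I will verify the three conditions of a ranking: $F$-nodes are given even ranks, ranks along edges of $\G$ (not just $\Gprime$) do not increase, and the range is bounded by $m$. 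The Miyano--Hayashi obligation set, which empties infinitely often in the accepting run, then guarantees that every initial path is eventually trapped in an odd rank, so this is an odd ranking and Lemma~\ref{Odd_Ranking_Rejecting} gives $w \notin L(\A)$.

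\medskip
\noindent The main obstacle will be the second direction, specifically showing that $\sigma$-successors of level rankings induce a bona fide ranking on all of $\G$, not just on $\Gprime$. For an edge $(u,v) \in E \setminus E'$ one must argue using lexicographic maximality: some $u'$ with $u \prec_{|u|} u'$ has $(u',v) \in E'$, and then the $\min$-over-$\pred$ in the definition of the $\sigma$-successor together with Lemma~\ref{Ranking_Respects_Preorder} forces $f_{i+1}(v) \leq f_i(u')\leq f_i(u)$. Handling the interaction between $\tighten$ (which can collapse even ranks but preserves the relative order of odd ranks) and the $\lfloor\cdot\rfloor_{even}$ step on $F$-nodes is the most delicate bookkeeping; I will isolate it as a separate claim showing that $\torank$ and the iterated $\sigma$-successor agree with $q\mapsto {\bf r}^k(\zug{q,i})$ whenever $\lambda^k$ is legal, after which both directions fall out cleanly.
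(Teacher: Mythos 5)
Your proposal is correct and follows essentially the same route as the paper: both directions hinge on extracting a ranking ($m$ up to level $k$, then $f_i(q)$) from an accepting run and, conversely, on showing by induction that the $\torank$/$\sigma$-successor sequence reproduces ${\bf r}^k(\zug{q,i})$ level by level when $\lambda^k$ is legal, with Miyano--Hayashi tying oddness of the ranking to the obligation set emptying infinitely often. The one delicate point you flag (edges in $E \setminus E'$) is in fact handled directly by the definition of $\sigma$-successors, since the $\min$ there ranges over \emph{all} predecessors in $\pred(q',\sigma,f)$ and $\tighten$ only decreases ranks.
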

\begin{proof}
\cbstart
Consider a word $w \in \Sigma^\omega$ and the run \DAG $\G$. We first make the following claims
about every infinite run
$\zug{S_0,\preceq_0},\ldots,\zug{S_{k},\preceq_{k}},\rzug{f_{k+1},O_{k+1}},\rzug{f_{k+2},O_{k+2}},\ldots$.
For $i > k$, define $S_i=\set{q\mid f_{i}(q)\neq \bot}$.

\begin{enumerate}[(1)]

\item\label{Step:psS} \textit{The states in $S_i$ are precisely $\set{q\mid \rzug{q,i} \in \G}$.}\\
This follows by the definitions of $\sigma$-successors of preordered subsets and 
$\sigma$-successors of level rankings.

\item\label{Step:pPreorder} \textit{The preorder $\preceq_i$ is the projection of $\preceq$
onto states occurring at level $i$.} \\
This follows from Lemma~\ref{Lexicographic_Ordering} and the definition of $\sigma$-successors. 

\item\label{Step:pTransitions} \textit{For every $i \leq k$, state $q \in S_i$, and $s \in S_{i+1}$, it
holds that $s \in \rho_{\zug{S_i,\preceq_i}}(q,\sigma_i)$ iff $E'(\zug{q,i},\zug{s,i\splus1}).$}\\
This follows from the definitions of $E'$ and $\rho_{\zug{S_i,\preceq_i}}$. 

\item\label{Step:sTransitions} \textit{For every $i > k$ and $q,s \in S_i$, if $f_i(q) >
f_i(s)$, then $\zug{q,i} \prec_i \zug{s,i}$.}

\item\label{Step:sOddTight} \textit{For every $i > k$ and $q,s \in S_i$, if $f_i(s)$ is
odd and $\zug{q,i} \prec_i \zug{s,i}$, then $f_i(q) > f_i(s)$.}\\
This and \ref{Step:sTransitions} are proven below.

\item\label{Step:OddBottom} \textit{For every $i \geq k$ and $q \in S_i$, it holds that $f_i(q)$ is
even iff $\lambda^k(\zug{q,i})=\bot$. }\\
This follows from the definition of $\lambda^k$, which assigns $\bot$ to $F$-nodes and their
descendants in $\Gprime$, and $f_i$, which assigns even ranks to states in $F$.  By
\ref{Step:sTransitions}, the parent of a node in $\Gprime$ will be the parent with the lowest rank.
Thus the descendants of $F$-nodes in $\Gprime$ will inherit the even
rank of their parent.\vspace{6 pt}
\end{enumerate}

\noindent We simultaneously prove \ref{Step:sTransitions} and \ref{Step:sOddTight} by induction.  As a base
case, both hold from the definition of $\torank$. As the inductive step, assume both hold for level
$i$.
To prove step~\ref{Step:sTransitions}, take two states $q,s \in S_{i+1}$ where $f_{i+1}(q) >
f_{i+1}(s)$.  Each state has a parent in $\Gprime$, i.e. a $q'$ and $s'$ so that $E'(q',q)$ and
$E'(s',s)$. By the inductive hypothesis, this implies 
$f_i(q')=\min(\set{f_i(q')\mid q \in \rho(q',\sigma_i)})$ and $f_i(s') = \min(\set{f_i(s')\mid s \in
\rho(s',\sigma_i)})$.  We analyze two cases. When $f_i(q') > f_i(s')$, by the inductive hypothesis
we have $\zug{q',i} \prec_i \zug{s',i}$.  Since $E'(q',q)$ and $E'(s',s)$, by
Lemma~\ref{Gprime_Captures_Profiles} this implies $\zug{q,i\splus1} \prec_{i\splus 1} \zug{s,i\splus1}$.  Alternately,
when $f_i(q') = f_i(s')$, then for $f_{i+1}(q)>f_{i+1}(s)$ to hold, it must be that $f_i(q')$ is
odd, $s \in F$, and $q \not \in F$. Since $f_i(q')=f_i(s')$ is odd, by the inductive hypothesis we
have that $\zug{q',i} \equiv \zug{s',i}$.  By Lemma~\ref{Gprime_Captures_Profiles} we then have
$h_{\zug{q,i\splus1}}=h_{\zug{q',i}}0 < h_{\zug{s,i\splus1}}=h_{\zug{s',i}}1$.

To prove step~\ref{Step:sOddTight}, consider when $f_{i+1}(s)$ is odd and $\zug{q,i\splus1} \prec
\zug{s,i\splus1}$. This implies that $h_\zug{s,i\splus1}=h_\zug{s',i}0$. Thus in order for 
$\zug{q,i\splus1} \prec_{i\splus 1} \zug{s,i\splus1}$ to hold, $\zug{q',i} \prec_i \zug{s',i}$ must hold. By the inductive hypothesis, this
implies $f_i(q') > f_i(s')$. Before the $\tighten$ function reduces ranks, since 
$f_{i+1}(q) = \lfloor f_i(q') \rfloor_{even}$, and $f_{i+1}(s)$ is odd, it must be that $f_{i+1}(q) > f_{i+1}(s)$.
The \tighten function can shift $f_{i+1}(q)$ down more than $f_{i+1}(s)$ only when an odd rank
between $f_{i+1}(s)$ and $f_{i+1}(q)$ becomes empty. Since this odd rank must be two greater than
$f_{i+1}(s)$, reducing $f_{i+1}(q)$ by 2 cannot change that $f_{i+1}(q) > f_{i+1}(s)$.  We now proceed with the proof of Theorem~\ref{SR_Complement}. 

In one direction, assume the run
$\zug{S_0,\preceq_0},\ldots,\zug{S_k,\preceq_k},$ $\rzug{f_{k+1},O_{k+1}},\rzug{f_{k+2},O_{k+2}},\ldots$ is accepting.  We construct
a ranking ${\bf r}$ of $\G$ as follows. For all nodes $u$ on level $i \leq k$, ${\bf r}(u)=m$. For all nodes
$\zug{q,i}$ where $i > k$, ${\bf r}(\zug{q,i})=f_i(q)$.  We note that each state is given at most the
minimum rank of all its parents, and that no state in $F$ is given an odd rank, thus ${\bf r}$ is in fact
a ranking. That ${\bf r}$ is an odd ranking follows from the cut-point construction.

In the other direction, assume $\G$ is a rejecting run \DAG. By Lemma~\ref{Slices_Provide_Ranking}
there exists a $k$ so that ${\bf r}^k$ is an odd ranking. We construct a run
$\mcS_0,\ldots,\mcS_k,\rzug{f_{k+1},O_{k+1}},\rzug{f_{k+2},O_{k+2}},\ldots$, which is uniquely defined
by the transition relation of Definition~\ref{Slice_Rank}. Further, the transition relation of
Definition~\ref{Slice_Rank} is total, so this run is infinite. To demonstrate that this run is
accepting, we will prove below that for every $i > k$ and $q \in S_i$, it holds that $f_i(q)={\bf
r}^k(\zug{q,i})$.  Since ${\bf r}^k$ is an odd ranking and the cut-point construction is identical
to that of Definition~\ref{KVDef}, this is sufficient to show the run is accepting.

Recall that if $\lambda(\zug{q,i})=\bot$, then ${\bf r}^k(\zug{q,i})=2\prerank(\zug{q,i})$, and otherwise
${\bf r}^k(\zug{q,i})=2\prerank(\zug{s,i})\splus1$. We can thus use \ref{Step:OddBottom} to simplify our claim. 
It suffices to show that for every $i > k$ and $q \in S_i$, we have 
$\prerank(\zug{q,i})= \lfloor f_{i}(q)/2 \rfloor$.  We proceed by induction over $i > k$.  As the
base case, consider a node $\zug{q,k}$. Recall that $\prerank(\zug{q,k})=
\abs{\set{[v]\text{~\small\textbar~} \lambda^k(v)=\top, \zug{q,k} \prec_k v}}$. By
the definition of $\lambda^k$, a node on level $k$ is labeled $\bot$ only when it is an $F$-node.
All other nodes inherit the label of their parents, and every node on level $k$ is $\top$-labeled.
From \ref{Step:pPreorder}, we then have that 
$\prerank(\zug{q,k+1})= \abs{\set{[v]\text{~\small\textbar~} v \in S \setminus F, u \prec v}}$,
which is the definition of $\prerankstate(q)=\lfloor f_{i}(q)/2 \rfloor$.

Inductively, assume the claim holds for every $q \in S_i$. We show for every $s \in S_{i+1}$, it
holds that $\prerank(\zug{s,i\splus1})= \lfloor f_{i+1}(s)/2 \rfloor$.
Let $q$ be the parent of $s$ in
$\Gprime$, i.e. $E'(q,s)$.  Take the set $P=\set{[v]\mid \lambda^k(v)=\top, \zug{q,i} \prec_i v}$.  of
$\top$-labeled equivalence classes greater than $Q$, By the inductive hypothesis, 
$\lfloor f_{i}(q)/2 \rfloor= \prerank(\zug{q,i})= \abs{P}$.  By the definition of ${\bf r}^k$, each 
$[v] \in P$ has a unique odd rank assigned to each of its elements.  By \ref{Step:sOddTight}, for each $[v]$
this odd rank is smaller than $f_i(q)$. Consider the subset of $P$
given by
$P_s = \{[v]\mid [v] \in P,\break [v]\hbox{ has $\top$-labeled child
    class on level }i\splus1\}$.  Define
$P_e=P\setminus P_s$ to be the complementary set: pipes that die on level $i$. By
\ref{Step:sOddTight}, before the $\tighten$ operation is applied, every element of $P_e$ has a
corresponding odd rank that is unoccupied on level $i\splus1$.  Since $q$ is clearly not in an element of
$P_e$, this odd rank must be less than $\lfloor f_{i}(q) \lfloor_{even}$.  Thus the final rank
assigned to $s$, after $\tighten$, is either
$f_{i}(q)-2\abs{P_e}$ or $\lfloor f_{i}(q)-2\abs{P_e} \rfloor_{even}$.  In both cases 
$\lfloor f_{i+1}(s)/2 \rfloor = \lfloor f_{i}(q)/2 \rfloor - \abs{P_e}$.  By the inductive hypothesis this is
equivalent to $\prerank(\zug{q,i}) - \abs{P_e} = \abs{P}-\abs{P_e}$.  By the definition of $P_s$ and
$P_e$, $\abs{P} - \abs{P_e} = \abs{P_s}$.  By Lemma~\ref{Gprime_Captures_Profiles}, every
$\top$-labeled child of a class in $P_s$ is lexicographically larger than $\zug{s,i\splus1}$. As every
$\top$-labeled child must have a unique parent in $P_s$, we conclude that
$\abs{P_s}=\prerank(\zug{s,i\splus1})$.
\cbend
\end{proof}

\noindent {\bf Analysis:} Like the tight-ranking construction in Section \ref{Sect:Tight}, the
automaton $\A_L$ operates in two stages. In both, the second stage is the set of tight level
rankings and obligation sets. The tight-ranking construction uses
sets of states in the first stage, and is bounded by the size of the second stage: $(0.96n)^n$
\cite{FKV06}.  The automaton $\A_L$ replaces the first stage with preordered subsets.
As the number of preordered subsets is $O((\frac{n}{e\ln 2})^n) \approx (0.53n)^n$ \cite{Var80}, the
size of $\A_L$ remains bounded by $(0.96n)^n$. This can be improved to $(0.76n)^n$: see
Section~\ref{App:Variants} and \cite{Sch09}. 
Further, $\A_L$ has a very restricted transition relation: states in the first stage only
guess whether to remain in the first stage or move to the second, and have nondeterminism of
degree 2. States in the second stage are deterministic.  Thus the transition relation is linear in
the number of states and size of the alphabet, and $\A_L$ is deterministic in the limit.

\section{Variations on $\A_{L}$}\label{App:Variants}
\cbstart

In this section we present two variations of $\A_{L}$: one based on Schewe's variant of the
rank-based construction that achieves a tighter bound; and one that is amenable to Tabakov and
Vardi's symbolic implementation of the rank-based construction.
Schewe's construction alters the cut-point of the rank-based construction to check only one even
rank at a time. Doing so drastically reduces the size of the cut-point: intuitively, we can avoid
carrying the obligation set explicitly. Instead we could carry the current rank $i$ we are checking,
and add to the domain of our ranking function a single extra symbol $c$ that indicates the state is
currently being checked, and thus is of rank $i$.  For an analysis of the resulting state space,
please see \cite{Sch09}. For clarity , we do not remove the obligation set from the construction.
Instead, states in this variant of the automaton carry with them the index $i$, and in a state
$\zug{f,O,i}$, it holds that $O \subseteq \set{q\mid f(q)=i}$. For a level ranking $f$, let
$\maxrank(f)$ be the largest rank in $f$. Note that $\maxrank(f)$, for a tight ranking, is always
odd.

\begin{defi}\label{Stable_Rank_Schewe}
For an NBW $\A = \zug{\Sigma, Q, Q^{in}, \rho, F}$, let $\A_{Schewe}$ be the NBW\\
$\zug{\Sigma, {\PSQ} \cup  ({\LR}^m \times 2^Q \times N), Q^{in}_{L}, \rho_{Sch}, F_{Sch}}$, where
\begin{iteMize}{$\bullet$}
\item $\rho_{Sch}(\mcS,\sigma) = \set{\rzug{\torank(\mcS'), \emptyset, 0} } \cup \set{\mcS'}$,
where $\mcS'$ is the $\sigma$-successor of $\mcS$.
\item \begin{tabbing}
$\rho_{Sch}(\rzug{f,O,i},\sigma) = \{\rzug{f',O',i'}\}$ \=where\\
           \>$f'$ is the $\sigma$-successor of $f$\\
           \>$i'=\begin{cases}
                i & \text{ if }O \neq \emptyset,\\
                (i\splus2)\bmod(\maxrank(f')\splus1)&  \text{ if } O = \emptyset,\\
           \end{cases}$\\
           \>and $O'= \begin{cases}
\rho(O,\sigma) \setminus odd(f') & \text{if }O \neq \emptyset,\\
\set{q\mid f'(q)=i'}                         & \text{if }O=\emptyset.\\
\end{cases}$
\end{tabbing}
\item $F_{Sch} = {\LR}^m \times \set{\emptyset} \times \set{0}$
\end{iteMize}
\end{defi}

\begin{theorem}\label{Schewe_Complement}
For every NBW $\A$, it holds that $L(\A_{Schewe})=\overline{L(\A)}$.
\end{theorem}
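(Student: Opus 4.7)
The plan is to reduce Theorem~\ref{Schewe_Complement} to Theorem~\ref{SR_Complement} by isolating the only difference between $\A_{Schewe}$ and $\A_L$: the cut-point machinery. The first-stage dynamics (preordered subsets and their $\sigma$-successors), the transition $\torank$ from preordered subsets into the ranking stage, and the rank-update rule (the $\sigma$-successor of a level ranking, including the effect of $\tighten$) are reused verbatim from Definition~\ref{Slice_Rank}. Every structural claim in the proof of Theorem~\ref{SR_Complement} that ties the sequence $\mcS_0,\ldots,\mcS_k,f_{k+1},f_{k+2},\ldots$ back to the run \DAG $\G$ therefore transfers without change; only the acceptance condition needs to be re-examined.

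For the $\overline{L(\A)}\subseteq L(\A_{Schewe})$ direction, I would begin with a word $w\notin L(\A)$, fix a level $k$ supplied by Lemma~\ref{Slices_Provide_Ranking} at which ${\bf r}^k$ is an odd ranking, and construct the unique run of $\A_{Schewe}$ that stays in the first stage through level $k$ and enters the second stage at level $k\splus 1$ via $\torank$; thereafter the ranking, obligation, and index components are forced by $\rho_{Sch}$. The task reduces to showing that this run visits $F_{Sch}$ infinitely often. Because every path in $\G$ eventually becomes trapped in an odd rank, for each even value $r$ only finitely many states in the level rankings $f_j$ can remain at rank $r$ after some point; hence whenever the cycling index settles on $i=r$ and reseeds $O$ with $\set{q \mid f(q)=r}$, the obligation empties in finitely many steps and $i$ advances by $2$ modulo $\maxrank(f')\splus 1$. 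Since tight rankings have odd $\maxrank$, the index cycles back to $0$ infinitely often, yielding infinitely many visits to $F_{Sch}$.

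For the converse direction, suppose $\A_{Schewe}$ has an accepting run on $w$ and build a ranking ${\bf r}$ of $\G$ exactly as in the proof of Theorem~\ref{SR_Complement}: rank $m$ up to level $k$, then copy $f_j$ thereafter. The checks that ${\bf r}$ is a ranking (no $F$-node with an odd rank, monotonicity along edges, boundedness by $m$) are unchanged. To conclude that ${\bf r}$ is odd, I would observe that an accepting run returns to $\rzug{\cdot,\emptyset,0}$ infinitely often, which forces the index $i$ to sweep through every even value up to $\maxrank(f_j)$ infinitely often; each sweep past a value $r$ drains the corresponding obligation, and draining is possible only when every member has reached an odd rank or has died. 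Consequently no path in $\G$ can be eventually trapped at an even rank, and Lemma~\ref{Odd_Ranking_Rejecting} delivers $w\notin L(\A)$.

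The main obstacle I anticipate is that $\maxrank(f_j)$ is not constant along the run: $\tighten$ can collapse adjacent even ranks and shrink the modulus in $(i\splus 2)\bmod(\maxrank(f')\splus 1)$, so the set of even indices through which the cut-point cycles is itself dynamic. Making the cycling argument rigorous requires showing that no persistent even rank is ever skipped, which should follow from the fact already used in the proof of Theorem~\ref{SR_Complement} that $\tighten$ never merges two odd ranks, combined with monotonicity of rank along paths; any even rank that remains occupied across a sweep will still be targeted the next time the index sweeps past.
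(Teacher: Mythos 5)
Your proposal is correct and takes essentially the same route as the paper: the paper likewise reduces to Theorem~\ref{SR_Complement} by fixing the guess $k$, observing that the level-ranking sequences of $\A_L$ and $\A_{Schewe}$ coincide from level $k\splus1$ on, and showing the two cut-point mechanisms accept together via the same cycling-index argument (an obligation stuck forever at a persistent even rank in one run yields one in the other, and conversely an index that keeps cycling must revisit every persistently occupied even rank). The subtlety you flag about $\maxrank$ shrinking under $\tighten$ is genuine but is treated no more explicitly in the paper, which simply argues by contraposition that if the index never returns to the stuck even rank then it has stopped cycling, i.e.\ some obligation never empties; your observation that a persistently occupied even rank keeps $\maxrank$ (and hence the modulus) above it is exactly what is needed to make that step precise.
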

\begin{proof}
Given a word $w$, we relate the runs of $\A_{Schewe}$ and $\A_L$. As both
automata are comprised of two internally deterministic stages, with a
nondeterministic transition, each index $k$ defines a unique run for each
automaton.  As the first stage of both automata are identical, and the second
stage is deterministic, given a fixed $k$ let 
\[p_L=\zug{S_0,\preceq_0},\ldots,\zug{S_{k},\preceq_{k}},\rzug{f_{k+1},O_{k+1}},\rzug{f_{k+2},O_{k+2}},\ldots\]
be the run of $\A_L$ on $w$ that moves to the second stage on the $k$th transition, and let
the corresponding run of $\A_{Schewe}$ be
\[p_{Sch}\zug{S_0,\preceq_0},\ldots,\zug{S_{k},\preceq_{k}},\rzug{f'_{k+1},O'_{k+1},n_{k+1}},\rzug{f'_{k+2},O'_{k+2},n_{k+2}},\ldots\]
We show that $p_L$ is accepting iff $p_{Sch}$ is accepting, or more precisely that
$p_L$ is rejecting iff $p_{Sch}$ is rejecting.  First, we note that
the level rankings $f_{k+1},f_{k+2},\ldots$ and $f'_{k+1},f'_{k+2},\ldots$ in both automata are
defined by $\torank(\zug{S_k,\preceq_k})$ and the $\sigma$-successor relation, and thus for every $j
> k$, it holds $f_j=f'_j$. 

In one direction, assume that $p_{Sch}$ is rejecting.  This implies there
is some $j > k$ so that for every $j' > j$, $O'_{j'}$ is non-empty. In turn,
this implies that there is a sequence $q_j,q_{j+1},\ldots$ of states so that,
for every $j' \geq j$, we have that $q_{j'} \in O'_{j'}$, that
$f_{j'}(q_{j'})=n_j$, and that $q_{j'+1} \in \rho(q_{j'},w_{j'})$. If there is
no $l > j$ where $O_{l}=\emptyset$, then we have that $p_L$ is rejecting.
Alternately, if there is such a $l > j$, then $q_{l+1} \in O_{l+1}$, and for
every $l' > l$ we have $q_{l'} \in O_{l'}$. Again, this implies $p_L$ is
rejecting.

In the other direction, assume that $p_L$ is rejecting. This implies there is some $j > k$ so that for
every $j' > j$ the set $O_{j'}$ is non-empty. In turn, this implies that there is an
even rank $i$ and sequence $q_j,q_{j+1},\ldots$ of states so that, for every
$j' \geq j$, we have that $q_{j'} \in O_{j'}$, that $f_{j'}(q_{j'})=i$, and
that $q_{j'+1} \in \rho(q_{j'},w_{j'})$. We now consider the indexes $n_{j'}$
in $p_{Sch}$. If there is some $j' > j$ where $n_{j'}=i$, then for every $l\geq
j'$, it will hold that $q_{l} \in O'_{l}$, and $p_{Sch}$ will be rejecting.
Alternately, if there is no $j' > j$ where $n_{j'}=i$, then it must be that the
indexes $n_{j'}$ stops cycling through the even ranks. This implies the
obligation set stops emptying, and therefore that $p_{Sch}$ must be rejecting.
\end{proof}

To symbolically encode a deterministic-in-the-limit automaton, we avoid storing the preorders.  To
encode the preorder in a BDD as a relation would require a quadratic number of variables, increasing
the size unacceptably. Alternately, we could associate each state with its index in the preorder.
Unfortunately, calculating the index of each state in the succeeding preorder would require a global
compacting step, to remove indices that had become empty. To handle this difficulty, we simply store
only the subset in the first stage, and transition to an arbitrary level ranking when we move to the
second stage. This maintains determinism in the limit, and cannot result in false accepting run: we
can always construct an odd ranking from the sequence of level rankings. The construction and a
small example encoding are provided below.

\begin{defi}\label{Symbolic_DetLim}
For an NBW $\A = \zug{\Sigma, Q, Q^{in}, \rho, F}$, let $\A_{Symb}$ be the NBW\\
$\zug{\Sigma, 2^Q \cup ({\LR}^m \times 2^Q), Q^{in}, \rho_{Symb}, {\LR}^m \times \set{\emptyset}}$, where
\begin{iteMize}{$\bullet$}
\item $\rho_{Symb}(S,\sigma) = \set{\rho(S,\sigma)} \cup \set{\rzug{f, \emptyset} \mid f \in {\LR}^m \text{ and for all } q \in Q,~
f(q) \neq \bot \text{ iff } q \in \rho(S,\sigma) }$.
\item $\rho_{Symb}(\rzug{f,O},\sigma) = \rho_L(\rzug{f,O},\sigma)$
\end{iteMize}
\end{defi}

\begin{theorem}\label{Symb_Complement}
For every NBW $\A$, it holds that $L(\A_{Symb})=\overline{L(\A)}$.
\end{theorem}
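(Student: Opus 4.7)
My plan is to lift the proof of Theorem~\ref{SR_Complement} to the symbolic setting. The automaton $\A_{Symb}$ is obtained from $\A_L$ by dropping preorders in the first stage and relaxing the jump transition to allow any $f\in\LR^m$ of matching support (instead of specifically $\torank(\mcS')$); the second stage, including the obligation-based acceptance condition, is unchanged. I will therefore prove each inclusion by translating runs between the two automata, invoking Theorem~\ref{SR_Complement} in the easy direction and the Miyano--Hayashi cut-point analysis from its proof in the hard direction.

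For the inclusion $\overline{L(\A)}\subseteq L(\A_{Symb})$, I would take $w\notin L(\A)$ and apply Theorem~\ref{SR_Complement} to obtain an accepting run $\zug{S_0,\preceq_0},\ldots,\zug{S_k,\preceq_k},\rzug{f_{k+1},\emptyset},\rzug{f_{k+2},O_{k+2}},\ldots$ of $\A_L$ on $w$. Projecting away the preorders yields $S_0,\ldots,S_k,\rzug{f_{k+1},\emptyset},\rzug{f_{k+2},O_{k+2}},\ldots$, which is a legal run of $\A_{Symb}$: subset transitions in both automata advance by $\rho(\cdot,\sigma)$; the jump is permitted by Definition~\ref{Symbolic_DetLim} because $f_{k+1}=\torank(\zug{S_{k+1},\preceq_{k+1}})$ belongs to $\LR^m$ and has support exactly $S_{k+1}=\rho(S_k,\sigma_k)$; and the second-stage transitions, including obligation-set updates, are literally the same as in $\A_L$. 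Hence the run remains accepting.

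For the inclusion $L(\A_{Symb})\subseteq\overline{L(\A)}$, I would take an accepting run $S_0,\ldots,S_k,\rzug{f_{k+1},\emptyset},\rzug{f_{k+2},O_{k+2}},\ldots$ of $\A_{Symb}$ and extract an odd ranking of the run \DAG\ $\G$ so as to apply Lemma~\ref{Odd_Ranking_Rejecting}. A routine induction over $i$ shows that $S_i$ coincides with $\set{q\mid\zug{q,i}\in V}$ for $i\le k$, and that the support of $f_i$ coincides with $\set{q\mid\zug{q,i}\in V}$ for $i>k$. I then define ${\bf r}(\zug{q,i})=m$ for $i\le k$ and ${\bf r}(\zug{q,i})=f_i(q)$ for $i>k$. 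That ${\bf r}$ is a ranking bounded by $m$ follows exactly as in the proof of Theorem~\ref{SR_Complement}: membership of $f_{k+1}$ in $\LR^m$ together with the $\lfloor\cdot\rfloor_{even}$ step of the $\sigma$-successor ensures $F$-nodes receive even ranks, and the minimum-over-predecessors definition of the $\sigma$-successor guarantees ranks do not increase along \DAG\ edges from level $k$ onward.

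The main obstacle is the argument that ${\bf r}$ is \emph{odd}, which is the familiar Miyano--Hayashi cut-point analysis. Since the run is accepting, $O_i=\emptyset$ happens infinitely often. Suppose for contradiction that some infinite path of $\G$ is trapped in an even rank $e$ from some level $J>k$ onward; pick any reset level $r\ge J$, so $O_r=even(f_r)$. The path's node $v_r$ lies in $O_r$ because its rank $e$ is even, and each subsequent node $v_{r'}$ stays in $O_{r'}=\rho(O_{r'-1},\sigma_{r'-1})\setminus odd(f_{r'})$ because it has rank $e$ and is therefore not in $odd(f_{r'})$. Hence $O_{r'}$ never empties again after $r$, contradicting the requirement of infinitely many resets. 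Every infinite path is therefore trapped in an odd rank, ${\bf r}$ is an odd ranking, and $w\notin L(\A)$ by Lemma~\ref{Odd_Ranking_Rejecting}. The key observation making this go through despite the nondeterministic choice of $f_{k+1}$ in $\A_{Symb}$ is that the cut-point argument depends only on the rank non-increase and $F$-even properties of the ranking, neither of which relies on the initial level ranking being the specific $\torank$-derived one used in $\A_L$.
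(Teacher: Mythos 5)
Your proposal is correct and follows essentially the same route as the paper's own proof: the forward inclusion projects an accepting $\A_L$-run onto $\A_{Symb}$ by dropping preorders and noting the relaxed jump transition admits $\torank(\mcS')$, and the reverse inclusion builds the ranking ${\bf r}(\zug{q,i})=m$ for $i\le k$ and $f_i(q)$ for $i>k$, then uses the Miyano--Hayashi cut-point argument to show no path is trapped in an even rank, concluding via Lemma~\ref{Odd_Ranking_Rejecting}. No substantive differences to report.
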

\begin{proof}
In one direction, assume $w \in \overline{L(\A)}$. This implies $w \in L(\A_L)$, and thus there exists an
accepting run
$\zug{S_0,\preceq_0},\ldots,\zug{S_{k},\preceq_{k}},\rzug{f_{k+1},O_{k+1}},\rzug{f_{k+2},O_{k+2}},\ldots$
of $\A_L$ on $w$. We show that $S_0,\ldots,S_k,\rzug{f_{k+1},O_{k+1}},\rzug{f_{k+2},O_{k+2}},\ldots$
is an accepting run of $\A_{Symb}$ on $w$. We note that in the second stage transitions and
accepting states in $\A_{Symb}$ are identical to $\A_L$. Thus to show that this is an accepting run
$\A_{Symb}$, we only need show that the run is valid from $0$ to $k+1$,

By definition, $S_0=Q^{in}$ is the initial state of $\A_{Symb}$. For every $i,~0 \leq i <
k$, it holds that $S_{i+1}=\rho(S_i,w_i) \in \rho_{Symb}(S_i,w_i)$. Finally, consider the transition
from $S_k$ to $\rzug{f_{k+1},O_{k+1}}$. Let $\zug{S_{k+1},\preceq_{k+1}}$ be
the $\sigma$-successor of $\zug{S_k,\preceq_k}$.  By definition, $S_{k+1}=\rho(S_k,w_k)$. 
By the transition relation of $\A_L$,  we have
$f_{k+1}=\torank(\zug{S_{k+1},\preceq_{k+1}})$ and $O_{k+1}=\emptyset$.  By the definition of
$\torank$, for every $q \in Q$ it holds that $f_{k+1}(q)=\bot$ iff $q \not \in S_{k+1}$.
Thus 
$\zug{f_{k+1},O_{k+1}} \in \rho_{Symb}(S_k)$, and
$S_0,\ldots,S_k,\rzug{f_{k+1},O_{k+1}},\rzug{f_{k+2},O_{k+2}},\ldots$ is an accepting run of
$\A_{Symb}$ on $w$. 

In the other direction,  if $w \in L(\A_{Symb})$, there is an accepting run
$S_0,\ldots,S_k,\rzug{f_{k+1},O_{k+1}},$ $\rzug{f_{k+2},O_{k+2}},\ldots$ of $\A_{Symb}$ on $w$.
From this run we construct an odd ranking of $\G$, which implies $w \in \overline{L(\A)}$.  Define
the ranking function $\bf{r}$ so that for every $\rzug{q,i} \in \G$: if $i \leq k$ then
$\textbf{r}(\rzug{q,i})=m=2\abs{Q \setminus F}$; and if $i > k$ then
$\textbf{r}(\rzug{q,i})=f_i(q)$.  As demonstrated in the proof of Theorem \ref{SR_Complement}, the
definition of $\sigma$-successors and $\G$ implies that when $i > k$, it holds that $f_i(q) \neq
\bot$. Similarly, by the definition of $\sigma$-successors no path in $\G$ can increase in rank under
$\textbf{r}$. We conclude that $\textbf{r}$ is a valid ranking function.

To demonstrate that $\textbf{r}$ is an odd ranking, assume by way of
contradiction that there is a path $\zug{q_0,0}, \zug{q_1,1}, \ldots$ in $\G$
that gets trapped in an even rank.  Let $j$ be the point at which this path
gets trapped, or $k+1$, whichever is later. If there is no $j' > j$ such that
$O_{j'} = \emptyset$, then there is no accepting state after $j$, and the run
would not be accepting. If there is such a $j'$, then $O_{j'+1}$ would contain
$q_{j'+1}$, as $f_{j'+1}(q_{j'+1})$ is even. At every point $j'' > j'+1$, the
obligation set will contain $q_{j''}$, and thus there will be no accepting
state after $j'$, and the run would not be accepting.  However, we have that
the run is accepting as a premise. Therefore no path in $\G$ gets trapped in
an even rank, $\textbf{r}$ is an odd ranking, and by Lemma \ref{Odd_Ranking_Rejecting} $w \in \overline{L(\A)}$.
\end{proof}

As an example, Figure \ref{Fig:SMV_Encoding} is the SMV encoding of the complement of a two-state automaton.

\fvset{formatcom=\scriptsize}
{
\begin{SaveVerbatim}{SMVExample}
typedef STATE 0..1;  /* Size for complemented automaton: 2, maximum allowed rank = 2*/
module main() { 
 letter: {a,b};                   /* The transition letter */
 rank: array STATE of 0..3;       /* The value 3 represents bottom */
 phase : 0..1;                    /* The phase of the automaton, ranks 2 or 3 in phase 0*/
 subset: array STATE of boolean;  /* The obligation set vector */
 init(rank) := [2,2,2,2];         /* 2 to initial states, 3 to others */
 init(subset) := [1,1,1,1];       /* initially rejecting */
 init(phase) := 0;
 next(phase) := {i : i=0..1, i >= phase}; 

 /* Define the rank of states in the next time step. Cases fall through. */
 /* state 0 has transition from 0 on a and b */
 next(rank[0]) := case {
     rank[0]=3 : 3;
     next(phase)=0 : 2; 
     phase=0 & next(phase)=1 : {i : i=0..2, i <= rank[0]};
     phase=1 : rank[0];
 };

 /* 1 has transition from 1 on a and from 0 on b. 1 is accepting */
 next(rank[1]) := case {
    letter=a & rank[1]=3 : 3;
    letter=a & next(phase)=0 : 2;
    letter=a & phase=0 & next(phase)=1 : {i : i=0..2, i <= rank[1] & i in {0,2}};
    letter=a & phase=1 : {i : i=0..2, i in {rank[1], rank[1]-1} & i in {0,2}};
    letter=b & rank[0]=3 : 3;
    letter=b & next(phase)=0 : 2;
    letter=b & phase=0 & next(phase)=1 : {i : i=0..2, i <= rank[0] & i in {0,2}};
    letter=b & phase=1 : {i : i=0..2, i in {rank[0], rank[0]-1} & i in {0,2}};
 };

 
 /* Defining the transitions of the P-set */
 if (next(phase)=0)  {
     forall (i in STATE) next(subset[i]) := 1;
 } else {
   if (subset=[0,0,0,0]) { /* The P-set is empty */
     forall (i in STATE) next(subset[i]) := next(rank[i]) in {0,2};
   } else { /* The P-set is non-empty */
     if (letter=a) {
       next(subset[0]) := (subset[0]) & next(rank[0]) in {0,2};
       next(subset[1]) := (subset[1]) & next(rank[1]) in {0,2};
     } else { /* letter=b */
       next(subset[0]) := (subset[0]) & next(rank[0]) in {0,2};
       next(subset[1]) := (subset[0]) & next(rank[1]) in {0,2};
 }}} 
SPEC 0;
 FAIRNESS subset=[0,0,0,0];
}
\end{SaveVerbatim}
}

\begin{figure}[htbp]
	\fbox{
		\begin{minipage}{\textwidth}
			\BUseVerbatim{SMVExample}
		\end{minipage}
	}
			\caption{The SMV encoding of the $\A_{Symb}$, for the two-state
automaton consisting of states $p$ and $q$ of Figure \ref{Fig:Automaton}.}
			\label{Fig:SMV_Encoding}
\end{figure}

\cbend

\section{Discussion}

We have unified the slice-based and rank-based approaches by phrasing the former in the language of
run \DAGs. This enables us to define and exploit a retrospective ranking, providing a
deterministic-in-the-limit complementation construction that does not employ determinization.
Experiments show that the more deterministic automata are, the better they perform in practice
\cite{ST03}. By avoiding determinization, we reduce the cost of such a construction from $(n^2/e)^n$
to $(0.76n)^n$ \cite{Pit06}.

In addition, our transition generates a transition relation that is linear in the number of states
and size of the alphabet. Schewe demonstrated how to achieve a similar linear bound on the
transition relation, but the resulting relation is larger and is not deterministic in the limit
\cite{Sch09}.

As shown in Section~\ref{App:Variants}, the use of level rankings affords several improvements from
existing research on the rank-based approach. First, the cut-point construction of Miyano and
Hayashi \cite{MH84} can be improved. Schewe's construction only checks one even rank at a time,
reducing the size of the state space to $(0.76n)^n$, only an $n^2$ factor from the lower bound
\cite{Sch09}. As Schewe's approach does not alter the progression of the level rankings, it can be
applied directly to the second stage of Definition~\ref{Slice_Rank}.  The resulting construction
inherits the asymptotic state-space complexity of \cite{Sch09}. Second, symbolically encoding a
preorder is complicated.  In contrast, ranks are easily encoded, and the transition between ranks is
nearly trivial to implement in SMV \cite{TV07}.  By changing the states in first stage of $\A_L$
from preordered subsets to simple subsets, and guessing the appropriate transition to the second
stage, we obtain a symbolic representation while maintaining determinism in the limit.  This
approach sacrifices the linear-sized transition relation, but this is less important in a symbolic
encoding.  Finally, although not addressed in Section~\ref{App:Variants}, the subsumption relations
of Doyen and Raskin \cite{DR09} could be applied to the second stage of 
Definition~\ref{Slice_Rank}, while it is unclear if it could be applied at all to the slice-based
construction. 

From a broader perspective, we find it very interesting that the prospective and retrospective
approaches are so strongly related. Odd rankings seem to be inherently ``prospective,'' depending on
the descendants of nodes in the run \DAG. By investigating the slice-based approach, we are able to
pinpoint the dependency on the future to a single component: the $F$-free level. This suggests it
may be possible to use odd rankings for determinization, automata with other accepting conditions,
and automata on infinite trees.

\bibliographystyle{alpha}
\bibliography{ok,cav,sfogarty}

\newpage
\appendix

\section{Slices}\label{App:Slices}

The paper of \kahler et al. introduces the notion of the split tree, reduced
split tree, and skeleton of an automaton $\A$ and word $w$
 Trees are represented as
prefix-closed non-empty subsets of $\set{0,1}^*$.  In a tree $V$, a node $v0$
is called the left child of $v$, and $v1$ the right child of $v$. The root is
$\epsilon$. A node $v$ is said to be on level $i$ when $\abs{v}=i$. For a set
$L$, an $L$-labeled tree is a pair $\zug{V,l}$ where $V$ is a tree and $l : V
\to L$ is a label function. By abuse of notation, for an $L$-labeled tree
$T=\zug{V,l}$ and vertex $v$, say $v \in T$ when $v \in V$, and let
$T(v)=l(v)$. For two nodes $v$ and $v'$, say that $v' > v$ when
$\abs{v}=\abs{v'}$ and $v'$ is to the right of, i.e. lexicographically larger
than, $v$.

The \emph{split tree}, written $T^{sp}$, is the $2^{Q}$-labeled tree defined
inductively as follows\footnote{Compared to \cite{KW08}, these definitions 
reverse the left and right children.  This was done to match 
the paper.}.  As a base case, $\epsilon \in T^{sp}$ and
$T^{sp}(\epsilon)=Q^{in}$.  Inductively, given a node $v$ on level $i$, let
$P=T^{sp}(v)$.  If $\rho(P,w_i) \setminus F \neq \emptyset$ then $v0 \in
T^{sp}$ and $T^{sp}(v0)= \rho(P,w_i) \setminus F$.  Similarly, if $\rho(P,w_i)
\cap F \neq \emptyset$, then $v1 \in T^{sp}$ and $T^{sp}(v1) = \rho(P,w_i) \cap
F$.  As argued in \cite{KW08}, branches in $T^{sp}$ correspond to runs of $\A$
on $w$.  We gloss over this discussion and simply state that $w \in L(\A)$ iff
$T^{sp}(\A,w)$ has a branch that goes right infinitely often. 
The split tree is analogous to $G_w$. Each path $p$ to node $\rzug{q,i} \in
G_w$ corresponds to a node $v$ on level $i$ of $T^{sp}$ that contains $q$ in
its label. Edges in $G_w$ correspond to edges in $T^{sp}$, and thus paths in
$G_w$ correspond to paths in $T^{sp}$.

\begin{lem}\label{Split_Nodes_Correspond}
For every state $q$ and level $i$, $\rzug{q,i} \in \G_w$ iff there is at least one node $v \in
T^{sp}$ where $\abs{v}=i$ and $q \in T^{sp}(v)$.
\end{lem}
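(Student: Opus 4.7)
The plan is to prove the lemma by induction on the level $i$, exploiting the fact that both $\G_w$ and $T^{sp}$ are built by iteratively applying $\rho$ to the initial states, with $T^{sp}$ merely partitioning the outcome at each step according to membership in $F$.

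For the base case $i = 0$, both conditions unfold to membership in $Q^{in}$: by the definition of $V$ in the run \DAG, $\rzug{q,0} \in \G_w$ iff $q \in Q^{in}$, and the only level-$0$ node of $T^{sp}$ is $\epsilon$ with label $Q^{in}$.

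For the inductive step, assume the equivalence holds at level $i$ and consider a state $q$ at level $i+1$. In the forward direction, if $\rzug{q,i+1} \in \G_w$, then by definition of $E$ there exists $p$ with $\rzug{p,i} \in \G_w$ and $q \in \rho(p, \sigma_i)$; the induction hypothesis yields a node $v$ at level $i$ of $T^{sp}$ with $p \in T^{sp}(v)$, so $q \in \rho(T^{sp}(v), \sigma_i)$, and the inductive construction of $T^{sp}$ places $q$ either in $T^{sp}(v0)$ or in $T^{sp}(v1)$ according to whether $q \notin F$ or $q \in F$. Conversely, if $q \in T^{sp}(v)$ for some level-$(i{+}1)$ node $v = v'c$ with $c \in \{0,1\}$, then by construction $q \in \rho(T^{sp}(v'), \sigma_i)$, so some $p \in T^{sp}(v')$ has $q \in \rho(p, \sigma_i)$; the induction hypothesis gives $\rzug{p,i} \in \G_w$, and then the edge condition ensures $\rzug{q,i+1} \in \G_w$.

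There is no real obstacle here: the only subtlety is that a single state $q$ on level $i+1$ of $\G_w$ may correspond to a node in $T^{sp}$ reached by either a left or right branch (depending on whether $q \in F$), and the inductive construction of $T^{sp}$ is exactly designed to create whichever of $v0, v1$ is needed as soon as the corresponding intersection with $F$ or its complement is non-empty. Thus the equivalence propagates cleanly through the induction, completing the proof.
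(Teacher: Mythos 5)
Your proof is correct and follows essentially the same route as the paper's: induction on the level $i$, with the base case reducing both sides to membership in $Q^{in}$, and the inductive step passing a predecessor state between the run \DAG and the split tree via the case split on whether the successor lies in $F$.
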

\begin{proof}
We prove this by simple induction over $i$. As the base case we have that $\epsilon \in T^{sp}(v)$
and $T^{sp}(\epsilon)=Q^{in}$, while by definition $\zug{q,0} \in \G_w$ iff $q \in Q^{in}$. Thus our
lemma holds for $i=0$.

Inductively, assume that the lemma holds for $i=1$, and let $q' \in Q$. In one direction, if
$\rzug{q',i+1} \in \G_w$, then there is a run $p$ so that $p_{i+1}=q'$. By the inductive hypothesis,
there is a node $v \in T^{sp}$ where $\abs{v}=i$ and $p_i \in T^{sp}(v)$.  If $q' \not \in F$, then
$\rho(T^{sp}(v),w_i) \setminus F \neq \emptyset$, $v0 \in T^{sp}$, and $q' \in T^{sp}(v0) =
\rho(T^{sp}(v),w_i) \setminus F$.  Similarly, if $q' \in F$, then $\rho(T^{sp}(v),w_i) \cap F \neq
\emptyset$, $v1 \in T^{sp}$, and $q' \in T^{sp}(v1) = \rho(T^{sp}(v),w_i) \cap F$.

In the other direction, if there is a node $v' \in T^{sp}$ so that $\abs{v'}=i+1$ and $q' \in
T^{sp}(v)$, then $v'$ has a parent $v$ so that $\abs{v}=i$. As  $q' \in \rho(T^{sp}(v),w_i)$, there
is a state $q \in T^{sp}(v)$ so that $q' \in \rho(q,w_i)$. By the inductive hypothesis, $\rzug{q,i}
\in \G_w$, and by definition $q \in \rho(Q^{in},w_0\cdots w_{i-1})$. By the definition of a run,
this implies $q' \in \rho(Q^{in},w_0\cdots w_i)$, and thus $\rzug{q',i+1} \in \G_w$.
\end{proof}

\begin{lem}\label{Split_Paths_Correspond}
For every $q, q'$, and $i$, it holds that $\zug{\rzug{q,i},\rzug{q',i+i}} \in
E$ iff there are nodes $v$ and $v'$ in $T^{sp}$ so that $\abs{v}=i$, $v'$ is
a child of $v$, $q \in T^{sp}(v)$, and $q' \in T^{sp}(v')$. 
\end{lem}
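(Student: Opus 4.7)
The plan is to prove both directions by a direct, non-inductive argument that mirrors the structure of the statement: each direction takes a single-step relation (an edge in $\G$ or a parent-child pair in $T^{sp}$) and converts it into the other. The workhorses are Lemma~\ref{Split_Nodes_Correspond}, which places $\rzug{q,i} \in V$ precisely when $q$ occurs in the label of some split-tree node at level $i$, and the inductive definition of $T^{sp}$, which ties a child $v'$ of $v$ to states in $\rho(T^{sp}(v), w_i)$ split according to membership in $F$.

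For the forward direction, I would start from $(\rzug{q,i}, \rzug{q', i\splus 1}) \in E$ and apply the definition of $E$ to extract $\rzug{q,i} \in V$ and $q' \in \rho(q, w_i)$. Applying Lemma~\ref{Split_Nodes_Correspond} to $\rzug{q,i}$ then yields a node $v \in T^{sp}$ with $\abs{v}=i$ and $q \in T^{sp}(v)$, so $q' \in \rho(T^{sp}(v), w_i)$. A case split on whether $q' \in F$ picks the appropriate child: if $q' \not\in F$, the construction of $T^{sp}$ forces $v0 \in T^{sp}$ with $q' \in T^{sp}(v0)$, so $v' := v0$ works; if $q' \in F$, symmetrically $v' := v1$ works.

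For the backward direction, suppose $v, v'$ witness the hypotheses. Since $q \in T^{sp}(v)$ and $\abs{v}=i$, Lemma~\ref{Split_Nodes_Correspond} delivers $\rzug{q,i} \in V$. Because $v'$ is either $v0$ or $v1$, the inductive definition of $T^{sp}$ gives $q' \in T^{sp}(v') \subseteq \rho(T^{sp}(v), w_i)$, so $q'$ is a successor of \emph{some} state of $T^{sp}(v)$. The last move is to pin that predecessor down to $q$ itself, which yields $q' \in \rho(q, w_i)$ and hence $(\rzug{q,i}, \rzug{q', i\splus 1}) \in E$ by the definition of the edge relation.

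The hard part is exactly that last step in the backward direction: the containment $T^{sp}(v') \subseteq \rho(T^{sp}(v), w_i)$ is a fact about the set $T^{sp}(v)$ as a whole, not directly about the specific state $q$ pinned down in the hypothesis. Closing this gap requires a finer view of how branches in $T^{sp}$ represent individual runs of $\A$; the cleanest route is to use Lemma~\ref{Split_Nodes_Correspond} a second time to locate $\rzug{q', i\splus 1} \in V$, then reason within $\G$ about which $q$-parent of $\rzug{q', i\splus 1}$ the hypothesis singles out. This is also the step most tightly coupled to the correspondence between $\G$ and the split-tree framework of \cite{KW08}, so getting it right is what makes the lemma useful when translating between $\A_S$ and the slice-based automaton.
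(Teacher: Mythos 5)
Your forward direction is exactly the paper's argument and is fine. For the backward direction, you have correctly put your finger on the step that needs justification --- and in doing so you have actually located a gap in the paper's own proof, which simply asserts ``by the definition of $T^{sp}$, we have $q' \in \rho(q,w_i)$.'' That does not follow: the definition only gives $T^{sp}(v') \subseteq \rho(T^{sp}(v),w_i)$, i.e.\ $q' \in \rho(p,w_i)$ for \emph{some} $p \in T^{sp}(v)$. Your proposed repair does not work either, because the backward implication is false as literally stated. Take $Q^{in}=\{p,q\}$ with $p,q \notin F$, so the root $\epsilon$ has label $\{p,q\}$; if $q' \in \rho(p,w_0)\setminus F$ but $q' \notin \rho(q,w_0)$, then $v=\epsilon$ and $v'=0$ satisfy every hypothesis ($\abs{v}=0$, $v'$ a child of $v$, $q \in T^{sp}(v)$, $q' \in T^{sp}(v')$), yet $(\rzug{q,0},\rzug{q',1}) \notin E$. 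So the route you sketch --- locating $\rzug{q',1}$ in $V$ and then ``reasoning about which $q$-parent the hypothesis singles out'' --- cannot succeed: the hypothesis does not single out $q$ as a $\rho$-predecessor of $q'$ at all, only as a member of the label of $v$.

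What is true, and what the surrounding correspondence between $\G$ and $T^{sp}$ actually needs, is the weaker statement that $v'$ is a child of $v$ with $q' \in T^{sp}(v')$ iff \emph{some} $p \in T^{sp}(v)$ has $(\rzug{p,i},\rzug{q',i\splus1}) \in E$; with that restatement your backward direction follows immediately from $T^{sp}(v') \subseteq \rho(T^{sp}(v),w_i)$ together with Lemma~\ref{Split_Nodes_Correspond}, and your forward direction is unchanged. As written, however, your proposal explicitly leaves the decisive step of the backward direction unproved, so it is not a complete proof of the stated lemma --- though in fairness, no complete proof of the stated lemma exists, and the paper's version hides the same problem behind an unjustified assertion.
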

\begin{proof}
In one direction, let $q, q'$, and $i$ be such that $\zug{\rzug{q,i},\rzug{q',i+i}} \in E$. By the
definition of $E$, we have $q' \in \rho(q,w_i)$.  By Lemma \ref{Split_Nodes_Correspond}, there is a
node $v \in T^{sp}$ so that $\abs{v}=i$ and $q \in T^{sp}(v)$.  If $q' \not \in F$, then let $v' =
v0$, otherwise  $q' \in F$ and let $v' = v1$. In either case that $q' \in \rho(q,w_i)$ implies that
$v' \in T^{sp}$ and $q' \in T^{sp}(v')$. 

In the other direction, let $q, q'$, and $i$ be such that there are nodes $v$ and $v'$ in $T^{sp}$
where $\abs{v}=i$, $v'$ is a child of $v$, $q \in T^{sp}(v)$, and $q' \in T^{sp}(v')$. By Lemma
\ref{Split_Nodes_Correspond}, we have that $\rzug{q,i} \in \G_w,$ and $\rzug{q',i+1} \in \G_w$. By
the definition of $T^{sp}$, we have $q' \in \rho(q,w_i)$, and thus $\zug{\rzug{q,i},\rzug{q',i+i}} \in E$.
\end{proof}

The \emph{reduced split tree}, written $T^{rs}$, keeps only the rightmost 
instance of each state at each level of the tree. This bounds the width of
$T^{rs}$ to $n$. Formally, we define $T^{rs}$ inductively as follows. As a base
case, the root $\epsilon \in T^{rs}$, and $T^{rs}(\epsilon)=Q^{in}$.
Inductively, given a node $v \in T^{rs}$ on level $i$, let $P=T^{rs}(v)$ and
let $P'=\bigcup \set{\rho(T^{rs}(v') \mid v' \in T^{rs}\text{ and } v' < v}$.
If $(\rho(P,w_i) \setminus F) \setminus P' \neq \emptyset$ then $v0 \in T^{rs}$
and $T^{rs}(v0)= (\rho(P,w_i) \setminus F) \setminus P'$.  Similarly, if
$(\rho(P,w_i) \cap F) \setminus P' \neq \emptyset$, then $v1 \in T^{rs}$ and
$T^{rs}(v1) = (\rho(P,w_i) \cap F) \setminus P'$.
The reduced split tree is analogous to the profiles of nodes in $G_w$ and the edges in \Gprime.
Since paths in $G_w$ correspond to paths in $T^{sp}$, the lexicographically
maximal path through $G_w$ to a node $\rzug{q,i}$ corresponds to the rightmost
path through $T^{sp}$ to an instance of $q$ on level $i$.  This is the only
instance that remains in $T^{rs}$.

\begin{lem}\label{Reduced_Split_Nodes}
For every node $\rzug{q,i} \in \Gprime$, there is a node $v \in T^{rs}$ where
$\abs{v}=i$ and $q \in T^{rs}(v)$. Further, $h_{\rzug{q,i}}=0v$.
\end{lem}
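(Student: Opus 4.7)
My plan is to establish both claims of the lemma simultaneously by induction on $i$, using Lemma~\ref{Gprime_Captures_Profiles} to convert the recursive structure of profiles in $\Gprime$ into the recursive structure of the binary address of nodes in $T^{rs}$. The leading $0$ in $h_{\rzug{q, i}} = 0v$ corresponds to the $\Lambda$-label of the root of the tree, and the symbols of $v$ will correspond to the $F$/non-$F$ splits at each subsequent level.

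For the base case $i = 0$, I take $v = \epsilon$: since $T^{rs}(\epsilon) = Q^{in}$, any $q \in Q^{in}$ lies in $T^{rs}(\epsilon)$, and $h_{\rzug{q, 0}} = \Lambda(\rzug{q, 0}) = 0 = 0\epsilon$ under the standing convention (assumed WLOG) that initial states are non-accepting.

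For the inductive step, given $\rzug{q', i+1} \in \Gprime$, I pick a parent $\rzug{u, i}$ with $(\rzug{u, i}, \rzug{q', i+1}) \in E'$. Applying Lemma~\ref{Gprime_Captures_Profiles}, we have $h_{\rzug{q', i+1}} = h_{\rzug{u, i}} c$ where $c = 1$ iff $q' \in F$. The inductive hypothesis applied to $\rzug{u, i}$ yields a $v \in T^{rs}$ with $\abs{v} = i$, $u \in T^{rs}(v)$, and $h_{\rzug{u, i}} = 0v$, so $h_{\rzug{q', i+1}} = 0vc$. Since $q' \in \rho(u, w_i)$ and $u \in T^{rs}(v)$, the state $q'$ is a natural candidate for $T^{rs}(vc)$; the remaining task is to show that $q'$ is not removed by the reduction step.

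The main obstacle is this reduction-preservation argument, which I would carry out by contradiction. If some $v^* > v$ at level $i$ in $T^{rs}$ had $q' \in \rho(T^{rs}(v^*), w_i)$ with matching $F$-status, then any witness $u^* \in T^{rs}(v^*)$ with $q' \in \rho(u^*, w_i)$ would, by the inductive hypothesis, satisfy $h_{\rzug{u^*, i}} = 0v^* > 0v = h_{\rzug{u, i}}$, so $\rzug{u, i} \prec_i \rzug{u^*, i}$. But then by the definition of $E'$ the edge $(\rzug{u, i}, \rzug{q', i+1})$ would have been removed, contradicting our choice. Hence $q' \in T^{rs}(vc)$, and $vc$ is the desired witness that completes the induction.
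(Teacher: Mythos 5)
Your proof is correct and follows essentially the same route as the paper's: induction on $i$, using Lemma~\ref{Gprime_Captures_Profiles} to match the profile recursion to the tree addresses, with the same WLOG assumption that $Q^{in}\cap F=\emptyset$ for the base case. The only difference is that you spell out, via the contradiction with the definition of $E'$, why $q'$ survives the reduction step, where the paper simply asserts that lexicographic maximality of profiles corresponds to rightmostness in the tree; this is a welcome clarification, not a different argument.
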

\begin{proof}
By Lemma \ref{Split_Nodes_Correspond}, for every $\rzug{q,i}$. there is at least one node $v' \in
T^{rs}$ where $\abs{v'}=i$ and $q \in T^{rs}(v')$. Let $v$ be the rightmost such node. We must show
that $h_{\rzug{q,i}}=0v$, and we do so by induction over $i$. As a base case, we have that $q \in
Q^{in}$, $i=0$, and $v=\epsilon$. Since, by assumption, $Q^{in} \cap F = \emptyset$, we have
$h_\rzug{q,0}=0=0v$. Inductively, assume that this lemma holds for a fixed $i$, and let $q'$ be such
that $\rzug{q',i+1} \in \Gprime$. Let $b$ be $0$ if $q' \not \in F$, and $1$ if $q' \in F$. Since
there are no orphan nodes in $\Gprime$, we know that there is a $q$ on level $i$ such that
$\zug{\rzug{q,i},\rzug{q',i+1}} \in \Gprime$.  By Lemma \ref{Gprime_Captures_Profiles}, we know that
$h_\rzug{q',i+1}=h_\rzug{q,i}b$, and that $\rzug{q,i}$ has the lexicographically
maximal profile of all predecessors of $\rzug{q',i+1}$.   By the inductive hypothesis, there is a node $v \in T^{rs}$ so that
$\abs{v}=i$, $q \in T^{rs}(v)$, and $h_{\rzug{q,i}}=v$. Since lexicographic maximality in profiles
corresponds to being rightmost in the tree, this means $v$ is the rightmost node containing $q$ in
$T^{sp}$. Thus $vb$ is the rightmost node containing $q'$ in $T^{sp}$, and the only node containing
$q'$ in $T^{rs}$.
\end{proof}

\begin{lem}\label{Reduced_Split_Edges}
For every $q, q'$, and $i$, it holds that $\zug{\rzug{q,i},\rzug{q',i+i}} \in
E'$ iff there are nodes $v$ and $v'$ in $T^{rs}$ so that $\abs{v}=i$, $v'$ is
a child of $v$, $q \in T^{rs}(v)$, and $q' \in T^{rs}(v')$. 
\end{lem}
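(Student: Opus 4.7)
The plan is to derive this edge-level correspondence by combining Lemma \ref{Reduced_Split_Nodes}, which identifies profiles of DAG nodes with addresses of tree nodes in $T^{rs}$, with Lemma \ref{Gprime_Captures_Profiles}, which constrains how profiles evolve along edges of $\Gprime$. I would follow the template of the proof of Lemma \ref{Split_Paths_Correspond}, now restricted to $\Gprime$ and $T^{rs}$: the key observation is that an $E'$-edge extends the profile by exactly one bit, which on the tree side is precisely the parent-child relation.

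For the forward direction, I would assume $\zug{\rzug{q,i},\rzug{q',i{+}1}} \in E'$. By Lemma \ref{Reduced_Split_Nodes} applied at levels $i$ and $i{+}1$, there exist unique $v, v' \in T^{rs}$ such that $q \in T^{rs}(v)$, $q' \in T^{rs}(v')$, $h_{\rzug{q,i}}=0v$, and $h_{\rzug{q',i{+}1}}=0v'$. Since the edge is in $E' \subseteq E$, Lemma \ref{Gprime_Captures_Profiles} gives $h_{\rzug{q',i{+}1}} \in \set{h_{\rzug{q,i}}0,\ h_{\rzug{q,i}}1}$, forcing $v' \in \set{v0,\ v1}$, so $v'$ is a child of $v$.

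For the reverse direction, given $v, v' \in T^{rs}$ with $v'$ a child of $v$, $q \in T^{rs}(v)$, and $q' \in T^{rs}(v')$, Lemma \ref{Reduced_Split_Nodes} places $\rzug{q,i}, \rzug{q',i{+}1} \in \Gprime$ with profiles $0v$ and $0v' = 0vb$. The inductive construction of $T^{rs}$ guarantees $q' \in \rho(T^{rs}(v),\sigma_i)$, yielding an underlying $E$-edge into $\rzug{q',i{+}1}$ from some state in $T^{rs}(v)$. To conclude that edge is in $E'$, I would argue by contradiction: were there any predecessor $\rzug{\tilde q,i}$ of $\rzug{q',i{+}1}$ with $h_{\rzug{\tilde q,i}} > 0v$, then by lex-maximality of profiles one would have $h_{\rzug{q',i{+}1}} \geq h_{\rzug{\tilde q,i}} b > 0vb = 0v'$, contradicting Lemma \ref{Reduced_Split_Nodes}.

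The main obstacle is justifying, in the reverse direction, that $q' \in \rho(q,\sigma_i)$ for the specific $q \in T^{rs}(v)$—the $T^{rs}$ construction only guarantees that \emph{some} state of $T^{rs}(v)$ has this transition. This mirrors the analogous delicate step in the proof of Lemma \ref{Split_Paths_Correspond}, and is handled by observing that every state in $T^{rs}(v)$ shares the common profile $0v$, so the profile-based pruning defining $E'$ treats them interchangeably: any representative $E$-edge from $T^{rs}(v)$ into $\rzug{q',i{+}1}$ survives in $E'$, which suffices to witness the parent-child correspondence claimed by the lemma.
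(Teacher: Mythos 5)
Your proposal takes essentially the same route as the paper: the paper's entire proof of this lemma is the one\-/line citation of Lemmas~\ref{Reduced_Split_Nodes} and~\ref{Gprime_Captures_Profiles}, and your argument is precisely the expansion of that citation. Your forward direction is complete and correct. The obstacle you flag in the reverse direction is genuine, but it is a defect of the lemma \emph{statement} rather than of your argument: as written, the right-hand side does not require $q' \in \rho(q,\sigma_i)$ for the particular $q \in T^{rs}(v)$, so one can exhibit automata (e.g.\ $Q^{in}=\set{p,q}$ with $r\in\rho(p,a)$ but $r\notin\rho(q,a)$) in which the right-hand side holds while $\rzug{q,i}$ has no edge to $\rzug{q',i\splus1}$ in $E$ at all, let alone in $E'$. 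The same defect afflicts Lemma~\ref{Split_Paths_Correspond}, whose proof asserts ``by the definition of $T^{sp}$, we have $q' \in \rho(q,w_i)$'' when the definition only guarantees $q'\in\rho(T^{sp}(v),w_i)$. Your resolution --- all states of $T^{rs}(v)$ share the profile $0v$, no predecessor of $\rzug{q',i\splus1}$ has a lexicographically larger profile, hence every $E$-edge from a state of $T^{rs}(v)$ into $\rzug{q',i\splus1}$ survives in $E'$ --- establishes the correct class-level reading, which is all the appendix uses (the skeleton representing $\Gdubprime$ up to equivalence classes). So: same approach as the paper, and you have correctly identified and worked around an imprecision that the paper's one-line proof passes over silently; the clean fix would be to add $q'\in\rho(q,\sigma_i)$ as a hypothesis on the tree side of the equivalence, after which your argument closes the proof.
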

\begin{proof}
This follows from Lemma \ref{Reduced_Split_Nodes} and Lemma \ref{Gprime_Captures_Profiles}.
\end{proof}

Finally, the \emph{skeleton} $T^{sp}$ is obtained by removing from the reduced
split tree all nodes that are finite. As a corollary of Lemma
\ref{Reduced_Split_Edges}, the skeleton is a representation of $\Gdubprime$.
The slice automaton of \kahler and Wilke proceeds by tracking the levels of
$T^{rs}$ and guessing which nodes occur in $T^{sp}$.  Each level $i$ of
$T^{rs}$ is encoded as a \emph{slice}, a sequence $\zug{P_0,\ldots, P_m}$ of
pairwise disjoint subsets of $Q$. This slice is precisely the sequence of
equivalence classes in level $i$ of \Gprime, indexed by their relative
lexicographic ordering (see Figure~\ref{Fig:GDubPrime}).  The automaton of
\kahler and Wilke differs from Definition~\ref{Slice_Def} only in the details
of labeling states and the cut-point construction.

\end{document}